\documentclass[runningheads]{article}
\usepackage[margin=2cm]{geometry}
\usepackage[T1]{fontenc}
\newcommand*{\email}[1]{\href{mailto:#1}{\nolinkurl{#1}} } 

\usepackage{graphicx}

\usepackage{booktabs} % For formal tables
\usepackage{float}
\usepackage{diagbox}
\usepackage[table,xcdraw,rgb]{xcolor}
\usepackage[normalem]{ulem}
\useunder{\uline}{\ul}{}

\usepackage{amssymb, amsfonts, amsmath}
\usepackage{amsthm}
\usepackage[nocomma]{optidef}

\usepackage{enumitem}

\usepackage{hyperref}
\usepackage{cleveref}

\usepackage{algorithm}
\usepackage{algpseudocode}

\usepackage{color}

\usepackage{tikz} 
\usetikzlibrary{arrows.meta}
\usetikzlibrary{calc}
\usetikzlibrary{positioning}
\usetikzlibrary{quotes}
\usetikzlibrary{snakes}

\usetikzlibrary{hobby,arrows,backgrounds,calc,trees}

\usepackage[dvipsnames,svgnames,x11names,]{xcolor}

\definecolor{my_green}{HTML}{99c04c}
\definecolor{my_red}{HTML}{ff8ea2}
\definecolor{my_blue}{HTML}{6c94b0}
\definecolor{my_grey}{HTML}{bdc3c7}

\newcommand{\degree}{\text{deg}}
\newcommand{\Eends}{\text{ends}}
\newcommand{\type}{\text{type}}

\makeatletter
\def\namedlabel#1#2{\begingroup
    #2
    \def\@currentlabel{#2}
    \phantomsection\label{#1}\endgroup
}
\makeatother

\newtheorem{observation}{Observation}
\newtheorem{claim}{Claim}
\newtheorem{theorem}{Theorem}[section]
\newtheorem{lemma}[theorem]{Lemma}

\begin{document}

\title{Online Algorithm for Fractional Matchings with Edge Arrivals in Graphs of Maximum Degree Three}

\author{Kanstantsin Pashkovich$^1$ \and
Thomas Snow$^2$}

\date{
    $^1$University of Waterloo, Waterloo ON N2L 3G1, Canada\\[1ex]%
    \email{kpashkovich@uwaterloo.ca}\\[1ex]%
    $^2$University of Toronto, Toronto ON M5S 1A1, Canada\\[1ex]%
    \email{tsnow@cs.toronto.edu}\\[2ex]%
}

\maketitle

\begin{abstract}
We study online algorithms for maximum cardinality matchings with edge arrivals in graphs of low degree. Buchbinder, Segev, and Tkach showed that no online algorithm for maximum cardinality fractional matchings can achieve a competitive ratio larger than $4/(9-\sqrt 5)\approx 0.5914$ even for graphs of maximum degree three. The negative result of Buchbinder et al. holds even when the graph is bipartite and edges are revealed according to vertex arrivals, i.e. once a vertex arrives, all edges are revealed that include the newly arrived vertex and one of the previously arrived vertices. In this work, we complement the negative result of Buchbinder et al. by providing an online algorithm for maximum cardinality fractional matchings with a competitive ratio at least $4/(9-\sqrt 5)\approx 0.5914$ for graphs of maximum degree three. We also demonstrate that no online algorithm for maximum cardinality integral matchings can have the competitive guarantee $0.5807$, establishing a gap between integral and fractional matchings for graphs of maximum degree three. Note that the work of Buchbinder et al. shows that for graphs of maximum degree two, there is no such gap between fractional and integral matchings, because for both of them the best achievable competitive ratio is $2/3$. Also, our results demonstrate that for graphs of maximum degree three best possible competitive ratios for fractional matchings are the same in the vertex arrival and in the edge arrival models.

\end{abstract}

\section{Introduction}

Matchings constitute an extensively studied area of mathematics and theoretical computer science with various practical applications.
Indeed, matchings arise in different areas of our everyday lives: job placements for students, assigning riders to drivers on a ride-sharing platform, providing advertisement spots, etc. Some of these matchings have an online nature since the edges or vertices in the underlying graph appear at certain timepoints and are available only in a specific time frame.

In this work, we study maximum cardinality matchings in an adversarial edge arrival model. In this model, at every timepoint a new edge arrives. In the integral matching case, upon the arrival of a new edge, we need to immediately and irrevocably decide whether to include this new edge in our current matching. In the fractional matching case, we need to irrevocably select a value for each new edge such that for every vertex, the sum of values on incident edges is always at most one. To make these decisions, we rely on online algorithms. To estimate the performance of an online algorithm, we select as a benchmark the cardinality of a maximum matching in the already "arrived" graph.

In this paper, we focus mainly on the adversarial edge arrival model in graphs of maximum degree three. We determine the best competitive ratio of online algorithms for fractional matchings in these graphs. To do this, we provide an online algorithm that achieves the guarantee $4/(9-\sqrt 5)\approx 0.5914$ on these graphs, where $4/(9-\sqrt 5)\approx 0.5914$ equals the corresponding upper bound obtained in~\cite{Buchbinder}. Due to the construction in~\cite{Buchbinder}, for graphs of maximum degree three, the best possible competitive ratio remains the same regardless of whether one considers general or bipartite graphs, and whether one considers the vertex arrival or edge arrival models. Additionally, we show that the guarantee $4/(9-\sqrt 5)\approx 0.5914$ cannot be achieved on graphs of maximum degree four; we also show that the above guarantee cannot be achieved for integral matchings in general graphs of maximum degree three.

In general, we know that the vertex arrival and edge arrival models lead to different competitive ratios for fractional matchings. Indeed, the results of~\cite{Wang} show that for general graphs, online algorithms can achieve a competitive ratio~$0.526$ in the vertex arrival model. The results of~\cite{Gamlath} show that no online algorithm can achieve a guarantee larger than~$0.5$ in the edge arrival model, even for bipartite graphs. Thus, at a certain value of the maximum degree, the best competitive ratio for the vertex arrival order is strictly larger than the competitive ratio for the edge arrival order; our work shows that this degree should be at least four.

\subsection{Our Results}

In our work, we focus on online algorithms for fractional matchings in the adversarial edge arrival model. \cite{Buchbinder} established a series of results for online matchings in the edge arrival model when the underlying graph has a bounded maximum degree. In particular, Buchbinder, Segev, and Tkach showed that no online algorithm for maximum cardinality fractional matchings can achieve a competitive ratio larger than $4/(9-\sqrt 5)\approx 0.5914$ even for graphs of maximum degree three. The negative result in~\cite{Buchbinder} holds even when the graph is a forest and edges are revealed according to vertex arrivals. In this work, we provide an online algorithm for maximum cardinality fractional matchings with a competitive ratio at least $4/(9-\sqrt 5)\approx 0.5914$ for graphs of maximum degree three, thus showing that $4/(9-\sqrt 5)\approx 0.5914$ is the best competitive ratio for graphs of degree three. So, for maximum cardinality fractional matchings, our result demonstrates that for graphs of maximum degree three, the competitive ratios are the same for the edge arrival and vertex arrival models. We also show that no online algorithm can achieve a guarantee larger than $0.5807$ for integral matchings in graphs of maximum degree three. Thus, unlike for graphs of maximum degree two~\cite{Buchbinder}, our results establish a gap between the best achievable competitive ratios for fractional and integral matchings in graphs of maximum degree three.

Next, we show that the guarantee of $4/(9-\sqrt 5)\approx 0.5914$ is not achievable in the graphs of maximum degree four. To show this, we provide an instance such that no online algorithm for maximum cardinality fractional matching can achieve a guarantee larger than $\approx 0.58884$ on it in the edge arrival model.

Another important contribution in~\cite{Buchbinder} is an elegant algorithm, so called MinIndex Algorithm. Buchbinder et al. show that MinIndex achieves the best possible guarantees both in the case of fractional and integral matchings when the maximum degree is at most two. We show that the guarantee achieved by the MinIndex algorithm is at most $5/9\approx 0.555$ in forests with maximum degree three. Our result improves on the upper bound $4/7\approx 0.571$ shown by Buchbinder et al. Note that Buchbinder et al. showed that the competitive ratio of the MinIndex algorithm equals $5/9$ for both integral and fractional matchings on forests, but their upper bound construction involves graphs with maximum degree four.

\subsection{Related Work}
The seminal paper~\cite{Karp-Paper} studied online matchings in the setting where the graph is bipartite and the vertices in one part appear over time. Each time a vertex appears, all of its incident edges are revealed, and one needs to make an irrevocable decision on which one of these edges to include in the matching, if any. \cite{Karp-Paper} provided a ranking algorithm that achieves the best possible competitive ratio of~$(1-1/e)$.

In the general adversarial edge arrival model, Gamlath et al. \cite{Gamlath} showed that no online algorithm has a competitive ratio larger than $1/2 + 1/(2d + 2)$ when the maximum degree is $d$, even on bipartite graphs. Thus, \cite{Gamlath} showed that no online algorithm can beat the greedy algorithm's competitive ratio $1/2$, even in bipartite graphs. These results hold for both fractional and integral matchings. 

In \cite{Buchbinder}, an algorithm with the competitive guarantee $2/3$ was provided for graphs with maximum degree two, which was shown to be optimal.   \cite{Buchbinder} showed that no online algorithm can achieve a competitive ratio larger than $4/(9-\sqrt 5)\approx 0.5914$ even on forests with maximum degree three in the vertex arrival model.  Further upper bounds were obtained  in~\cite{Epstein}, \cite{Huang19}.

In the edge arrival model for bipartite graphs, where all edges appear in $s$ batches~\cite{Lee} developed an algorithm with a guarantee $1/2+1/(2^{s+2}-2)$ for both integral and fractional matchings, where $s$ is the number of batches. For $s=2$, the competitive ratio becomes $2/3$, and it is also optimal. \cite{Guru} developed an online algorithm with a competitive ratio larger than the competitive ratio of the greedy algorithm 
for bipartite graphs and random uniform edge arrival orders. Online stochastic matchings with oblivious adversarial edge arrival order in bipartite graphs were studied in~\cite{Gravin}. In~\cite{Gravin}, an algorithm was developed that achieves a guarantee of $0.503$ in the above stochastic model, and they complement this result with an upper bound of $2/3$ on any achievable guarantee.

There was an extensive study of the edge arrival models under the assumption of free edge disposal, i.e. an already selected edge can be disposed of at later timepoints. For the weighted version of the problem, a deterministic algorithm with guarantee $1/(3+2\sqrt 2)$ was provided in~\cite{McGregor}; moreover, this guarantee was shown to be optimal among deterministic algorithms~\cite{Varadaraja}. Later, \cite{Epstein} provided a randomized algorithm with a guarantee of $0.1867$ for this model. There was further progress on upper bounds for possible guarantees of randomized algorithms in this model, see~\cite{Epstein}, \cite{Huang}.

The degree of the underlying graph was also studied in the context of online matching algorithms for rounding fractional matchings~\cite{Cohen}, \cite{Wajc}, \cite{Blikstad}.

For a comprehensive overview of results on online matchings, we refer the readers to the surveys~\cite{Mehta}, \cite{Devanur}, and to a recent survey~\cite{Huang}.

\subsection{Our Techniques}

Our online algorithm and its analysis demonstrate that the upper bound $4/(9-\sqrt 5)\approx 0.5914$ from~\cite{Buchbinder} is the best possible for graphs with maximum degree three. Our online algorithm is inspired by the construction from~\cite{Buchbinder}. Indeed, to obtain their upper bound, \cite{Buchbinder} construct instances such that every online algorithm with the guarantee $4/(9-\sqrt 5)\approx 0.5914$ on them should maintain a certain fractional matching. We refer to these instances as "consistent instances". We use the structure of the fractional matchings from~\cite{Buchbinder} on consistent instances as "building blocks" in our algorithm. We partition the edges from the consistent instances into two types of edges "path edges" and "spokes". Our algorithm attempts to greedily construct consistent instances from arriving edges, identifying some of the arrived edges as path edges and some as spokes. Naturally, the algorithm is not able to group all edges into consistent instances, and thus, we identify the remaining edges as "bridges".

For the path edges and spokes, our algorithm attempts to keep their values close to the values as in fractional matchings from~\cite{Buchbinder}. For the bridges, we need to consider several cases to carefully assign the value of the resulting fractional matching. 

Our algorithm keeps both a primal solution and a dual solution, i.e. it keeps both the values of a fractional matching and the values of a fractional vertex cover. The values of the fractional cover are used mainly for the analysis, and with the exception of spokes, the assignment of values for the fractional matching does not rely on them.

\section{Our Algorithm}
In this section, we first provide the intuition behind our algorithm that comes from the upper bound construction in~\cite{Buchbinder}. Afterwards, we state our algorithm in full detail and provide all necessary notions.

\subsection{Consistent Instances}\label{sec: consistent instances}

First, let us introduce a particular hard instance for graphs of maximum degree three which was constructed in~\cite{Buchbinder}.
    A \emph{consistent} instance with $n$ rounds contains the following edges
    \begin{enumerate}
        \item \label{consistent: first path edge} $e_1 = v^l_1 v^r_1$.
        \item \label{consistent: other path edges} $e^l_i = v^l_{i - 1} v^l_i$ and $e^r_i = v^r_{i - 1} v^r_i$ for $i = 2, \dots, n$.
        \item \label{consistent: spokes} $\hat{e}^l_i = v^l_i \hat{v}^l_i$ and $\hat{e}^r_i = v^r_i \hat{v}^r_i$ for $i = 1, \dots, n - 2$.
    \end{enumerate}
    Here, first the edge $e_1$ arrives, then with each further $i=2, \dots, n$ the edges $e^l_i$ and $e^r_i$ arrive. After that, the edges $\hat{e}^l_i$ and $\hat{e}^r_i$ arrive for $i = 1, \dots, n - 2$. See Figure~\ref{graph: paper instance n=4} for an example of a consistent instance with $n=4$.

    Buchbinder et al. show that any online algorithm for the fractional matching problem achieves the guarantee at most $c:=4/(9-\sqrt 5)\approx 0.5914$ on the consistent instances defined above. Since the consistent instances correspond to bipartite graphs, the same upper bound $c=4/(9-\sqrt 5)\approx 0.5914$ holds for the guarantee of online algorithms for integral matchings.

\begin{figure}
    \centering
    \begin{tikzpicture}[node distance={16mm}, main/.style = {draw, circle, thin}, line width=2pt] 
        \node[main] (1l) [text=black] {$v^l_1$}; 
        \node[main] (1r) [right of=1l] {$v^r_1$};
        \node[main] (2r) [right of=1r] {$v^r_2$}; 
        \node[main] (3r) [right of=2r] {$v^r_3$};
        \node[main] (4r) [right of=3r] {$v^r_4$};
        \node[main] (2l) [left of=1l] {$v^l_2$}; 
        \node[main] (3l) [left of=2l] {$v^l_3$};
        \node[main] (4l) [left of=3l] {$v^l_4$};
        
        \node[main] (1ls) [above of=1l] {$\hat{v}^l_1$};
        \node[main] (1rs) [above of=1r] {$\hat{v}^r_1$};
        \node[main] (2ls) [above of=2l] {$\hat{v}^l_2$};
        \node[main] (2rs) [above of=2r] {$\hat{v}^r_2$};
        
        \draw[my_green] (1l) -- node[midway, below, pos=0.5,text=black] {$e_1$} (1r);
        \draw[my_green] (1l) -- node[midway, below, pos=0.5,text=black] {$e^l_2$} (2l);
        \draw[my_green] (1r) -- node[midway, below, pos=0.5,text=black] {$e^r_2$} (2r);
        \draw[my_green] (2l) -- node[midway, below, pos=0.5,text=black] {$e^l_3$} (3l);
        \draw[my_green] (2r) -- node[midway, below, pos=0.5,text=black] {$e^r_3$} (3r);
        \draw[my_green] (3l) -- node[midway, below, pos=0.5,text=black] {$e^l_4$} (4l);
        \draw[my_green] (3r) -- node[midway, below, pos=0.5,text=black] {$e^r_4$} (4r);
        
        \draw[my_red, decorate,decoration=snake] (1l) -- node[midway, left, pos=0.5,text=black] {$\hat{e}^l_1$} (1ls);
        \draw[my_red, decorate,decoration=snake] (1r) -- node[midway, right, pos=0.5,text=black] {$\hat{e}^r_1$} (1rs);
        \draw[my_red, decorate,decoration=snake] (2l) -- node[midway, left, pos=0.5,text=black] {$\hat{e}^l_2$} (2ls);
        \draw[my_red, decorate,decoration=snake] (2r) -- node[midway, right, pos=0.5,text=black] {$\hat{e}^r_2$} (2rs);
    \end{tikzpicture}
    \caption{Example of a consistent instance with $n = 4$ rounds.} 
    \label{graph: paper instance n=4}
\end{figure}
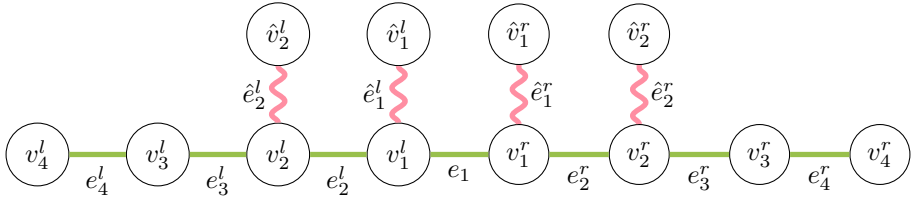

\subsection{Edge Types}
\label{sec: edge types}
Our algorithm tries to greedily construct consistent instances from the arriving edges. To do that systematically, we define three types of edges, so each edge is assigned one of these types upon its arrival:
\begin{itemize}
    \item type 1, \emph{path edges}.
    \item type 2, \emph{spokes}.
    \item type 3, \emph{bridges}.
\end{itemize}
Intuitively, the \emph{path edges} are the edges associated with~\ref{consistent: first path edge} and~\ref{consistent: other path edges} in the definition of consistent instances, while  \emph{spokes} are the edges associated with~\ref{consistent: spokes} in the definition of consistent instances; see Figure~\ref{graph: paper instance n=4}. Roughly speaking, \emph{bridges} are the edges that run between two different consistent instances that our algorithm constructed so far. We would like to note that bridges and spokes are more tricky objects for our algorithm than the above intuition may suggest.

To keep the exposition concise, we define the function $\type (\cdot)$. For each subset of edges $A\subseteq E$, the value  $\type (A)$ equals $(t_1, t_2, t_3)$, where $t_1$, $t_2,$ and $t_3$ equal the number of edges in $A$ of type 1, type 2, and type 3, respectively.

The most challenging case for us is to identify bridges and to assign them appropriate values. For these purposes, our algorithm is looking for special combinations of types when an edge $e=uv$ arrives. In particular our algorithms relies on the following set \[\left\{\type \left(\delta(u) \setminus \{e\}\right),\, \type \left(\delta(v) \setminus \{e\}\right)\right\}\] in the current graph $G$ after the edge $e$ arrived. If the above set lies in 
    \[\mathcal{B} := \{\{(1,0,0)\}, \{(1,0,0), (0,1,0)\}, \{(1,1,0), (1,0,0)\}, \{(0,2,0), (1,0,0)\}\}\]
then an arrived edge $e$ is considered to be a bridge by our algorithm.

For the sake of exposition, we also define the function $\Eends(\cdot)$. For each edge $e=uv\in E$, we have $\Eends(uv) := \{u,v\}$.

\subsection{Determining Types for the Arriving Edge}

Note that our algorithm assumes that the graph has maximum degree at most three at every timepoint. Let us describe how we assign the type to a newly arrived edge $e=uv$. Here, we work with the graph $G$ that refers to the graph after the arrival of the edge $e$. So, $\delta(u)$ stands for the edges incident to the vertex $u$ in $G$, and $\deg(u)$ stands for the degree of $u$ in $G$, etc. For each $f$ in $E\setminus\{e\}$, the value $y_f$ represents the value assigned to the edge $f$ in the fractional matching constructed before the arrival of $e$.

 Table~\ref{tab: edge-partitions} illustrates $\type(e)$ for the arriving edge $e=uv$ assigned by our algorithm. As follows from the table, the type of $e=uv$ depends on $\type\left(\delta(u) \setminus \{e\}\right)$ and $\type(\delta(v) \setminus \{e\})$. 
 
 There are several special cases. In Table~\ref{tab: edge-partitions} these cases are represented by enclosing the type of $e=uv$ in a box. In these special cases, both $u$ and $v$ have degree $3$ in the graph $G$, so without loss of generality, in these cases, we assume 
\[1 - \sum_{f \in \delta(u) \setminus \{e\}} y_f \leq 1 - \sum_{f \in \delta(v) \setminus \{e\}} y_f\,.\]

The empty cells in Table~\ref{tab: edge-partitions} correspond to impossible combinations of \\
$\type\left(\delta(u) \setminus \{e\}\right)$ and $\type(\delta(v) \setminus \{e\})$.

\begin{table}[H]
\centering
\caption{Type assignment for the newly arrived edge $e=uv$.
}
\setlength\extrarowheight{4pt}
\resizebox{\textwidth}{!}{
\begin{tabular}{|c|c|c|c|c|c|c|c|c|c|c|}
\hline
 \diagbox{$\type(\delta(u)\setminus \{e\})$}{$\type(\delta(v)\setminus \{e\})$} &
  \multicolumn{1}{c|}{(0,0,0)} &
  \multicolumn{1}{c|}{(1,0,0)} &
  \multicolumn{1}{c|}{(0,1,0)} &
  \multicolumn{1}{c|}{(0,0,1)} &
  \multicolumn{1}{c|}{(1,1,0)} &
  \multicolumn{1}{c|}{(1,0,1)} &
  \multicolumn{1}{c|}{(0,1,1)} &
  \multicolumn{1}{c|}{(2,0,0)} &
  \multicolumn{1}{c|}{(0,2,0)} &
  (0,0,2) \\[3pt] \hline
(0,0,0) &
  \cellcolor[HTML]{b9d485}1 &
  \cellcolor[HTML]{b9d485}1 &
  \cellcolor[HTML]{b9d485}1 &
  \cellcolor[HTML]{cbcbcb} &
  \cellcolor[HTML]{b9d485}1 &
  \cellcolor[HTML]{FFC1CC}2 &
  \cellcolor[HTML]{FFC1CC}2 &
  \cellcolor[HTML]{FFC1CC}2 &
  \cellcolor[HTML]{b9d485}1 &
  \cellcolor[HTML]{cbcbcb} \\[3pt] \hline
(1,0,0) &
  \cellcolor[HTML]{b9d485}1 &
  \cellcolor[HTML]{8dacc2}3 &
  \cellcolor[HTML]{8dacc2}3 &
  \cellcolor[HTML]{cbcbcb} &
  \cellcolor[HTML]{8dacc2}3 &
  \cellcolor[HTML]{FFC1CC}2 &
  \cellcolor[HTML]{FFC1CC}2 &
  \cellcolor[HTML]{FFC1CC}2 &
  \cellcolor[HTML]{8dacc2}3 &
  \cellcolor[HTML]{cbcbcb} \\[3pt] \hline
(0,1,0) &
  \cellcolor[HTML]{b9d485}1 &
  \cellcolor[HTML]{8dacc2}3 &
  \cellcolor[HTML]{b9d485}1 &
  \cellcolor[HTML]{cbcbcb} &
  \cellcolor[HTML]{b9d485}1 &
  \cellcolor[HTML]{FFC1CC}2 &
  \cellcolor[HTML]{FFC1CC}2 &
  \cellcolor[HTML]{FFC1CC}2 &
  \cellcolor[HTML]{b9d485}1 &
  \cellcolor[HTML]{cbcbcb} \\[3pt] \hline
(0,0,1) &
  \cellcolor[HTML]{cbcbcb} &
  \cellcolor[HTML]{cbcbcb} &
  \cellcolor[HTML]{cbcbcb} &
  \cellcolor[HTML]{cbcbcb} &
  \cellcolor[HTML]{cbcbcb} &
  \cellcolor[HTML]{cbcbcb} &
  \cellcolor[HTML]{cbcbcb} &
  \cellcolor[HTML]{cbcbcb} &
  \cellcolor[HTML]{cbcbcb} &
  \cellcolor[HTML]{cbcbcb} \\[3pt] \hline
(1,1,0) &
  \cellcolor[HTML]{b9d485}1 &
  \cellcolor[HTML]{8dacc2}3 &
  \cellcolor[HTML]{b9d485}1 &
  \cellcolor[HTML]{cbcbcb} &
  \cellcolor[HTML]{b9d485}1 &
  \cellcolor[HTML]{b9d485}{\color[HTML]{333333} \fbox{ \textbf{1}}} &
  \cellcolor[HTML]{b9d485}{\color[HTML]{333333} \fbox{ \textbf{1}}} &
  \cellcolor[HTML]{b9d485}{\color[HTML]{333333} \fbox{ \textbf{1}}} &
  \cellcolor[HTML]{b9d485}1 &
  \cellcolor[HTML]{cbcbcb} \\[3pt] \hline
(1,0,1) &
  \cellcolor[HTML]{FFC1CC}2 &
  \cellcolor[HTML]{FFC1CC}2 &
  \cellcolor[HTML]{FFC1CC}2 &
  \cellcolor[HTML]{cbcbcb} &
  \cellcolor[HTML]{FFC1CC}{\color[HTML]{333333} \fbox{ \textbf{2}}} &
  \cellcolor[HTML]{FFC1CC}2 &
  \cellcolor[HTML]{FFC1CC}2 &
  \cellcolor[HTML]{FFC1CC}2 &
  \cellcolor[HTML]{FFC1CC}{\color[HTML]{333333} \fbox{ \textbf{2}}} &
  \cellcolor[HTML]{cbcbcb} \\[3pt] \hline
(0,1,1) &
  \cellcolor[HTML]{FFC1CC}2 &
  \cellcolor[HTML]{FFC1CC}2 &
  \cellcolor[HTML]{FFC1CC}2 &
  \cellcolor[HTML]{cbcbcb} &
  \cellcolor[HTML]{FFC1CC}{\color[HTML]{333333} \fbox{ \textbf{2}}} &
  \cellcolor[HTML]{FFC1CC}2 &
  \cellcolor[HTML]{FFC1CC}2 &
  \cellcolor[HTML]{FFC1CC}2 &
  \cellcolor[HTML]{FFC1CC}{\color[HTML]{333333} \fbox{ \textbf{2}}} &
  \cellcolor[HTML]{cbcbcb} \\[3pt] \hline
(2,0,0) &
  \cellcolor[HTML]{FFC1CC}2 &
  \cellcolor[HTML]{FFC1CC}2 &
  \cellcolor[HTML]{FFC1CC}2 &
  \cellcolor[HTML]{cbcbcb} &
  \cellcolor[HTML]{FFC1CC}{\color[HTML]{333333} \fbox{ \textbf{2}}} &
  \cellcolor[HTML]{FFC1CC}2 &
  \cellcolor[HTML]{FFC1CC}2 &
  \cellcolor[HTML]{FFC1CC}2 &
  \cellcolor[HTML]{FFC1CC}{\color[HTML]{333333} \fbox{ \textbf{2}}} &
  \cellcolor[HTML]{cbcbcb} \\[3pt] \hline
(0,2,0) &
  \cellcolor[HTML]{b9d485}1 &
  \cellcolor[HTML]{8dacc2}3 &
  \cellcolor[HTML]{b9d485}1 &
  \cellcolor[HTML]{cbcbcb} &
  \cellcolor[HTML]{b9d485}{\color[HTML]{333333} 1} &
  \cellcolor[HTML]{b9d485}{\color[HTML]{333333} \fbox{ \textbf{1}}} &
  \cellcolor[HTML]{b9d485}{\color[HTML]{333333} \fbox{ \textbf{1}}} &
  \cellcolor[HTML]{b9d485}{\color[HTML]{333333} \fbox{ \textbf{1}}} &
  \cellcolor[HTML]{b9d485}{\color[HTML]{333333} 1} &
  \cellcolor[HTML]{cbcbcb} \\[3pt] \hline
(0,0,2) &
  \cellcolor[HTML]{cbcbcb} &
  \cellcolor[HTML]{cbcbcb} &
  \cellcolor[HTML]{cbcbcb} &
  \cellcolor[HTML]{cbcbcb} &
  \cellcolor[HTML]{cbcbcb} &
  \cellcolor[HTML]{cbcbcb} &
  \cellcolor[HTML]{cbcbcb} &
  \cellcolor[HTML]{cbcbcb} &
  \cellcolor[HTML]{cbcbcb} &
  \cellcolor[HTML]{cbcbcb} \\[3pt] \hline
\end{tabular}
}
\label{tab: edge-partitions}
\end{table}

\subsection{Fractional Matching in Consistent Instances}\label{sec:fractional matching in consistent instances}
Now that we have provided an idea of how our algorithm assigns a type to the arriving edge, let us provide a general idea of how we intend to construct a fractional matching.

The upper bound proof by Buchbinder et al. showed that for an online algorithm to achieve a competitive ratio on consistent instances defined in Section~\ref{sec: consistent instances}, the algorithm has to output a very specific fractional matching on these instances.

Let us define the values that should appear in the resulting fractional matching as per~\cite{Buchbinder}. For this, let us define the following values
\[
\widetilde{y}_1 := c,\, \widetilde{y}_2 := \frac{c}{2}\quad \text{and}\quad\widetilde{y}_3 := \frac{5c - 2}{2}
\]
and for natural $n$, $n\geq 4$ let us define
\[
\widetilde{y}_n := \frac{(3F_n + F_{n-2} -2)c - 2F_n + 2}{2}\,,
\]
where $\phi := \frac{1 + \sqrt{5}}{2}$, $\psi := 1 - \phi$ and $F_n := \frac{\phi^n - \psi^n}{\sqrt{5}}$. So $\phi$ is the golden ratio, and $F_n$ is the $n$-th Fibonacci number. We note the following useful property, the proof of which can be found in Appendix~\ref{appendix: fractional matching in consistent instances}, property \eqref{lma 1.7} of Lemma~\ref{lma fractional matching consistent instances}, for all natural $n$ we have $1 - \widetilde{y}_n - \widetilde{y}_{n+1} = c - \widetilde{y}_{n+2}$. In particular, we use $1 - \widetilde{y}_n - \widetilde{y}_{n+1}$ and $c - \widetilde{y}_{n+2}$ interchangeably.

The proof for the upper bound in~\cite{Buchbinder} showed that for the algorithm to achieve a competitive ratio $c$ on consistent instances, the algorithm needs to assign the following edge values (subject to symmetry breaking), see Section~\ref{sec: consistent instances}:
\begin{enumerate}
        \item  $e_1 = v^l_1 v^r_1$ has to be assigned $\widetilde{y}_1$.
        \item  $e^l_i = v^l_{i - 1} v^l_i$ and $e^r_i = v^r_{i - 1} v^r_i$ for $i = 2, \dots, n$ have to be assigned $\widetilde{y}_n$.
        \item $\hat{e}^l_i = v^l_i \hat{v}^l_i$ and $\hat{e}^r_i = v^r_i \hat{v}^r_i$ for $i = 1, \dots, n - 2$ have to be assigned $1-\widetilde{y}_n-\widetilde{y}_{n+1}$.
\end{enumerate}
Our algorithm tries to follow these value assignments on path edges and spokes, but the presence of bridges requires us to select more nuanced assignments even on path edges and spokes. In particular, we make more careful value assignments for edges that are bridges or are incident to a bridge upon their arrival.

\subsection{Algorithm}

We defer the formal definition of Algorithm~\ref{alg: one} to Appendix~\ref{appendix: algorithm and analysis}, which relies heavily on the primal-dual methodology.

Without loss of generality, we assume that for the arriving edge $e =uv$ that $\deg(u)\geq \deg(v)$.
Algorithm~\ref{alg: one} is a primal-dual algorithm. The algorithm produces a fractional matching by assigning each arrived edge $e\in E$ a nonnegative value $y_e$ such that at every timepoint and for every vertex $w$ we have $\sum_{f\in \delta(w)}y_f$ is at most $1$. Moreover, the algorithm produces a fractional vertex cover by keeping and updating nonnegative values $x_w$ for each vertex $w\in V$. At every timepoint and for every arrived edge $e=uv$, we have that $x_u+x_v$ is at least $c$. Thus, at every timepoint, the values $x_w$, $w\in V$ scaled by $1/c$ produce a fractional vertex cover for the graph $G$. 

\subsubsection{Value Assignments in Consistent Instances}

Thus, to prove that $c$ is indeed the guarantee of our online algorithm, it suffices to prove that at every timepoint the sum of all $y_e$, $e\in E$ equals the sum of all $x_w$, $w\in V$. Let us provide an intuition about how the algorithm preserves this equality by updating $x_w$, $w\in V$. Figure~\ref{fig: cover construction} demonstrates the increases in the values $x_w$, $w\in \left\{v^l_{i}, v^l_{i-1}, v^l_{i+1}, \hat v^l_{i}\right\}$ after the arrivals of their incident edges in the consistent instances from Section~\ref{sec: consistent instances}. For example, after the edge $v^l_{i}v^l_{i+1}$ arrives, the edge $v^l_{i}v^l_{i+1}$ gets value $\widetilde{y}_{i}$ and the values $x_w$, $w\in \left\{v^l_{i},  v^l_{i+1}\right\}$ are increased by $\widetilde{y}_{i}-(c-\widetilde{y}_{i+1})$ and $c-\widetilde{y}_{i+1}$, respectively. We would like to emphasize that Figure~\ref{fig: cover construction} depicts an ideal situation for constructing the fractional matchings and updating the fractional vertex cover. Our algorithm attempts to mimic this ideal behavior upon the arrival of path edges and spokes.

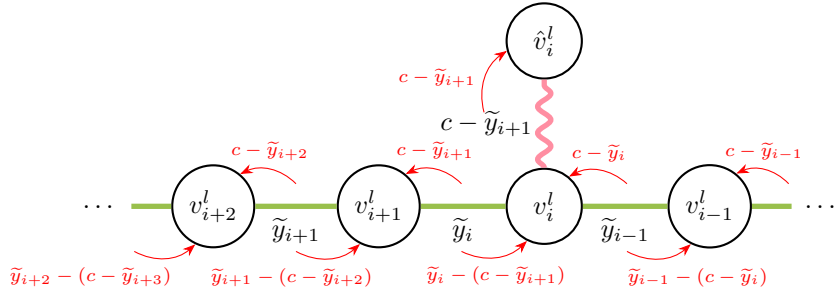
\begin{figure}
    \centering
    \begin{tikzpicture}[node distance={22mm}, main/.style = {draw, circle}] 
        \begin{scope}[every node/.style={circle, thick, draw, minimum size=10mm}, scale=.7]
            \node[main] (1l) {$v^l_{i}$}; 
            \node[main] (1r) [right of=1l] {$v^l_{i-1}$};
            \node[main] (2l) [left of=1l] {$v^l_{i+1}$}; 
            \node[main] (3l) [left of=2l] {$v^l_{i+2}$};
            \node[main] (1ls) [above of=1l] {$\hat v^l_{i}$};
        \end{scope}

        \begin{scope}[node distance={15mm}, every node/.style={fill=white,circle}, line width=2pt]
            \node (4l) [left of=3l] {$\dots$};
            \node (2r) [right of=1r] {$\dots$};

            \draw[my_green] (3l) -- (4l);
            \draw[my_green] (1r) -- (2r);
        \end{scope}

        \begin{scope}[>={Stealth[black]},
              every edge/.style={draw=black}, line width=2pt]
            \draw[my_green] (1l) -- node[midway, below, pos=0.5, text=black] {$\widetilde{y}_{i-1}$} (1r);
            \draw[my_green] (1l) -- node[midway, below, pos=0.5, text=black] {$\widetilde{y}_{i}$} (2l);
            \draw[my_green] (2l) -- node[midway, below, pos=0.5, text=black] {$\widetilde{y}_{i+1}$} (3l);
            
            \draw[my_red, decorate,decoration=snake] (1l) -- node[midway, left, pos=0.5, text=black] {$c - \widetilde{y}_{i+1}$} (1ls);

            % below label coords
            \coordinate[yshift=-6mm] (e1b) at ($(1l)!0.5!(1r)$);
            \coordinate[yshift=-6mm] (e2lb) at ($(1l)!0.5!(2l)$);
            \coordinate[yshift=-6mm] (e3lb) at ($(2l)!0.5!(3l)$);

            \coordinate[yshift=-6mm] (e4lb) at ($(3l)!0.7!(4l)$);

            %above label coords
            \coordinate[yshift=3mm] (e1u) at ($(1l)!0.5!(1r)$);
            \coordinate[yshift=3mm] (e2lu) at ($(1l)!0.5!(2l)$);
            \coordinate[yshift=3mm] (e3lu) at ($(2l)!0.5!(3l)$);

            \coordinate[yshift=3mm] (e0u) at ($(1r)!0.7!(2r)$);

            \coordinate[left=8mm of 1l] (e1lhat) at ($(1l)!0.5!(1ls)$);

        \end{scope}

        %dual construction
        \begin{scope}[>={Stealth[red]},
              every edge/.style={draw=red}]
  
            \draw[red] [->] (e1b.east) to ["$\widetilde{y}_{i-1} - (c - \widetilde{y}_i)$" {below, font=\fontsize{7}{0}\selectfont, shift={(0.5cm, 0)}}, bend right=35] (1r);
            \draw[red] [->] (e1u.west) to ["$c - \widetilde{y}_i$" {above, font=\fontsize{7}{0}\selectfont}, bend right=35] (1l);

            %left path edges dual updates
            \draw[red] [->] (e2lb.east) to ["$\widetilde{y}_{i} - (c - \widetilde{y}_{i+1})$" {below, font=\fontsize{7}{0}\selectfont}, bend right=35] (1l);
            \draw[red] [->] (e2lu.west) to ["$c - \widetilde{y}_{i+1}$" {above, font=\fontsize{7}{0}\selectfont}, bend right=35] (2l);

            \draw[red] [->] (e3lb.east) to ["$\widetilde{y}_{i+1} - (c - \widetilde{y}_{i+2})$" {below, font=\fontsize{7}{0}\selectfont, shift={(-0.5cm, 0)}}, bend right=35] (2l);
            \draw[red] [->] (e3lu.west) to ["$c - \widetilde{y}_{i+2}$" {above, font=\fontsize{7}{0}\selectfont}, bend right=35] (3l);

            \draw[red] [->] (e4lb.east) to ["$\widetilde{y}_{i+2} - (c - \widetilde{y}_{i+3})$" {below, font=\fontsize{7}{0}\selectfont, shift={(-1cm, 0)}}, bend right=35] (3l);

            \draw[red] [->] (e0u.west) to ["$c - \widetilde{y}_{i-1}$" {above, font=\fontsize{7}{0}\selectfont}, bend right=35] (1r);

            %left spokes
            \draw[red] [->] (e1lhat.north) +(0, 1.5mm) to ["$c - \widetilde{y}_{i+1}$" {left, font=\fontsize{7}{0}\selectfont}, bend left=35] (1ls);
        \end{scope}

    \end{tikzpicture}
    \caption{The construction of the fractional matching and increases in the fractional vertex cover, when Algorithm~\ref{alg: one} has a consistent instance as an input.}
    \label{fig: cover construction}
\end{figure}

\subsubsection{Position Indicators and Endpoints' distinction in Consistent Instances} \label{sec: position indicators}

Even in the ideal situation depicted in Figure~\ref{fig: cover construction}, to assign values to path edges, Algorithm~\ref{alg: one} relies on determining the exact position of such edges in the consistent instance. To make sure that Algorithm~\ref{alg: one} has access to these positions, we keep a position indicator $n_f$ for each path edge $f$ in the graph. For example, in a consistent instance when $f=v^l_{i}v^l_{i+1}$ we have $n_f=i+1$; and when $f=v^l_{1}v^r_{i}$ we have $n_f=1$.

Already in the consistent instances, the endpoints of the arriving edges could have different properties. For example, if an arriving edge is identified as a path edge in a consistent instance, then one of its endpoints has degree one and the other endpoint has degree two (immediately after this arrival). To keep track of these different properties, for some edges $e$, the algorithm identifies one of the endpoints as $z(e)$ and another endpoint as $w(e)$. In particular, this is crucial for the analysis of path edges and spokes. For example, in Figure~\ref{fig: cover construction} for the path edge $f=v^l_iv^l_{i+1}$ we have $z(f)=v^l_i$ and $w(f)=v^l_{i+1}$, similarly for the path edge $f=v^l_{i+1}v^l_{i+2}$ we have $z(f)=v^l_{i+1}$ and $w(f)=v^l_{i+2}$. For the spoke $f=v^l_i\hat{v}^l_i$ in Figure~\ref{fig: cover construction}, we have $z(f)=v^l_i$ and $w(f)=\hat{v}^l_i$.

\subsubsection{Partition into Consistent Instances and Bridges}

With the cover construction depicted in Figure~\ref{fig: cover construction} in mind, Algorithm~\ref{alg: one} utilizes this construction by loosely partitioning the arriving edges into subgraphs of the consistent instance, with some exceptions; namely, bridges. The purpose of bridges is to connect, not necessarily distinct, partitions. To see how the fractional cover changes, consider, for example, the updates in the fractional cover upon the arrival of a path edge in lines~\ref{line: x path edge z} and \ref{line: x path edge w} of Algorithm~\ref{alg: one}. These updates are identical to the ones depicted in~\ref{fig: cover construction} except that in certain cases we cannot use the value~$\widetilde{y}_{n_e}$ from the ideal case depicted in Figure~\ref{fig: cover construction} but we have to use the actually assigned value~$y_e$. In a similar way, we can see the updates in the fractional cover upon the arrival of a spoke, see line~\ref{line:  spoke cover change} of Algorithm~\ref{alg: one}.

Recall from Section~\ref{sec: edge types} that an edge $e=uv$ is assigned to be a bridge if and only if $\{\delta(u)\setminus\{e\}, \, \delta(v)\setminus\{e\}\}$ is in the set $\mathcal B$.
Consider Figure~\ref{fig: potential edge partitions}, where we assume that all edges but $b_1$, $b_2$, $b_3$, $b_4$, $b_5$, $s_1$, $s_2$
 arrive first in some specific order, and then the edges $b_1$, $b_2$, $b_3$, $b_4$, $b_5$ arrive, and then $s_1$, $s_2$ arrive. Now, before the arrival of $b_5$, both of the endpoints of $b_5$ are incident only to one path edge each. Thus since $\{(1,0,0)\}$ is in $\mathcal B$, the edge $b_5$ is assigned to be a bridge.

Even though it is helpful to think about Algorithm~\ref{alg: one} as an algorithm partitioning the arrived path edges and spokes into consistent instances, this intuition does not always extend to spokes. In particular, some of the edges are assigned to be a spoke even though they do not "naturally fit" in any consistent instance. For example, before $s_1$ arrives, one of its endpoints is incident to a bridge $b_1$ and a path edge, while the other endpoint is incident to no edges. Due to the presence of the bridge $b_1$, the edge $s_1$ is not assigned to be a path edge but to be a spoke by Algorithm~\ref{alg: one}. Generally, an edge becomes such a spoke when, after its arrival, precisely one of its endpoints has degree $3$ and is incident to a bridge.

%Partitions
\begin{figure}
    \centering
    \begin{tikzpicture}[node distance={13mm}, main/.style = {draw, circle, thin}, line width=2pt, glow/.style={%
    preaction={draw,line cap=round,line join=round,
    opacity=0.4,line width=10pt,#1}},glow/.default=yellow,
    transparency group, yscale=.8] 
        %partition 1
        \node[main] (v1_p1) {}; 
        \node at (v1_p1) [name=fake_v1_p1,outer sep=5pt,inner sep=5pt]{};
        \node[main] (v2_p1) [above of=v1_p1] {};
        \node at (v2_p1) [name=fake_v2_p1,outer sep=5pt,inner sep=5pt]{};
        \node[main] (v3_p1) [above of=v2_p1] {};
        \node at (v3_p1) [name=fake_v3_p1,outer sep=5pt,inner sep=5pt]{};
        \node[main] (v4_p1) [above of=v3_p1] {};
        \node at (v4_p1) [name=fake_v4_p1,outer sep=5pt,inner sep=5pt]{};
        \node[main] (v5_p1) [above of=v4_p1] {};
        \node at (v5_p1) [name=fake_v5_p1,outer sep=5pt,inner sep=5pt]{};
        \node[main] (v6_p1) [above of=v5_p1] {};
        \node at (v6_p1) [name=fake_v6_p1,outer sep=5pt,inner sep=5pt]{};

        \node[main] (v7_p1) [right of=v3_p1] {};
        \node at (v7_p1) [name=fake_v7_p1,outer sep=5pt,inner sep=5pt]{};
        \node[main] (v8_p1) [right of=v4_p1] {};
        \node at (v8_p1) [name=fake_v8_p1,outer sep=5pt,inner sep=5pt]{};
            
        \draw[my_green] (v1_p1) -- node[midway, below, pos=0.5] {} (v2_p1);
        \draw[my_green] (v2_p1) -- node[midway, below, pos=0.5] {} (v3_p1);
        \draw[my_green] (v3_p1) -- node[midway, below, pos=0.5] {} (v4_p1);
        \draw[my_green] (v4_p1) -- node[midway, below, pos=0.5] {} (v5_p1);
        \draw[my_green] (v5_p1) -- node[midway, below, pos=0.5] {} (v6_p1);

        \draw[my_red, decorate,decoration=snake] (v7_p1) -- node[midway, below, pos=0.5] {} (v3_p1);
        \draw[my_red, decorate,decoration=snake] (v8_p1) -- node[midway, below, pos=0.5] {} (v4_p1);

        %partition 2
        \node[main, draw=none] (temp_p2) [right of=v1_p1] {}; 
        \node[main] (v1_p2) [right of=temp_p2] {};
        \node at (v1_p2) [name=fake_v1_p2,outer sep=5pt,inner sep=5pt]{};
        \node[main] (v2_p2) [right of=v1_p2] {};
        \node at (v2_p2) [name=fake_v2_p2,outer sep=5pt,inner sep=5pt]{};
        \node[main] (v3_p2) [right of=v2_p2] {};
        \node at (v3_p2) [name=fake_v3_p2,outer sep=5pt,inner sep=5pt]{};
        \node[main] (v4_p2) [right of=v3_p2] {};
        \node at (v4_p2) [name=fake_v4_p2,outer sep=5pt,inner sep=5pt]{};
        \node[main] (v5_p2) [right of=v4_p2] {};
        \node at (v5_p2) [name=fake_v5_p2,outer sep=5pt,inner sep=5pt]{};
        \node[main] (v7_p2) [right of=v5_p2] {};
        \node at (v7_p2) [name=fake_v7_p2,outer sep=5pt,inner sep=5pt]{};
        \node[main] (v8_p2) [right of=v7_p2] {};
        \node at (v8_p2) [name=fake_v8_p2,outer sep=5pt,inner sep=5pt]{};

        \node[main] (v6_p2) [above of=v3_p2] {};
        \node at (v6_p2) [name=fake_v6_p2,outer sep=5pt,inner sep=5pt]{};

        \node[main] (v9_p2) [above of=v5_p2] {};
        \node at (v9_p2) [name=fake_v9_p2,outer sep=5pt,inner sep=5pt]{};

        \draw[my_green] (v1_p2) -- node[midway, below, pos=0.5] {} (v2_p2);
        \draw[my_green] (v2_p2) -- node[midway, below, pos=0.5] {} (v3_p2);
        \draw[my_green] (v3_p2) -- node[midway, below, pos=0.5] {} (v4_p2);
        \draw[my_green] (v4_p2) -- node[midway, below, pos=0.5] {} (v5_p2);
        \draw[my_green] (v7_p2) -- node[midway, below, pos=0.5] {} (v5_p2);
        \draw[my_green] (v8_p2) -- node[midway, below, pos=0.5] {} (v7_p2);

        \draw[my_red, decorate,decoration=snake] (v6_p2) -- node[midway, below, pos=0.5] {} (v3_p2);

        \draw[my_red, decorate,decoration=snake] (v9_p2) -- node[midway, below, pos=0.5] {} (v5_p2);
        
        %partition 3
        \node[main] (v1_p3) [left of=v6_p2, above of=v6_p2] {};
        \node at (v1_p3) [name=fake_v1_p3,outer sep=5pt,inner sep=5pt]{};
        \node[main] (v2_p3) [right of=v6_p2, above of=v6_p2] {};
        \node at (v2_p3) [name=fake_v2_p3,outer sep=5pt,inner sep=5pt]{};
        \node[main] (v3_p3) [above of=v2_p3] {};
        \node at (v3_p3) [name=fake_v3_p3,outer sep=5pt,inner sep=5pt]{};

        \draw[my_green] (v1_p3) -- node[midway, below, pos=0.5] {} (v6_p2);
        \draw[my_green] (v6_p2) -- node[midway, below, pos=0.5] {} (v2_p3);
        \draw[my_green] (v2_p3) -- node[midway, below, pos=0.5] {} (v3_p3);

        \draw[my_red, decorate,decoration=snake] (v2_p3) -- node[midway, below, pos=0.5] {} (v9_p2);

        %partition 4
        \node[main, draw=none] (temp_p4) [right of=v6_p1] {}; 
        \node[main] (v1_p4) [right of=temp_p4] {}; 
        \node at (v1_p4) [name=fake_v1_p4,outer sep=5pt,inner sep=5pt]{};
        \node[main] (v2_p4) [right of=v1_p4] {}; 
        \node at (v2_p4) [name=fake_v2_p4,outer sep=5pt,inner sep=5pt]{};
        \node[main] (v3_p4) [right of=v2_p4] {}; 
        \node at (v3_p4) [name=fake_v3_p4,outer sep=5pt,inner sep=5pt]{};
        \node[main] (v4_p4) [right of=v3_p4] {}; 
        \node at (v4_p4) [name=fake_v4_p4,outer sep=5pt,inner sep=5pt]{};

        \node[main] (v5_p4) [below of=v2_p4] {}; 
        \node at (v5_p4) [name=fake_v5_p4,outer sep=5pt,inner sep=5pt]{};

        \draw[my_green] (v1_p4) -- node[midway, below, pos=0.5] {} (v2_p4);
        \draw[my_green] (v2_p4) -- node[midway, below, pos=0.5] {} (v3_p4);
        \draw[my_green] (v3_p4) -- node[midway, below, pos=0.5] {} (v4_p4);

        \draw[my_red, decorate,decoration=snake] (v2_p4) -- node[midway, below, pos=0.5] {} (v5_p4);

        %partition 5

        \node[main] (v1_p5) [above of=v8_p1] {};
        \node at (v1_p5) [name=fake_v1_p5,outer sep=5pt,inner sep=5pt]{};
        
        \draw[my_green] (v8_p1) -- node[midway, below, pos=0.5] {} (v5_p4);
        \draw[my_green] (v8_p1) -- node[midway, below, pos=0.5] {} (v1_p5);

        %partition 6
        \node[main, draw=none] (temp_p6) [above of=v8_p2] {};
        \node[main] (v1_p6) [above of=temp_p6] {};
        \node at (v1_p6) [name=fake_v1_p6,outer sep=5pt,inner sep=5pt]{};
        \node[main] (v2_p6) [above of=v1_p6] {};
        \node at (v2_p6) [name=fake_v2_p6,outer sep=5pt,inner sep=5pt]{};
        \node[main] (v3_p6) [above of=v2_p6] {};
        \node at (v3_p6) [name=fake_v3_p6,outer sep=5pt,inner sep=5pt]{};
        \node[main] (v4_p6) [above of=v3_p6] {};
        \node at (v4_p6) [name=fake_v4_p6,outer sep=5pt,inner sep=5pt]{};

        \node[main] (v5_p6) [left of=v3_p6] {};
        \node at (v5_p6) [name=fake_v5_p6,outer sep=5pt,inner sep=5pt]{};

        \draw[my_green] (v1_p6) -- node[midway, below, pos=0.5] {} (v2_p6);
        \draw[my_green] (v2_p6) -- node[midway, below, pos=0.5] {} (v3_p6);
        \draw[my_green] (v3_p6) -- node[midway, below, pos=0.5] {} (v4_p6);

        \draw[my_red, decorate,decoration=snake] (v5_p6) -- node[midway, below, pos=0.5] {} (v3_p6);

        %partition 7
        \node[main] (v1_p7) [above of=v5_p6, left of=v5_p6] {};
        \node at (v1_p7) [name=fake_v1_p7,outer sep=5pt,inner sep=5pt]{};

        \draw[my_green] (v1_p7) -- node[midway, below, pos=0.5] {} (v5_p6);

        %Bridges
        \begin{scope}   [dashed]
            \draw[my_blue] (v1_p1) -- node[midway, below, pos=0.5] {$b_1$} (v1_p2);
            \draw[my_blue] (v7_p1) -- node[midway, below, pos=0.5] {$b_2$} (v1_p3);
            \draw[my_blue] (v5_p4) -- node[midway, below, pos=0.5] {$b_3$} (v3_p3);
            \draw[my_blue] (v1_p6) -- node[midway, below, pos=0.5] {$b_4$} (v9_p2);

            \draw[my_blue] (v1_p4) to[out=45,in=135] node[at start, above, pos=0.5] {$b_5$} (v4_p4);
        \end{scope}
        
        %spokes ~ bridges
        \node[main] (spoke_1) [below of=v1_p1] {};

        \draw[my_red, decorate,decoration=snake] (spoke_1) -- node[midway, right, xshift=2pt, pos=0.5] {$s_1$} (v1_p1);
        \draw[my_red, decorate,decoration=snake] (v5_p6) -- node[midway, below, yshift=-3pt, pos=0.5] {$s_2$} (v3_p3);

        %Partitions
        \begin{pgfonlayer}{background}
            % partition 1
            \draw[blue,fill=my_grey,opacity=0.2](fake_v1_p1.west) 
                to[closed,curve through={
                (fake_v2_p1.west) .. 
                (fake_v3_p1.west) ..
                (fake_v4_p1.west) ..
                (fake_v5_p1.west) ..
                (fake_v6_p1.west) ..
                (fake_v6_p1.north) ..
                (fake_v6_p1.east) ..
                (fake_v5_p1.east) ..
                ($(v5_p1.south east)!0.5!(v8_p1.north west)$) ..
                (fake_v8_p1.north) ..
                (fake_v8_p1.east) ..
                (fake_v7_p1.east) ..
                (fake_v7_p1.south) ..
                ($(v7_p1.south west)!0.5!(v2_p1.north east)$) ..
                (fake_v2_p1.east) ..
                (fake_v1_p1.east) ..
                (fake_v1_p1.south) 
                }] (fake_v1_p1.west);
            
            % partition 2
            \draw[blue,fill=my_grey,opacity=0.2](fake_v1_p2.north) 
                to[closed,curve through={
                (fake_v2_p2.north) .. 
                ($(v2_p2.north east)!0.5!(v6_p2.south west)$) ..
                (fake_v6_p2.north) ..
                ($(v6_p2.south west)!0.5!(v4_p2.north east)$) ..
                (fake_v4_p2.north) ..
                ($(v4_p2.north east)!0.5!(v9_p2.south west)$) ..
                (fake_v9_p2.north) ..
                ($(v9_p2.south west)!0.5!(v7_p2.north east)$) ..
                (fake_v7_p2.north) ..
                (fake_v8_p2.north) ..
                (fake_v8_p2.east) ..
                (fake_v8_p2.south) ..
                (fake_v7_p2.south) ..
                (fake_v5_p2.south) ..
                (fake_v4_p2.south) ..
                (fake_v3_p2.south) ..
                (fake_v2_p2.south) ..
                (fake_v1_p2.south) ..
                (fake_v1_p2.west) 
                }] (fake_v1_p2.north);

            % partition 3
            \draw[blue,fill=my_grey,opacity=0.2](fake_v1_p3.west) 
                to[closed,curve through={
                (fake_v1_p3.north) ..
                ($(v1_p3.north east)!0.5!(v2_p3.north west)$) ..
                (fake_v3_p3.west) ..
                (fake_v3_p3.north) ..
                (fake_v3_p3.east) ..
                ($(v3_p3.east)!0.5!(v9_p2.north)$) ..
                (fake_v9_p2.east) ..
                (fake_v9_p2.south) ..
                (fake_v9_p2.south west) ..
                ($(v6_p2.north east)!0.5!(v9_p2.north east)$) ..
                (fake_v6_p2.south east) ..
                (fake_v6_p2.south) ..
                (fake_v6_p2.west) ..
                ($(fake_v6_p2.west)!0.5!(fake_v1_p3.south)$) ..
                (fake_v1_p3.south)
                }] (fake_v1_p3.west);

            % partition 4
            \draw[blue,fill=my_grey,opacity=0.2](fake_v1_p4.west) 
                to[closed,curve through={
                (fake_v1_p4.north) ..
                (fake_v2_p4.north) ..
                (fake_v3_p4.north) ..
                (fake_v4_p4.north) ..
                (fake_v4_p4.east) ..
                (fake_v4_p4.south) ..
                ($(v3_p4.south)!0.5!(v5_p4.east)$) ..
                (fake_v5_p4.east) ..
                (fake_v5_p4.south) ..
                (fake_v5_p4.west) ..
                ($(v5_p4.west)!0.5!(v1_p4.south)$) ..
                }] (fake_v1_p4.west);

            % partition 5
            \draw[blue,fill=my_grey,opacity=0.2](fake_v5_p4.east) 
                to[closed,curve through={
                (fake_v5_p4.south) ..
                ($(fake_v5_p4.south)!0.5!(fake_v8_p1.south)$) ..
                (fake_v8_p1.south east) ..
                (fake_v8_p1.south) .. 
                (fake_v8_p1.west) ..
                (fake_v1_p5.west) ..
                (fake_v1_p5.north) ..
                ($(fake_v1_p5.north)!0.5!(fake_v5_p4.north)$) ..
                (fake_v5_p4.north)
                }] (fake_v5_p4.east);

            % partition 6
            \draw[blue,fill=my_grey,opacity=0.2](fake_v1_p6.south) 
                to[closed,curve through={
                (fake_v1_p6.east) ..
                (fake_v4_p6.east) ..
                (fake_v4_p6.north) ..
                ($(v4_p6.west)!0.5!(v5_p6.north)$) ..
                (fake_v5_p6.north) ..
                (fake_v5_p6.west) ..
                (fake_v5_p6.south) ..
                (fake_v1_p6.west)
                }] (fake_v1_p6.south);

            % partition 7
            \draw[blue,fill=my_grey,opacity=0.2](fake_v1_p7.north) 
                to[closed,curve through={
                (fake_v1_p7.west) ..
                (fake_v1_p7.south west) ..
                (fake_v5_p6.south west) ..
                (fake_v5_p6.south) ..
                (fake_v5_p6.north east) ..
                (fake_v1_p7.north east)
                }] (fake_v1_p7.north);
                
        \end{pgfonlayer}
        
    \end{tikzpicture}
    \caption{An example of a potential partition indirectly maintained by Algorithm~\ref{alg: one} (subject to arrival order) into subgraphs of the consistent instances. Here, the straight green edges represent path edges, the wavy red edges represent spokes, and the dashed blue edges represent bridges.}
    \label{fig: potential edge partitions}
\end{figure}

\subsubsection{The Difficulty of Bridges} \label{sec: difficulty of bridges}

Bridges are divided into four classes determined by the incident edges to their endpoints, see the definition based on $\mathcal{B}$. Intuitively, an edge $e$ is assigned a bridge when both endpoints are already in consistent instances in the current partition, and it is not clear to which consistent instance the edge $e$ should be added. In this case, the edge $e$ is assigned a bridge and $e$ attempts to fulfill the role it would be given, as if it were assigned to each partition individually. For instance, in Figure~\ref{fig: potential edge partitions} as the bridge $b_1$ arrives it is not immediately clear which consistent instance $b_1$ should join. In this case, $b_1$ prevents the paths' ``growth" in these two consistent instances beyond the endpoints of $b_1$. After the arrival of $b_1$, the algorithm needs to account for the possibility of future edges incident to $b_1$, and to do that the algorithm needs to update $x$ and $y$ appropriately. The main obstacle for finding an appropriate $x$ and $y$ update is the possibility of future spokes incident to $b_1$. Since, as explained above, $b_1$ prevented the paths' ``growth" in two consistent instances, we might want the $x$ update to happen as in the case where $b_1$ is just a new path edge in both consistent instances. However, this is not always possible. For instance, consider the case where the two path edges adjacent to $b_1$ are assigned $\widetilde{y}_1 = c$ and $\widetilde{y}_2 = c/2$, that is, the first and second edges in their respective paths. So if we were to adhere to the structure in Figure~\ref{fig: cover construction}, we would require the dual solution $x$ to increase by $\widetilde{y}_2 - (c - \widetilde{y}_3)$ for one of the endpoints and $\widetilde{y}_3 - (c - \widetilde{y}_4)$ for the other. However, to do so the assignment to $b_1$ would have to be at least 
\begin{equation*}
    \widetilde{y}_2 - (c - \widetilde{y}_3) + \widetilde{y}_3 - (c - \widetilde{y}_4) = \frac{c}{2} - (c - \frac{5c - 2}{2}) + \frac{5c-2}{2} - \left(c - (4c-2)\right) = \frac{15c}{2} - 4\,.
\end{equation*}
However, as $c + \frac{15c}{2} - 4 > 1$, this assignment to $b_1$ is not feasible. To circumvent this, the algorithm capitalizes on the structure of the future edges incident to $b_1$, and this structure allows the algorithm to assign $b_1$ a substantially smaller value than $c + \frac{15c}{2} - 4$.

Similarly, one can handle the case when one of the endpoints of an arriving bridge is incident to a previously arrived spoke. This case can be seen in bridges $b_2$, $b_3$, and $b_4$ in Figure~\ref{fig: potential edge partitions}. Let $u$ be the common endpoint of the arriving bridge and existing spoke. In this situation, our algorithm guarantees that the value of $x_u$ after the arrival of $b_1$ is at least the value of $x_u$ at the moment when the spoke arrived. This ensures that $x$ remains a feasible dual solution; in particular, that the corresponding constraint for the spoke is satisfied by $x$ even after the arrival of $b_1$.

\section{Algorithm Definition and Main Properties}\label{appendix: algorithm and analysis}

\begin{algorithm}
    \caption{Online Algorithm for Maximum Cardinality Fractional Matchings in Graphs of Maximum Degree Three}\label{alg: one}
    \begin{algorithmic}[1]
    \State While $e = uv$ arrives (Assume $\degree(v) \leq \degree(u)$)
        
    \If{$\{\type(\delta(u) \setminus \{e\}), \type(\delta(v) \setminus \{e\})\} \in \mathcal{B}$}
        \State $\type(e) \gets 3$
        \If{$\{\type(\delta(u) \setminus \{e\}), \type(\delta(v) \setminus \{e\})\} = \{(1,1,0), (1,0,0)\}$}
            \State Let $f_1, f_2 \in \delta(u)$, $\type(f_1) = 1, \type(f_2) = 2$, and $f_3 \in \delta(v)$, $\type(f_3) = 1$
            \State \label{line: b4 y assigned} $y_e \gets \max\{\widetilde{y}_{n_{f_3} + 1} - y_{f_2} - \min\{c - \widetilde{y}_{n_{f_1} + 1}, y_{f_2}\}, 0\}$
            \State \label{line: b4 u cover change} $x_u \gets x_u + y_e - \max\{\widetilde{y}_{n_{f_3} + 1} - y_{f_2}, 0\}$
            \State \label{line: b4 v cover change} $x_v \gets x_v + \max\{\widetilde{y}_{n_{f_3} + 1} - y_{f_2}, 0\}$
        \ElsIf{$\{\type(\delta(u) \setminus \{e\}), \type(\delta(v) \setminus \{e\})\} = \{(0,2,0), (1,0,0)\}$}
            \State Let $f_v \in \delta(v)  
            \setminus \{e\}$ 
            \State \label{line: b3 y assigned} $y_e \gets \max\{\widetilde{y}_{n_{f_v} + 1} - \max\{y_f \mid f \in \delta(u) \setminus \{e\}\}, 0\}$
            \State \label{line: b3 v cover change} $x_v \gets x_v + y_e$
        \ElsIf{$\{\type(\delta(u) \setminus \{e\}), \type(\delta(v) \setminus \{e\})\} = \{(1,0,0)\}$}
            \State \label{line: b1 f_u, f_v} Let $f_u \in \delta(u) \setminus \{e\}$, $f_v \in \delta(v) \setminus \{e\}$
            \State \label{line: b1 z, w} Let $z(e) \in \Eends(e)$ s.t $\widetilde{y}_{n_{f_{z(e)}} + 1} = \min\{\widetilde{y}_{n_{f} + 1}, \mid f \in \{f_u, f_v\}\}$, and $w(e) \in \Eends(e) \setminus \{z(e)\}$
            \State \label{line: b1 y assigned} $y_e \gets \widetilde{y}_{n_{f_{z(e)}} + 1} - (c - \widetilde{y}_{n_{f_{w(e)}} + 1})$
            \State \label{line: b1 z cover change} $x_{z(e)} \gets x_{z(e)} + \widetilde{y}_{n_{f_{z(e)}} + 1} - \min\{\frac{c}{2}, 1 - y_{f_{z(e)}} - y_{e}\}$
            \State \label{line: b1 w cover change} $x_{w(e)} \gets x_{w(e)} + \widetilde{y}_{n_{f_{w(e)}} + 1} - \max\{\frac{c}{2}, c - ( 1 - y_{f_{z(e)}} - y_{e})\}$ 
        \ElsIf{$\{\type(\delta(u) \setminus \{e\}), \type(\delta(v) \setminus \{e\})\} = \{(1,0,0), (0,1,0)\}$}
            \State \label{line: b2 definitions} $f_1, f_2 \in \delta(\Eends(e)), \type(f_i) = i, z(e) \in \Eends(e) \cap \Eends(f_1), w(e) \in \Eends(e) \cap \Eends(f_2)$
            \State \label{line: b2 y assigned} $y_e \gets \max\{\widetilde{y}_{n_{f_1} + 1} - y_{f_2}, 0\}$
            \State \label{line: b2 z cover change} $x_{z(e)} \gets x_{z(e)} + y_e - \max\{(2c-1) - y_{f_2}, 0\}$
            \State \label{line: b2 w cover change} $x_{w(e)} \gets x_{w(e)} + \max\{(2c-1) - y_{f_2}, 0\}$ 
        \EndIf
    \Else 
        \State \[ (z(e), w(e)) \gets \begin{cases}
                (v, u) & \mbox{if $ \degree(v) == 3$ and $1 - \sum_{f \in \delta(v) \setminus \{e\}} y_f < 1 - \sum_{f \in \delta(u) \setminus \{e\}} y_f $}\label{line: choice of z}\\
                (u, v) & \mbox{\text{otherwise }}
            \end{cases}\]
        \If{$\degree(z(e)) == 3$ and $\type(\delta(z(e)) \setminus \{e\}) \notin \{(0,2,0), (1,1,0)\}$}\label{line: test spoke edge}
            \State $y_e \gets c - x_{z(e)}$ \label{line: y spoke edge}
            \State $\type(e) \gets 2$ \label{line: spoke edge assigned}
            \State $x_{w(e)} \gets x_{w(e)} + y_e$ \label{line:  spoke cover change}
        \Else
            \State \label{line: def of n} $n_e \gets \max\left\{\{n_f + 1 \mid f \in \delta(z(e) \setminus \{e\}), \type(f) = 1\} \cup \{1\} \right\}$
            \State \label{line: y path edge} $y_e \gets \min\{\widetilde{y}_{n_e}, 1 - \sum_{f \in \delta(z(e)) \setminus \{e\}} y_f\}$
            \State \label{line: path edge assigned} $\type(e) \gets 1$
            \State \label{line: x path edge z} $x_{z(e)} \gets x_{z(e)} + y_e - (c - \widetilde{y}_{n_e + 1})$
            \State \label{line: x path edge w} $x_{w(e)} \gets x_{w(e)} + (c - \widetilde{y}_{n_e + 1})$
        \EndIf
    \EndIf
    \end{algorithmic}
\end{algorithm}

Having stated Algorithm~\ref{alg: one}, we need to demonstrate two things: that the algorithm outputs a feasible fractional matching and that the algorithm achieves the desired guarantee~$c$. To accomplish this, we prove that at every timepoint the results of computations satisfy all of the properties stated in the next lemma. The crucial properties for the correctness and guarantee are the properties~\ref{P1}, \ref{P2}, and \ref{P7} from the below lemma, while other key properties for technical arguments are stated in Appendix~\ref{sec: proof properties}. We defer the proof of Lemma~\ref{lma main lemma} to Appendix~\ref{sec: proof properties}. 

\begin{lemma}[Main Properties] \label{lma main lemma}
    Let the underlying graph have maximum degree three. At every timepoint the values $y$ and $x$ computed by Algorithm~\ref{alg: one} satisfy the following properties:
    \begin{description}[style=multiline]
        \item[\namedlabel{P1}{P1}] $\sum_{u \in V} x_u = \sum_{e \in E} y_e$.
        \item[\namedlabel{P2}{P2}] for all $e = uv \in E$ we have $x_u + x_v \geq c$.
        \item[\namedlabel{P7}{P3}] for all $u \in V$ we have $\sum_{f \in \delta(u)} y_f \leq 1$.
    \end{description}
\end{lemma}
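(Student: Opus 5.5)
The plan is to prove Lemma~\ref{lma main lemma} by induction on the number of arrived edges, strengthening the hypothesis with a list of auxiliary invariants. The paper says these are stated in Appendix~\ref{sec: proof properties}, and I would formulate them first. They should record, for every vertex $w$, a lower bound on $x_w$ in terms of the types and position indicators of its incident edges. For a vertex $w$ that is currently the degree-one endpoint $w(f)$ of a path edge $f$, I would aim for $x_w \ge c-\widetilde{y}_{n_f+1}$. For a vertex $z$ on a path with incident path edges $f,f'$, the invariant should pin down $x_z$ exactly as in Figure~\ref{fig: cover construction}. Finally, I would record that $y_f\le \widetilde{y}_{n_f}$ for every path edge $f$, that the slack $1-\sum_{f\in\delta(w)}y_f$ is at least $c-\widetilde{y}_{n+1}$ in the relevant configurations, and that $x_w$ never decreases. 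Monotonicity of $x$ immediately preserves \ref{P2} for all old edges, so at each arrival only the new edge $e=uv$ needs checking.

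\ref{P1} is the easy step. In every branch of Algorithm~\ref{alg: one}, the total increase of $x_u+x_v$ equals $y_e$. For a path edge (lines~\ref{line: x path edge z}, \ref{line: x path edge w}) the terms $c-\widetilde{y}_{n_e+1}$ cancel. For a spoke (lines~\ref{line: y spoke edge}--\ref{line:  spoke cover change}) only $x_{w(e)}$ is increased, by exactly $y_e$. For the four bridge cases the same telescoping holds line by line; for example, in the $\{(1,0,0)\}$ case the two subtracted terms $\min\{c/2,s\}$ and $\max\{c/2,c-s\}$ sum to $c$, and $y_e$ is defined to be $\widetilde{y}_{n_{f_z}+1}+\widetilde{y}_{n_{f_w}+1}-c$. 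Hence \ref{P1} follows by induction.

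For \ref{P2} on the new edge, I would argue case by case along the branches. For a spoke, $y_e=c-x_{z(e)}$ gives equality, provided $y_e\ge 0$, i.e.\ $x_{z(e)}\le c$; that must itself be an invariant. For a path edge, after the updates $x_{z(e)}+x_{w(e)}$ equals the old sum plus $y_e$. The old $x_{z(e)}$ is at least $c-\widetilde{y}_{n_e}$ by the endpoint invariant, and $y_e=\widetilde{y}_{n_e}$ unless the capacity at $z(e)$ binds; in the binding case the slack invariant is used. For bridges, I would use the choice of $z(e)$ and $w(e)$, the existing $x$ values at path ends, and the identity $1-\widetilde{y}_n-\widetilde{y}_{n+1}=c-\widetilde{y}_{n+2}$ from Lemma~\ref{lma fractional matching consistent instances}.

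\ref{P7} requires checking that the new $y_e$ fits at both endpoints. At $z(e)$ for path edges this holds by the $\min$ in line~\ref{line: y path edge}. At the other endpoint, and for spokes and bridges, I would need bounds on $\widetilde{y}_n$ (monotonicity and explicit ranges via Fibonacci identities) together with the invariant that $y$ values at low-degree vertices are small. One also has to check nonnegativity of each $y_e$; in particular $\widetilde{y}_{n}+\widetilde{y}_{m}\ge c$ is needed for the $\{(1,0,0)\}$ bridge.

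The main obstacle will be the bridge cases combined with later arrivals of spokes at bridge endpoints and at the boxed cells of Table~\ref{tab: edge-partitions}, where the auxiliary invariants must be strong enough yet preserved. I would first pin down the exact invariant for a vertex incident to one path edge and one bridge (for instance $x_{z(e)}\ge c-\min\{c/2,\,1-y_{f_z}-y_e\}$), then verify that a subsequent spoke gets $y=c-x\le$ the remaining slack. This reduces to numerical inequalities in $c$ and $\widetilde{y}_n$, which I would check via the closed form for small $n$ and a limiting argument for large $n$.
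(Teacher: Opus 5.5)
Your reduction of \ref{P2} to the arriving edge rests on the invariant ``$x_w$ never decreases''. That invariant is false for Algorithm~\ref{alg: one}, and the decreases are intended.

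In the bridge case $\{(1,1,0),(1,0,0)\}$, line~\ref{line: b4 u cover change} changes $x_u$ by $y_e-\max\{\widetilde{y}_{n_{f_3}+1}-y_{f_2},0\}$. Whenever $\widetilde{y}_{n_{f_3}+1}>y_{f_2}$, this equals $-\min\{c-\widetilde{y}_{n_{f_1}+1},\,y_{f_2}\}$ if $y_e>0$, and $-(\widetilde{y}_{n_{f_3}+1}-y_{f_2})$ if $y_e=0$. Either way it is strictly negative. Starting from $x_u^{old}=c-\widetilde{y}_{n_{f_1}+1}+y_{f_2}$ (Claim~\ref{claim 1}), the algorithm cuts $x_u$ down to as little as $\max\{c-\widetilde{y}_{n_{f_1}+1},\,y_{f_2}\}$, precisely because $u$ has just reached degree three (Case~4 in Section~\ref{sec: bridge assignments}).

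The same happens with capped path edges. Let a path edge $e$ arrive at $z=z(e)$ carrying two spokes of total value $s>1-\tfrac{c}{2}$; for example, two spokes of value $1-c$, which \ref{P3} permits and which do arise. Then line~\ref{line: y path edge} caps $y_e=1-s$, and $x_z$ drops from $s$ to $1-\tfrac{c}{2}$. A capped path edge at a $(1,1,0)$ vertex can likewise lower $x_z$.

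So you need an invariant that localizes the decreases, as the paper's \ref{P4} does: $x_u$ can go down only when an edge $e$ arrives at $u$ with $\type(\delta(u)\setminus\{e\})\in\{(0,2,0),(1,1,0)\}$, i.e.\ exactly when $u$ becomes saturated. In each such case you must re-verify \ref{P2} for the two old edges at $u$. This is a one-time check, since nothing arrives at $u$ afterwards, but it needs facts absent from your list:
\begin{itemize}
\item For an old spoke $f$ at $u$ we have $u=w(f)$, and $x_{z(f)}=c-y_f$ is frozen forever, so you need $x_u^{new}\ge y_f$.
\item For an old path edge $f$ at $u$ you need $x_u^{new}\ge c-\widetilde{y}_{n_f+1}$, to pair with the lower bound at the other endpoint.
\end{itemize}
These are the paper's \ref{P6} and \ref{P5}, and they must themselves be shown to survive the decreases. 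That uses the exact value of $x_u^{old}$ at $(1,1,0)$ vertices (Claim~\ref{claim 1}) and at spoke-only vertices, where $x_u=\sum_{f\in\delta(u)}y_f$. With these, the paper obtains, in the capped path-edge cases, $x_z^{new}\ge c$ at $(1,1,0)$ vertices and $x_z^{new}=1-\tfrac{c}{2}$ at $(0,2,0)$ vertices. In the bridge case it obtains $x_u^{new}\ge\max\{c-\widetilde{y}_{n_{f_1}+1},\,y_{f_2}\}$. Without this, your induction for \ref{P2}, and for your own lower bounds on $x$, breaks in exactly the configurations you flag as the main obstacle.

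A smaller point for \ref{P7}: when $w(e)$ has degree three, bounds on $\widetilde{y}_n$ do not suffice. You need that line~\ref{line: choice of z} picks $z(e)$ as the endpoint with the smaller residual capacity.
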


In particular, we accomplish property~\ref{P1} by adhering to the following rule: after deciding on the value $y_e$ assigned to an arriving edge $e = uv$, we update the cover solution $x$ only at the endpoints of $e$, i.e. we update only $x_u$ and $x_v$. We require the increase of $x_u+x_v$ to be precisely $y_e$. Thus, we can view it as assigning $e$ a value $y_e$, and after that distributing the value $y_e$ to the endpoints of $e=uv$. Note that we do not always increase both $x_u$ and $x_v$; indeed, $x_u$ or $x_v$ can even decrease as long as $x_u+x_v$ is increased by $y_e$.

In Section~\ref{sec: nonnegativity}, we prove that the values $x_u$, $u\in V$ and $y_e, e\in E$ are nonnegative, see Lemma~\ref{lma x non-neg} and Lemma~\ref{lma y non-neg} below. Note, that nonnegativity of $y_e$, $e\in E$ together with the property~\ref{P7} implies that $y_e$, $e\in E$ is a feasible fractional matching. Nonnegativity of $x_u$, $u\in V$ together with the property~\ref{P2} implies that $x_u/c$, $u\in V$ is a feasible fractional cover. Finally, having a feasible fractional cover $x_u/c$, $u\in V$, and a feasible fractional matching $y_e$, $e\in E$, the property~\ref{P1} shows that Algorithm~\ref{alg: one} achieves the guarantee $c$ as desired, leading us to the following theorem.

\begin{theorem}
    For fractional matchings in the adversarial edge arrival model, Algorithm~\ref{alg: one} achieves the guarantee $c$ on graphs of maximum degree three.
\end{theorem}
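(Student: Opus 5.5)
The plan is to obtain the theorem as a weak-duality consequence of Lemma~\ref{lma main lemma}, combined with the nonnegativity statements Lemma~\ref{lma x non-neg} and Lemma~\ref{lma y non-neg} announced for Section~\ref{sec: nonnegativity}. Fix an arbitrary input graph of maximum degree three and an arbitrary adversarial arrival order, and fix an arbitrary timepoint. Let $G=(V,E)$ be the graph of edges arrived so far, and let $y$ and $x$ be the values maintained by Algorithm~\ref{alg: one} at that moment. These values are irrevocable, since the algorithm never changes $y_f$ after $f$ arrives. It therefore suffices to show
\[
\sum_{e\in E} y_e \;\geq\; c\cdot \nu(G),
\]
where $\nu(G)$ is the maximum cardinality of a matching in $G$.

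\textbf{Step 1: $y$ is a feasible fractional matching.} By Lemma~\ref{lma y non-neg}, $y_e\ge 0$ for all $e\in E$. By property~\ref{P7}, $\sum_{f\in\delta(u)}y_f\le 1$ for all $u\in V$. Hence $y$ is feasible at every timepoint, and in particular its values are legal online decisions.

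\textbf{Step 2: $x/c$ is a feasible fractional vertex cover.} By Lemma~\ref{lma x non-neg}, $x_u\ge 0$ for all $u\in V$. By property~\ref{P2}, $x_u+x_v\ge c$ for every $uv\in E$. So the vector $x/c$ is a feasible solution of the LP relaxation of vertex cover on $G$.

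\textbf{Step 3: weak duality.} Take a maximum matching $M$ of $G$. Its edges are disjoint, and each edge $uv\in M$ satisfies $(x_u+x_v)/c\ge 1$. Summing over $M$ and using $x\ge 0$ for vertices not covered by $M$ gives
\[
\nu(G)=|M|\le \sum_{uv\in M}\frac{x_u+x_v}{c}\le \frac{1}{c}\sum_{u\in V}x_u .
\]
Property~\ref{P1} then yields
\[
\sum_{e\in E}y_e=\sum_{u\in V}x_u\ge c\,\nu(G).
\]
Since the timepoint, graph, and arrival order were arbitrary, the guarantee is $c$.

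\textbf{Main obstacle.} The argument itself is routine; all of the difficulty sits in Lemma~\ref{lma main lemma} and the nonnegativity lemmas. I expect \ref{P2} to be the delicate part, because it must survive later decreases of some $x_u$. For example, line~\ref{line: b4 u cover change} can lower $x_u$, and every bridge and spoke case requires checking that the cover constraints of previously arrived edges remain satisfied. The nonnegativity of the bridge values, e.g.\ line~\ref{line: b1 y assigned}, relies on Fibonacci identities for $\widetilde y_n$. I would prove these by induction over arrivals with an invariant per edge type, taking the identity $1-\widetilde y_n-\widetilde y_{n+1}=c-\widetilde y_{n+2}$ as given. For the theorem itself, however, I simply invoke these results as stated.
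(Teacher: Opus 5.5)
Your proposal is correct and is essentially the paper's own argument. Nonnegativity of $y$ together with the degree constraints makes $y$ a feasible fractional matching, and nonnegativity of $x$ together with $x_u+x_v\ge c$ makes $x/c$ a feasible fractional vertex cover; the equality $\sum_{e}y_e=\sum_{u}x_u$ then yields ratio $c$ at every timepoint by weak duality, and your summation over a maximum matching simply spells out the weak-duality step the paper leaves implicit.
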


So, in this work, we show that Algorithm~\ref{alg: one} has a guarantee $c$ under both the adversarial edge arrival model and the adversarial vertex arrival model for graphs of maximum degree three. Moreover, Algorithm~\ref{alg: one} is optimal for graphs of maximum degree three with respect to both these models, which follows directly from the upper bound in~\cite{Buchbinder}.

\subsection{Nonnegativity of Fractional Matching and Fractional Cover} \label{sec: nonnegativity}

In this section, we show that the values $y_e$, $e\in E$ and $x_u$, $u\in V$ computed by Algorithm~\ref{alg: one} are nonnegative. Towards that goal, we need to extend the set of properties that are satisfied by Algorithm~\ref{alg: one}. The next lemma contains all additional key properties, and we defer their proof to Appendix~\ref{sec: proof properties}. In the current section, we make use only of the properties~\ref{P2}, \ref{P3}, \ref{P5}, and \ref{P6}. The analysis of the algorithm requires all of the stated properties, though, and so all of the properties from \ref{P1} to \ref{P6} are proved together in Appendix~\ref{sec: proof properties}.

\begin{lemma}[Additional Properties] \label{lma main lemma additional}
    Let the underlying graph have maximum degree three. At every timepoint the values $y$ and $x$ computed by Algorithm~\ref{alg: one} satisfy the following properties:
    \begin{description}[style=multiline]
        \item[\namedlabel{P3}{P4}] for all $u\in V$ with $\degree(u) = 2$ and $\type\left(\delta(u)\right) \notin \left\{(0,2,0), (1,1,0)\right\}$ we have 
        \[x_u \in \left[2c - 1, \frac{5c - 2}{2}\right] \quad \text{ and }\quad x_u \geq c - 1 + \sum_{f \in \delta(u)} y_f\,,\] where the first statement can be reformulated as $c - x_u \in \left[1 - \frac{3c}{2}, 1 - c\right]$. This property implies that for every spoke $e$, i.e. for every edge $e$ with $\type(e) = 2$, we have  $y_e \in \left[1-\frac{3c}{2}, 1 - c\right]$ due to lines~\ref{line: y spoke edge} and ~\ref{line: spoke edge assigned} of Algorithm~\ref{alg: one}. 
        \item[\namedlabel{P4}{P5}] for all $u\in V$ the value $x_u$ can decrease only upon the arrival of an edge $e$ incident to $u$ such that $\type(\delta(u) \setminus \{e\}) \in \{(0,2,0), (1,1,0)\}$.
        \item[\namedlabel{P5}{P6}] for all $e \in E$ with $\type(e) = 1$ we have $x_{z(e)} \geq \widetilde{y}_{n_e + 1}$ and $x_{w(e)} \geq c - \widetilde{y}_{n_e + 1}$; additionally we have  $x_{w(e)} = c - \widetilde{y}_{n_e + 1}$ whenever $\degree(w(e)) = 1$.
        This property implies that if $n_e$ = 1 then we have $x_{z(e)} \geq c - \widetilde{y}_{n_e + 1}$, because $x_{z(e)} \geq \widetilde{y}_{n_e + 1}$ and  $\widetilde{y}_{n_e + 1} = \widetilde{y}_2 = c - \widetilde{y}_2 = \frac{c}{2}$.
        \item[\namedlabel{P6}{P7}] for all $e \in E$ with $\type(e) = 2$ we have $x_{w(e)} \geq y_e$.
    \end{description}
\end{lemma}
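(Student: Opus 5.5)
The plan is to prove \ref{P3}--\ref{P6} simultaneously with \ref{P1}, \ref{P2}, \ref{P7} by induction on the number of arrived edges. The statements are not independent. The spoke value $y_e=c-x_{z(e)}$ is controlled by \ref{P3} applied at $z(e)$ just before the arrival. \ref{P6} is needed to keep \ref{P2} for spokes. \ref{P5} is used to show that the dual lower bounds in \ref{P3} and \ref{P5} are never destroyed later. So the induction hypothesis is the full list \ref{P1}--\ref{P6} on the graph before the arrival of $e=uv$, and we verify every property after processing $e$. The only values that change are $y_e$, $x_u$ and $x_v$. Hence we only need to re-check properties at $u$, $v$ and at edges incident to $u$ or $v$. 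The check splits along the branches of Algorithm~\ref{alg: one}: the four bridge classes of $\mathcal B$, spokes (line~\ref{line: test spoke edge}), and path edges. Within each branch we further split by the cells of Table~\ref{tab: edge-partitions}, i.e.\ by $\type(\delta(u)\setminus\{e\})$ and $\type(\delta(v)\setminus\{e\})$.

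\textbf{\ref{P4} (monotonicity) and the edge-local parts of \ref{P3}, \ref{P5}, \ref{P6}.} For a path edge, both increments in lines~\ref{line: x path edge z}, \ref{line: x path edge w} are nonnegative. The second is nonnegative since $c\ge \widetilde y_{n+1}$. The first is nonnegative provided $y_e\ge c-\widetilde y_{n_e+1}=1-\widetilde y_{n_e-1}-\widetilde y_{n_e}$ by Lemma~\ref{lma fractional matching consistent instances}. This holds either because $y_e=\widetilde y_{n_e}$, or because the residual capacity at $z(e)$ is large enough, which we take from \ref{P3} at $z(e)$. For a spoke, only $x_{w(e)}$ grows, by $y_e\ge 0$. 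In the bridge branches, decreases can occur only through the $\max/\min$ corrections in lines~\ref{line: b4 u cover change}, \ref{line: b1 z cover change} and~\ref{line: b2 z cover change}. We check that a genuine decrease happens only at the endpoint whose other edges have type $(1,1,0)$ or $(0,2,0)$; elsewhere we bound the increment below using the Fibonacci identities for $\widetilde y_n$.

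\ref{P5} for a new path edge is immediate at $w(e)$, where equality holds while $\degree(w(e))=1$. At $z(e)$ it follows from $x_{z(e)}\ge\widetilde y_{n_e}$ (induction via the previous path edge $f$ with $n_f=n_e-1$, using \ref{P5} at $w(f)=z(e)$) plus the increment. \ref{P6} for a spoke is immediate from line~\ref{line:  spoke cover change}. Then we argue preservation for old edges: by \ref{P4}, values only drop at vertices of type $(1,1,0)$ or $(0,2,0)$. For those arrivals (the boxed cases and the bridge classes $\{(1,1,0),(1,0,0)\}$, $\{(0,2,0),(1,0,0)\}$) we check explicitly that the amount lost is compensated, i.e.\ the bounds $x\ge \widetilde y_{n+1}$ or $x\ge y_{\text{spoke}}$ still hold. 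This is where lines~\ref{line: b4 y assigned} and~\ref{line: b3 y assigned} subtract $y_{f_2}$ and the larger incident value.

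\textbf{\ref{P3}.} When a vertex reaches degree $2$ with type not in $\{(0,2,0),(1,1,0)\}$, its two edges are (path, path), (path, bridge), (bridge, spoke) or (path, spoke with it as $w$). For each we compute $x_u$ from the increments and show $2c-1\le x_u\le \frac{5c-2}{2}$ and $x_u\ge c-1+\sum_{f\in\delta(u)}y_f$. For instance, in the path--path case $x_u=\widetilde y_{n}+\widetilde y_{n+1}-(c-\widetilde y_{n+2})$, which should lie in the interval by monotonicity properties of $\widetilde y_n$ (the extremes coming from $n=1,2$). \ref{P7} follows from the $\min$ in line~\ref{line: y path edge}, from spokes having $y_e=c-x_{z(e)}\le 1-\sum y$ by the second half of \ref{P3}, and from case-by-case bounds for bridges.

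The main obstacle is the bridge class $\{(1,0,0)\}$ and the interaction of bridges with subsequent spokes. There the values $\widetilde y_{n_{f_u}+1},\widetilde y_{n_{f_v}+1}$ range over all pairs of indices. Proving $y_e\ge0$, the bound \ref{P3} at both endpoints, and that a later spoke at $z(e)$ or $w(e)$ still yields feasible $x$ requires inequalities valid for all $n,m$. I would first reduce these inequalities to the extreme indices ($n\in\{1,2,3\}$ and $n\to\infty$), using that $\widetilde y_n$ converges with alternating monotonic subsequences, and verify those finitely many cases numerically against $c=4/(9-\sqrt5)$.
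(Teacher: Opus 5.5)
Your architecture matches the paper's: one simultaneous induction over all seven properties, re-checking only $u$, $v$ and the edges at them, split along the branches of Algorithm~\ref{alg: one}. But one step you rely on is false. For a path edge, the increment $y_e-(c-\widetilde y_{n_e+1})$ at $z(e)$ in line~\ref{line: x path edge z} need not be nonnegative, and your fallback justification (residual capacity at $z(e)$ ``taken from \ref{P3}'') is vacuous.

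The cap in line~\ref{line: y path edge} can only bind when $z(e)$ had type $(0,2,0)$ (so $n_e=1$) or $(1,1,0)$ (so $n_e>1$) before $e$ arrived. With at most one earlier edge at $z(e)$, one gets $y_e=\widetilde y_{n_e}$, either from $\widetilde y_{n}+\widetilde y_{n+1}\le 3c/2<1$ or from spoke values being at most $1-c$. With two earlier edges of any other type, $e$ becomes a spoke. So the binding cases are exactly the types excluded from the premise of \ref{P3}, and there $x_{z(e)}$ genuinely drops:
\begin{itemize}
\item for type $(1,1,0)$ with $n_{f_p}=1$, $y_{f_p}=c$ and a spoke value $y_{f_s}>c/2$, you get $y_e=1-c-y_{f_s}$ and increment $c/2-y_{f_s}<0$;
\item for type $(0,2,0)$, the increment is negative once the two spoke values sum to more than $1-c/2$.
\end{itemize}
\ref{P4} permits such drops. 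But then \ref{P2}, \ref{P5}, \ref{P6} for the old edges at $z(e)$, and \ref{P5} for $e$ itself, have to be re-established, and your plan does neither. Your \ref{P5} argument at $z(e)$ (old bound plus increment) only works when $y_e=\widetilde y_{n_e}$; note also that the old bound coming from \ref{P5} at $w(f)$ is $c-\widetilde y_{n_e}$, not $\widetilde y_{n_e}$. And your compensation list (boxed cells plus two bridge classes) omits unboxed path-edge cells of Table~\ref{tab: edge-partitions}, such as $(1,1,0)$ against $(0,0,0)$ or $(0,2,0)$ against $(0,0,0)$.

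The missing ingredient is the exact value of $x$ at such vertices:
\begin{itemize}
\item $x_{z(e)}=y_{s_1}+y_{s_2}$ for type $(0,2,0)$, by \eqref{ob: 8} in Observation~\ref{obs vertex neighborhood};
\item $x_{z(e)}=c-\widetilde y_{n_{f_p}+1}+y_{f_s}$ for type $(1,1,0)$, by Claim~\ref{claim 1}.
\end{itemize}
Substituting the capped value of $y_e$ gives $x^{new}_{z(e)}=1-c/2$ in the first case. In the second case it gives $x^{new}_{z(e)}=1-y_{f_p}-\widetilde y_{n_e}+\widetilde y_{n_e+1}\ge 1-\widetilde y_{n_e-1}-\widetilde y_{n_e}+\widetilde y_{n_e+1}=c$, by \eqref{lma 1.7} in Lemma~\ref{lma fractional matching consistent instances}. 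Both values are at least $c$, so this single bound settles \ref{P2}, \ref{P5} and \ref{P6} at $z(e)$; this is how the paper handles these cases. You need Claim~\ref{claim 1} anyway, since it is also what gives $x^{old}_u$ in the bridge class $\{(1,1,0),(1,0,0)\}$.

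Two smaller slips. First, in the path--path case one has $x_u=\widetilde y_{n+2}$; your expression equals $1-2c+2\widetilde y_{n+2}$, which always exceeds $\frac{5c-2}{2}$. Second, a vertex with a path edge and a spoke has type $(1,1,0)$, which is excluded from \ref{P3} for good reason: there $x$ can exceed $\frac{5c-2}{2}$. So it does not belong in your list of cases.
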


Lemma~\ref{lma main lemma additional} guarantees that a certain structure of $x$ is preserved throughout the algorithm. The properties in Lemma~\ref{lma main lemma additional} are also crucial in the construction of $y$.

To illustrate the importance of these properties, consider a scenario when an edge $e = uv$ arrives and before this arrival, we had $\deg(z(e))=2$ and $\type(\delta(z(e))) \notin \{(0,2,0), (1,1,0)\}$. In such a scenario,   as per Table~\ref{tab: edge-partitions}, we assign $e$ a spoke.  Thus, Property~\ref{P3} ensures that we can assign $e$ a value of $c - x_{z(e)}$, as in Figure~\ref{fig: cover construction} and line~\ref{line: spoke edge assigned} in Algorithm~\ref{alg: one},  while preserving~\ref{P7} and while ensuring that $y_e$ lies in $\left[1 - \frac{3c}{2}, 1 - c\right]$. Furthermore, as $\type(e)=2$ and as $\deg(z(e))=3$ after the arrival of $e$, Properties~\ref{P4} and \ref{P6} ensure that $e$ remains covered by $x$ in the future.

Let us consider a scenario when an arriving edge $e$ gets assigned a path edge and $n_e > 1$.  Property~\ref{P5} ensures that in a consistent instance with ideal value assignments, we have $x_{z(e)}\geq c - \widetilde{y}_{n_e}$, see Figure~\ref{fig: cover construction}. In general instances, if feasible with respect to~\ref{P7}, we assign $e$ the value $\widetilde{y}_{n_e}$; Otherwise, we assign $e$ the largest possible value that is feasible for~\ref{P7}. In the latter case, i.e. in the case when $e$ cannot be assigned $\widetilde{y}_{n_e}$, we show that $\degree(z(e))$ equals three and $x_{z(e)}$ before the arrival of $e$ is sufficiently large to overcome the limitation that the total increase of $x_{z(e)}+x_{w(e)}$ is now smaller than $\widetilde{y}_{n_e}$. The nature of property~\ref{P5} is to ensure the cover construction of each consistent instance is at least that of the ideal assignment case. 

Let us state the observation about the value assignments from Section~\ref{sec:fractional matching in consistent instances}. Note that~\eqref{ob: 3c/2 bound} in Observation~\ref{obs fractional matching consistent instances} follows from~\eqref{lma 1.7} in Lemma~\ref{lma fractional matching consistent instances} and from~\eqref{ob: 2} in Observation~\ref{obs fractional matching consistent instances}.
\begin{observation} 
The following properties hold:
\begin{enumerate}\label{obs fractional matching consistent instances}
    \item \label{ob: 1}
    for all natural $n$ we have $\widetilde{y}_n \leq c$ and $\widetilde{y}_{n + 1} \in \left[\frac{c}{2}, \frac{5c - 2}{2}\right]$.

    \item \label{ob: 2}
    for all natural $n$ we have $c - \widetilde{y}_{n + 1} \in \left[1 - \frac{3c}{2}, \frac{c}{2}\right]$.

    \item \label{ob: 3c/2 bound}
    for all natural $n$ we have $\widetilde{y}_{n+1} + \widetilde{y}_{n} \leq \frac{3c}{2}$. 
\end{enumerate}
\end{observation}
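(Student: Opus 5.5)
The plan is to write $\widetilde{y}_n$ in closed form and show that it is a damped oscillation around the constant $1-c$. From that, items \eqref{ob: 1} and \eqref{ob: 2} follow by monotonicity of the even- and odd-indexed subsequences, and \eqref{ob: 3c/2 bound} follows from \eqref{lma 1.7}.

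\emph{Step 1 (closed form).} For $n\ge 4$, split the defining formula by Fibonacci terms:
\[
\widetilde{y}_n=(1-c)+\tfrac{3c-2}{2}F_n+\tfrac{c}{2}F_{n-2}.
\]
Substitute $F_m=(\phi^m-\psi^m)/\sqrt5$. The coefficient of $\phi^n/\sqrt5$ is
\[
\tfrac{3c-2}{2}+\tfrac{c}{2}\phi^{-2}=\tfrac{c}{2}\Bigl(3+\tfrac{3-\sqrt5}{2}\Bigr)-1=\tfrac{c(9-\sqrt5)}{4}-1=0,
\]
because $c=4/(9-\sqrt5)$. Hence $\widetilde{y}_n=(1-c)+K\psi^n$ for a constant $K$ determined by $c$.

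The small cases are checked directly: $\widetilde{y}_1=c$, $\widetilde{y}_2=c/2$, $\widetilde{y}_3=(5c-2)/2$, and $\widetilde{y}_4=4c-2$. They are consistent with the recurrence $\widetilde{y}_{n+2}=\widetilde{y}_n+\widetilde{y}_{n+1}+c-1$, which is \eqref{lma 1.7} rewritten.

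Set $d_n:=\widetilde{y}_n-(1-c)$. This sequence satisfies $d_{n+2}=d_n+d_{n+1}$. I then verify that $d_1,d_2$ lie on the $\psi$-eigendirection, i.e. $d_2=\psi d_1$. Numerically $d_1\approx 0.183$ and $d_2\approx-0.113$; the exact check again reduces to the identity $c(9-\sqrt5)=4$. It follows that $d_n=d_1\psi^{n-1}$ for all $n\ge1$.

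\emph{Step 2 (consequences).} Since $\psi\in(-1,0)$, the terms $d_n$ alternate in sign and $|d_n|$ strictly decreases. So the odd-indexed $\widetilde{y}_n$ decrease from $\widetilde{y}_1=c$ toward $1-c$, and the even-indexed ones increase from $\widetilde{y}_2=c/2$ toward $1-c$. This gives:
\begin{itemize}
\item $\widetilde{y}_n\le \widetilde{y}_1=c$ for all $n$.
\item For $m=n+1\ge2$, $\widetilde{y}_m\in[\widetilde{y}_2,\widetilde{y}_3]=[c/2,(5c-2)/2]$, since $\widetilde{y}_3$ is the largest odd term with index at least $2$. This proves \eqref{ob: 1}.
\item Subtracting from $c$ turns the interval into $c-\widetilde{y}_{n+1}\in[c-(5c-2)/2,\,c/2]=[1-3c/2,\,c/2]$. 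This proves \eqref{ob: 2}.
\item By \eqref{lma 1.7}, $\widetilde{y}_n+\widetilde{y}_{n+1}=1-(c-\widetilde{y}_{n+2})\le 1-(1-3c/2)=3c/2$, using \eqref{ob: 2} with index $n+2$. This proves \eqref{ob: 3c/2 bound}.
\end{itemize}

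\emph{Main obstacle.} The only delicate point is the exact cancellation of the $\phi^n$ component, together with matching the initial values $\widetilde{y}_1,\widetilde{y}_2,\widetilde{y}_3$. These are defined separately from the general formula, so the recurrence must be checked by hand across the boundary $n=3,4$. Both checks reduce to the single identity $c(9-\sqrt5)=4$, and I would carry them out explicitly.
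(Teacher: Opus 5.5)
Your proof is correct and follows essentially the paper's route. The paper gets \eqref{ob: 1} and \eqref{ob: 2} from three facts: the odd-indexed $\widetilde{y}_n$ decrease from $\widetilde{y}_1=c$, the even-indexed ones increase from $\widetilde{y}_2=c/2$, and every odd term exceeds every even term (parts \eqref{lma 1.3}--\eqref{lma 1.5} of Lemma~\ref{lma fractional matching consistent instances}, derived from the difference formula $2\widetilde{y}_{n+1}=2\widetilde{y}_n-c\psi^{n-1}$); it obtains \eqref{ob: 3c/2 bound} from \eqref{lma 1.7} and \eqref{ob: 2} exactly as you do. Your closed form $\widetilde{y}_n=(1-c)+(2c-1)\psi^{n-1}$ packages those three facts at once (interleaving is immediate because odd terms sit above $1-c$ and even terms below it), and your explicit check across $n=3,4$ covers the small indices that the paper's difference formula, stated only for $n\ge 4$, leaves to direct verification.
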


Now we are ready to prove that both the values $x_v$, $v\in V$, and the values $y_f$, $f\in E$ computed by Algorithm~\ref{alg: one} are nonnegative at every timepoint. Recall that for each path edge and spoke, we differentiate between its endpoints. In particular, for each $e \in E$ with $\type(e)\in \{1,2\}$ we defined $z(e)$ and $w(e)$ to be as in line~\ref{line: choice of z} of Algorithm~\ref{alg: one}.

\begin{lemma} \label{lma x non-neg}
    If the properties \ref{P1}, \dots, \ref{P6} hold at every timepoint, then at every timepoint the values $x_v$, $v\in V$ are nonnegative.
\end{lemma}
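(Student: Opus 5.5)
We argue by induction over the arrivals. Initially every $x_w$ is $0$. By property~\ref{P4}, a value $x_w$ can only decrease when an edge $e$ incident to $w$ arrives and $\type(\delta(w)\setminus\{e\})\in\{(0,2,0),(1,1,0)\}$. Hence it suffices to consider vertices $w$ that already have two incident edges, both path edges or spokes, with types $(0,2,0)$ or $(1,1,0)$, at the moment the third edge $e$ arrives. For each such $w$ we show that the new value $x_w$ is still nonnegative. Since $\deg(w)=3$ afterwards, no further decrease of $x_w$ is possible, so this single check at the arrival of $e$ suffices.

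\textbf{Which lines can decrease $x_w$.} Inspecting Algorithm~\ref{alg: one}, the candidates are:
\begin{itemize}
\item line~\ref{line: b4 u cover change}, where the increment is $y_e-\max\{\widetilde y_{n_{f_3}+1}-y_{f_2},0\}$ and $w=u$ has type $(1,1,0)$;
\item line~\ref{line: b1 z cover change} and line~\ref{line: b2 z cover change}; these apply only to type $(1,0,0)$ endpoints, so they are irrelevant here or give nonnegative increments, which we check directly;
\item line~\ref{line: x path edge z}, where the increment is $y_e-(c-\widetilde y_{n_e+1})$ and is negative only if $y_e$ is capped by $1-\sum_{f\in\delta(z(e))\setminus\{e\}}y_f$.
\end{itemize}
The spoke update in line~\ref{line:  spoke cover change} and all $w(e)$ updates use nonnegative quantities (Observation~\ref{obs fractional matching consistent instances}\eqref{ob: 2}), so they only increase $x$.

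\textbf{Bridge case, line~\ref{line: b4 u cover change}.} Let $f_1$ be the path edge and $f_2$ the spoke at $u$. The decrease is at most $\min\{c-\widetilde y_{n_{f_1}+1},y_{f_2}\}$, because $y_e$ equals the positive part of $\widetilde y_{n_{f_3}+1}-y_{f_2}$ minus that minimum. The spoke $f_2$ is incident to $u$ with $u$ of degree two before $e$ arrived, so either $u=w(f_2)$ or $u=z(f_2)$. If $u=w(f_2)$, then property~\ref{P6} gives $x_u\ge y_{f_2}$, and the decrease is at most $y_{f_2}$. If instead $u=z(f_2)$, then $u$ would have had degree three at the spoke's arrival, contradicting $\deg(u)=2$; so this subcase does not occur. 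Either way the result is nonnegative.

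\textbf{Path-edge case, line~\ref{line: x path edge z}.} Here $\deg(z(e))=3$ with the other two edges of types $(0,2,0)$ or $(1,1,0)$, and $y_e=1-\sum_{f\in\delta(z(e))\setminus\{e\}}y_f<\widetilde y_{n_e}$. We need
\[x_{z(e)}\ge c-\widetilde y_{n_e+1}-\Bigl(1-\sum_{f\in\delta(z(e))\setminus\{e\}}y_f\Bigr).\]
By Observation~\ref{obs fractional matching consistent instances}\eqref{ob: 2}, the right-hand side is at most $\frac c2-1+\sum_f y_f$. We bound $x_{z(e)}$ from below using the incident edges.
\begin{itemize}
\item Type $(0,2,0)$: $z(e)$ is $w$ of both spokes. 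At the second spoke's arrival the same vertex received an increase equal to that spoke's value. Combining this with property~\ref{P6} for the first spoke gives $x_{z(e)}\ge y_{f_1}+y_{f_2}$, which already exceeds the required bound since $\frac c2<1$.
\item Type $(1,1,0)$: combine property~\ref{P5} for the path edge, which gives $x\ge c-\widetilde y_{n+1}$ or $x\ge\widetilde y_{n+1}$ depending on whether $z(e)$ is its $w$ or its $z$, with the spoke increment $y_{f_2}$ when $z(e)=w(f_2)$. Then use $y_{f_1}\le\widetilde y_{n_{f_1}}$, together with Observation~\ref{obs fractional matching consistent instances}\eqref{ob: 3c/2 bound} and \eqref{ob: 1}, to compare with $\frac c2-1+y_{f_1}+y_{f_2}$.
\end{itemize}

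The main obstacle is this $(1,1,0)$ path-edge case. The required inequality reduces to something like $\widetilde y_{n+1}\ge \frac c2-1+\widetilde y_n$ or $c-\widetilde y_{n+1}\ge\frac c2-1+\widetilde y_n$. This needs the Fibonacci identities of Lemma~\ref{lma fractional matching consistent instances} and careful tracking of which endpoint is $z$ versus $w$. We would handle it by first enumerating the few orientations and then applying the bound $\widetilde y_n+\widetilde y_{n+1}\le\frac{3c}{2}<2-\frac c2$.
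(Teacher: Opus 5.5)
Your argument is correct in outline, but it takes a different, dynamic route from the paper. You use the monotonicity property~\ref{P4} to isolate the one arrival at which $x_w$ can drop: the third edge at a vertex of type $(0,2,0)$ or $(1,1,0)$. You then check, line by line of Algorithm~\ref{alg: one}, that the new value stays nonnegative, using invariants only at earlier timepoints.

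The paper's proof is static. At an arbitrary timepoint it looks at an edge $f$ incident to $v$:
\begin{itemize}
\item if $f$ is a path edge, property~\ref{P5} gives $x_v\ge\widetilde y_{n_f+1}$ or $x_v\ge c-\widetilde y_{n_f+1}$, both nonnegative by Observation~\ref{obs fractional matching consistent instances};
\item if $f$ is a spoke with $v=w(f)$, properties~\ref{P6} and~\ref{P3} give $x_v\ge y_f\ge 1-\frac{3c}{2}$;
\item if $f$ is a spoke with $v=z(f)$, property~\ref{P3} just before $f$ arrived, together with part~\eqref{ob: 7} of Observation~\ref{obs edge types}, gives $x_v\ge 2c-1$;
\item a vertex all of whose edges are bridges cannot occur, by Table~\ref{tab: edge-partitions}.
\end{itemize}
No inspection of update lines is needed.

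The lemma assumes the properties at \emph{every} timepoint, including the one right after $e$ arrives, so your ``main obstacle'' disappears.
\begin{itemize}
\item In the path-edge case, property~\ref{P5} applied to $e$ itself gives $x_{z(e)}\ge\widetilde y_{n_e+1}\ge\frac c2$ after the arrival.
\item In the bridge case, property~\ref{P5} applied to $f_1$ gives $x_u\ge\min\{\widetilde y_{n_{f_1}+1},\,c-\widetilde y_{n_{f_1}+1}\}\ge 1-\frac{3c}{2}>0$.
\end{itemize}

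If you do finish your own $(1,1,0)$ computation, two corrections apply. For the orientation $z(e)=w(f_1)$ the inequality you need is $\widetilde y_n+\widetilde y_{n+1}\le 1+\frac c2$, not $2-\frac c2$; it follows from $\widetilde y_n+\widetilde y_{n+1}\le\frac{3c}{2}$ and $c<1$. Also, adding the spoke increment $y_{f_2}$ to the path-edge bound requires the arrival-order bookkeeping that Claim~\ref{claim 1} packages.

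What your route buys is that it relies only on invariants at earlier timepoints, so it could be folded into the inductive proof of the main lemmas. For the lemma as stated, the static argument is strictly shorter.
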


\begin{proof}[Proof of Lemma~\ref{lma x non-neg}]
    Assume for the sake of a contradiction that there exists $v \in V$ with $x_v < 0$ at some timepoint. If there exists $f \in \delta(v)$ with $\type(f) = 1$ then due to~\ref{P5}  we have that either $x_v \geq \widetilde{y}_{n_{f} + 1}$ or $x_v \geq c - \widetilde{y}_{n_{f} + 1}$. Hence by~\eqref{ob: 1} and~\eqref{ob: 2} in Observation~\ref{obs fractional matching consistent instances} we have that $x_v \geq 0$, contradiction. If there exists $f \in \delta(v)$ with $\type(f) = 2$ and $z(f) = v$ then due to the validity~\ref{P3} (immediately before the edge $f$ arrived) we have that $x_v \in \left[2c - 1, \frac{5c-2}{2}\right]$ and hence $x_v > 0$ contradiction. If there exists $f \in \delta(v)$ with $\type(f) = 2$ and $w(f) = v$ then due to \ref{P6} and \ref{P3}  we have $x_v \geq y_f \geq 1 - \frac{3c}{2} \geq 0$, contradiction.
    Thus, for each $f \in \delta(v)$ we have $\type(f) = 3$, but by  Table~\ref{tab: edge-partitions} this is impossible, a contradiction.
    
\end{proof}

\begin{lemma} \label{lma y non-neg}
    If the properties \ref{P1}, \dots, \ref{P6} hold at every timepoint, then at every timepoint the values $y_f$, $f\in E$ are nonnegative.
\end{lemma}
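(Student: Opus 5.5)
Since $y$ values are fixed at arrival and never changed, it suffices to show that each arriving edge $e=uv$ receives $y_e\ge 0$ at the moment of assignment. I would do a case analysis along the branches of Algorithm~\ref{alg: one}, assuming the properties \ref{P1}, \dots, \ref{P6} hold immediately before the arrival of $e$.

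\emph{Bridges.} The three cases of lines~\ref{line: b4 y assigned}, \ref{line: b3 y assigned} and \ref{line: b2 y assigned} are nonnegative by definition, since $y_e$ is a maximum with $0$. The remaining case is the $\{(1,0,0)\}$ bridge of line~\ref{line: b1 y assigned}, where
\[y_e=\widetilde{y}_{n_{f_{z(e)}}+1}-(c-\widetilde{y}_{n_{f_{w(e)}}+1}).\]
By \eqref{ob: 1} of Observation~\ref{obs fractional matching consistent instances}, $\widetilde{y}_{m+1}\ge c/2$ for all $m$. By \eqref{ob: 2}, $c-\widetilde{y}_{m+1}\le c/2$. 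Together these give $y_e\ge c/2-c/2=0$.

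\emph{Spokes.} Here $y_e=c-x_{z(e)}$, and the spoke test of line~\ref{line: test spoke edge} requires $\degree(z(e))=3$ after arrival and $\type(\delta(z(e))\setminus\{e\})\notin\{(0,2,0),(1,1,0)\}$. So before the arrival, $z(e)$ had degree $2$ with a type outside that set. Then \ref{P3} applies and gives $c-x_{z(e)}\ge 1-\tfrac{3c}{2}$. This is positive, since $c<2/3$ (numerically $c\approx0.5914$).

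\emph{Path edges.} Here
\[y_e=\min\Bigl\{\widetilde{y}_{n_e},\,1-\sum_{f\in\delta(z(e))\setminus\{e\}}y_f\Bigr\}.\]
The first term is nonnegative because $\widetilde{y}_n\ge c/2$ for $n\ge2$ by \eqref{ob: 1}, and $\widetilde{y}_1=c$. The second term is nonnegative by \ref{P7}, applied at the vertex $z(e)$ before the arrival of $e$. The edges in $\delta(z(e))\setminus\{e\}$ all arrived earlier, and their load is at most $1$ by \ref{P7} at the previous timepoint.

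The only point needing care is the bridge of line~\ref{line: b1 y assigned}. There one must check that $f_u$ and $f_v$ are genuine path edges with position indicators, so that Observation~\ref{obs fractional matching consistent instances} applies to both indices. This holds because this bridge class is $\{(1,0,0)\}$, so each of $f_u$ and $f_v$ has type $1$. The other potential subtlety is confirming that \ref{P3} applies in the spoke case, which follows from reading the spoke condition at the timepoint before arrival. Neither step is expected to be a real obstacle.
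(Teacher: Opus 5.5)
Your proof is correct and follows the paper's argument. It uses the same three-way split by edge type: bridges are handled by the $\max\{\cdot,0\}$ or by parts~(1)--(2) of Observation~\ref{obs fractional matching consistent instances}, spokes by property~\ref{P3}, and path edges by the vertex constraint~\ref{P7} at $z(e)$. Your only additions are checking $\widetilde{y}_{n_e}\ge 0$ explicitly and applying~\ref{P3} at the timepoint just before the spoke arrives, both of which the paper leaves implicit.
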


\begin{proof}[Proof of Lemma~\ref{lma y non-neg}]
    Let $f \in E$ and let us consider three possible types of $f$. If $\type(f) = 1$ then this type was assigned to $f$ in line~\ref{line: path edge assigned}, and so by line~\ref{line: y path edge} we have 
    \[y_f = \min\left\{\widetilde{y}_{n_f},\, 1 - \sum_{f_0 \in \delta(z(f)) \setminus \{f\}} y_{f_0} \right\} \geq 0\,,\] where the inequality holds due to~\ref{P7}.
    If $\type(f) = 2$ then by \ref{P3} we have $y_f \in \left[1 - \frac{3c}{2}, 1-c\right]$ hence $y_f \geq 0$.
    Finally if $\type(f) = 3$ then upon arrival of $f$ we have \[\left\{\type(\delta(u) \setminus \{f\}) \,\mid\,  u  \in \Eends(f)\right\} \in \mathcal{B\,.}\]  If $\{\type(\delta(u) \setminus \{f\}) \mid  u \in \Eends(f)\}$ is not $\{(1,0,0)\}$ then by assignments in Algorithm~\ref{alg: one} we have $y_f \geq 0$. If $\{\type(\delta(u) \setminus \{f\}) \mid  u \in \Eends(f)\}$ is $\{(1,0,0)\}$ then by line~\ref{line: b1 y assigned} in Algorithm~\ref{alg: one} we have $y_f = \widetilde{y}_{n+1} - (c - \widetilde{y}_{m+1})$ for some natural $n$ and $m$. In this case, by part~\eqref{ob: 1} in Observation~\ref{obs fractional matching consistent instances} we have $\widetilde{y}_{n+1} \geq \frac{c}{2}$ and by part~\eqref{ob: 2} in Observation~\ref{obs fractional matching consistent instances} we have $(c - \widetilde{y}_{m+1}) \leq \frac{c}{2}$. Thus in this case we have $y_f = \widetilde{y}_{n+1} - (c - \widetilde{y}_{m+1}) \geq \frac{c}{2} - \frac{c}{2} = 0$, and so $y_f \geq 0$ as required.
    
\end{proof}

\subsection{Observations about Algorithm}
In this section, we collect some observations about Algorithm~\ref{alg: one}. Each of these observations is straightforward by itself, and all of them allow us to argue about Algorithm~\ref{alg: one} efficiently.

The first observation is about general properties for each type of edge: path edges, spokes, and bridges.

\begin{observation}\label{obs edge types}
    The following properties hold:
    \begin{enumerate}
        \item  \label{ob: 4}
    for $e \in E$ with $\type(e) = 1$, we have that  $n_e$ is a natural number and $y_e \leq \widetilde{y}_{n_e}$.  
        \item \label{ob: n_e} for $e \in E$ with $\type(e) = 1$, we have that at the moment when $e$ arrives
     \[ n_e = 
        \begin{cases}
            n_f + 1 & \mbox{ if there exists\quad$ f \in \delta(z(e)) \setminus \{e\},\, \type(f) = 1$}\\
            1 & \mbox{ otherwise }\\
        \end{cases} 
    \,.\]
    \item \label{obs 2.1}   
    for $e \in E$ with $\type(e) = 1$, let us assume  $e=uv$, $\degree(v) \leq \degree(u)$ and let us assume that at the moment when $e$ arrives $\degree(v) < 3$. Then for $ \ f \in \delta(v) \setminus \{e\}$ we have $\type(f) =2$ or $\type(f) =3$ at the moment when $e$ arrives.

        \item \label{ob: 7}
    for $e \in E$ with $\type(e)=2$,  let us assume that after the arrival of $e$ we have $\deg(z(e))=3$. Then, after and at the arrival of $e$ the value $x_{z(e)}$ is not changing.
        \item \label{ob: 3}
    for $e \in E$ with $\type(e) = 3$, we have $\deg(u) \geq 2$ for all $u \in \Eends(e)$.
    \end{enumerate}
\end{observation}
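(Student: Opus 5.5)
The statement is Observation~\ref{obs edge types}. Each item is checked directly against Algorithm~\ref{alg: one} and Table~\ref{tab: edge-partitions}, item by item, with no induction on the properties.

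\textbf{Item~\eqref{ob: 4}.}
A path edge only receives its type in line~\ref{line: path edge assigned}. Just before that, line~\ref{line: def of n} sets $n_e$ to a maximum of a finite set of natural numbers containing $1$, so $n_e$ is natural. Line~\ref{line: y path edge} sets $y_e$ to a minimum that includes $\widetilde{y}_{n_e}$, which gives $y_e\le\widetilde{y}_{n_e}$.

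\textbf{Item~\eqref{ob: n_e}.}
I need that $z(e)$ has at most one incident path edge other than $e$ when $e$ becomes a path edge. Then the maximum in line~\ref{line: def of n} is either over the singleton $\{n_f+1\}$ together with $1$, giving $n_f+1\geq 2$, or just $1$. To show there is at most one such edge, I use the test in line~\ref{line: test spoke edge}.
\begin{itemize}
\item If $\deg(z(e))\le 2$, then $\delta(z(e))\setminus\{e\}$ has at most one edge.
\item If $\deg(z(e))=3$, we reach the path-edge branch only when $\type(\delta(z(e))\setminus\{e\})\in\{(0,2,0),(1,1,0)\}$. Each of these contains at most one type-1 edge.
\end{itemize}

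\textbf{Item~\eqref{obs 2.1}.}
Here $\deg(v)\le 2$ after arrival, so $\type(\delta(v)\setminus\{e\})$ is $(0,0,0)$, $(1,0,0)$, $(0,1,0)$ or $(0,0,1)$. The case $(0,0,0)$ is vacuous, and $(0,0,1)$ is an empty (impossible) row of the table. It remains to exclude $(1,0,0)$. Since $\deg(u)\ge\deg(v)$, the possible types for $u$ are those of degree at least one. I read the $(1,0,0)$ row of Table~\ref{tab: edge-partitions}:
\begin{itemize}
\item against $(1,0,0)$, $(0,1,0)$, $(1,1,0)$ and $(0,2,0)$ it gives type $3$ (these are the sets in $\mathcal B$);
\item against $(1,0,1)$, $(0,1,1)$ and $(2,0,0)$ it gives type $2$.
\end{itemize}
So $e$ cannot be of type 1 in this case, a contradiction. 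The only bookkeeping point is that the table lists $\type(\delta(u)\setminus\{e\})$ for a general ordered pair. I will note that the $(1,0,0)$ row and column agree, so it does not matter which endpoint plays which role.

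\textbf{Item~\eqref{ob: 7}.}
At the arrival of a spoke, only line~\ref{line:  spoke cover change} changes a cover value, namely $x_{w(e)}$. Also $w(e)\ne z(e)$ since there are no loops, so $x_{z(e)}$ is unchanged at that moment. For later times, I use Property~\ref{P4}: $x_{z(e)}$ can only decrease when an edge incident to $z(e)$ arrives. Increases also happen only via arriving incident edges, since every update touches endpoints of the arriving edge. But $\deg(z(e))=3$ already, so no further edge can be incident to $z(e)$ under the maximum-degree-three assumption. Hence $x_{z(e)}$ never changes afterwards.

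\textbf{Item~\eqref{ob: 3}.}
Every set in $\mathcal B$ consists of types with at least one edge, so each endpoint of a bridge had degree at least one before its arrival. Therefore each endpoint has degree at least two after the arrival.

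\textbf{Main obstacle.}
The only mildly delicate step is item~\eqref{obs 2.1}. It requires a careful reading of the table together with the convention $\deg(v)\le\deg(u)$ and its interaction with the choice of $z(e)$; the remaining items are immediate.
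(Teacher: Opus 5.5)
Your proposal is correct and follows essentially the same route as the paper, which also verifies each item directly from the relevant lines of Algorithm~\ref{alg: one}, the $(1,0,0)$ rows and columns of Table~\ref{tab: edge-partitions}, and the definition of $\mathcal{B}$. In the fourth item the appeal to Property~\ref{P4} is unnecessary: every update touches only endpoints of the arriving edge and $\deg(z(e))=3$, which already suffices, and this also avoids leaning on a property whose inductive proof itself uses this observation.
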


Parts~\eqref{ob: 4} and~\eqref{ob: n_e} follow directly from lines~\ref{line: def of n} and \ref{line: y path edge} in Algorithm~\ref{alg: one}. Part~\eqref{obs 2.1} can be obtained by inspection of the rows and columns corresponding to $(1, 0, 0)$ in Table~\ref{tab: edge-partitions}. Part~\eqref{ob: 7} is due to the treatment of spokes in lines from~\ref{line: y spoke edge} to~\ref{line:  spoke cover change} of Algorithm~\ref{alg: one}. Part~\eqref{ob: 3} can be obtained by inspection of the entries leading to a bridge in Table~\ref{tab: edge-partitions}.

The second observation is about vertices in the graph with a specific structure of the edges incident to them.

\begin{observation}\label{obs vertex neighborhood}
    The following properties hold:
    \begin{enumerate}
        \item \label{ob: 8}
    for $v\in V$ with $\delta(v)\subseteq \{f\,|\, \type(f)=2\}$ and  $|\delta(v)|\leq 2$, we have $x_v=\sum_{f\in\delta(v)}y_f$.
    \item \label{ob: 5} for $u\in V$ with $\deg(u) = 2$, $f_1, f_2 \in \delta(u)$ and $\type(f_1) = \type(f_2) = 1$, we have $|n_{f_1} - n_{f_2}| = 1$. 
    \end{enumerate}
\end{observation}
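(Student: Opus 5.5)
The statement to prove is Observation~\ref{obs vertex neighborhood}, which has two parts. Both follow by tracing how Algorithm~\ref{alg: one} acts on edges incident to the vertex in question. The proof is by induction over arrival times, looking only at the updates of $x_v$ caused by edges incident to $v$; edges not incident to $v$ never change $x_v$.

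For part~\eqref{ob: 8}, let $v$ be a vertex all of whose (at most two) incident edges are spokes. Start from $x_v=0$ when $v$ first appears. Consider each arrival of an incident spoke $f$ in order. Since $\deg(v)\le 2$ after this arrival, the test in line~\ref{line: test spoke edge} shows that $v\neq z(f)$, because that test requires $\deg(z(f))=3$. Hence $v=w(f)$, and line~\ref{line:  spoke cover change} increases $x_v$ by exactly $y_f$. No other line touches $x_v$: bridges and path edges are never incident to $v$ by assumption, and spokes not incident to $v$ update only their own endpoints. Summing over the incident spokes gives $x_v=\sum_{f\in\delta(v)}y_f$.

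For part~\eqref{ob: 5}, let $u$ have degree two with two path edges $f_1,f_2$, and say $f_1$ arrived first. When $f_2$ arrives, $\delta(u)\setminus\{f_2\}=\{f_1\}$ has type $(1,0,0)$. Let $x$ be the other endpoint of $f_2$. By Table~\ref{tab: edge-partitions}, since $f_2$ became a path edge and not a bridge, the type of $\delta(x)\setminus\{f_2\}$ must be $(0,0,0)$; every other compatible entry in the $(1,0,0)$ row gives a bridge or a spoke. Therefore $\deg(x)=1$ and $\deg(u)=2$ after the arrival. The choice of $(z,w)$ in line~\ref{line: choice of z} with the convention $\deg(v)\le\deg(u)$ then gives $z(f_2)=u$, since the degree-3 branch is not active. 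By Observation~\ref{obs edge types}\eqref{ob: n_e}, $n_{f_2}=n_{f_1}+1$. Position indicators never change after they are assigned, so $|n_{f_1}-n_{f_2}|=1$ holds at all later times while $\deg(u)=2$.

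The main point needing care is the orientation $z(f_2)=u$. I would check it directly against the tie-breaking in the algorithm's first line: $u$ is the endpoint of larger degree, and the ``otherwise'' case assigns $z=u$.
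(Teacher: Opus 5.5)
Your proof is correct and takes the same route as the paper. Part~(1) comes from the spoke update $x_{w(e)}\gets x_{w(e)}+y_e$, with $v=w(f)$ forced because a spoke's $z$-endpoint must have degree three; part~(2) comes from the $(1,0,0)$ row and column of Table~\ref{tab: edge-partitions} together with the definition of $n_e$. You just spell out in more detail what the paper leaves implicit, and the only blemish is cosmetic: naming the other endpoint of $f_2$ by the letter $x$ clashes with the cover vector $x$.
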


Part~\eqref{ob: 8} follows from line~\ref{line:  spoke cover change} in Algorithm~\ref{alg: one}. Part~\eqref{ob: 5} follows from the rows and columns corresponding to $(1,0, 0)$ in Table~\ref{tab: edge-partitions} and the line~\ref{line: def of n} in Algorithm~\ref{alg: one}.

Recall that for each path edge and spoke $e$ we differentiate between its endpoints $z(e)$ and $w(e)$ as in line~\ref{line: choice of z} of Algorithm~\ref{alg: one}. The next observation is related to the structure of these endpoints.

\begin{observation} \label{obs special endpoints}
    The following properties hold:
    \begin{enumerate}
        \item \label{remark: 1}
    for $e \in E$ with $\type(e) =1$ or $\type(e) =2$,  we have $\degree(z(e)) \geq \degree(w(e))$.
     \item \label{ob: 6} for $e \in E$ with $\type(e) =1$, at the moment when $e$ arrives we have $|\{f \in \delta(z) \setminus \{e\} \mid \type(f) = 1\}| \leq 1$. 
    \item \label{remark: 2}
    for $e \in E$ with $\type(e) = 1$ and $n_e > 2$, there is a unique $f \in \delta(z(e)) \setminus \{e\}$ such that $\type(f) = 1$ and $z(e) = w(f)$. 

    Note that for $e \in E$ with $\type(e) = 1$ and $n_e = 2$, there is an edge $f \in \delta(z(e)) \setminus \{e\}$ such that $\type(f) = 1$ and $n_f=1$, and we have  $z(e) = w(f)$ or $z(e) = z(f)$.
    \end{enumerate}
\end{observation}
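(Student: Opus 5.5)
The statement to prove is Observation~\ref{obs special endpoints}. I would prove its three parts in order, each by inspecting the branch of Algorithm~\ref{alg: one} that assigns the edge its type, together with Table~\ref{tab: edge-partitions}. Throughout, ``at the moment when $e$ arrives'' refers to the graph $G$ just after $e$ is added, as in the algorithm.

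\textbf{Part~\eqref{remark: 1}.} Path edges and spokes are assigned in the \texttt{else} branch, where $(z(e),w(e))$ is chosen in line~\ref{line: choice of z}. The algorithm assumes $\degree(v)\le\degree(u)$.
\begin{itemize}
\item In the default case $(z(e),w(e))=(u,v)$, so $\degree(z(e))\ge\degree(w(e))$ holds at arrival.
\item The case $(v,u)$ is taken only if $\degree(v)=3$. Then $\degree(u)=3$ as well, so the two degrees are equal.
\end{itemize}
The inequality is claimed ``at every timepoint'', so I also need it to survive later arrivals. Since degrees never exceed three, the danger is that $w(e)$ later gains edges while $z(e)$ does not. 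Part~\eqref{remark: 1} is probably intended for the moment of arrival, which is how it is used in the paper. If a timepoint-independent version is needed, I would state it at arrival and argue the later uses separately.

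\textbf{Part~\eqref{ob: 6}.} Suppose $z(e)$ has two path edges in $\delta(z(e))\setminus\{e\}$, so $\type(\delta(z(e))\setminus\{e\})=(2,0,0)$. Reading the row and column for $(2,0,0)$ in Table~\ref{tab: edge-partitions}:
\begin{itemize}
\item Whenever $(2,0,0)$ appears at one endpoint, the table entry is $2$ except in the boxed cells.
\item In the boxed cells, the $(2,0,0)$ endpoint is $u$ with the smaller residual $1-\sum y_f$. Line~\ref{line: choice of z} then makes it $z(e)$, and line~\ref{line: test spoke edge} turns $e$ into a spoke. This is exactly the ``without loss of generality'' convention attached to the boxes.
\end{itemize}
So $e$ can only be a path edge if the $(2,0,0)$ endpoint is $w(e)$, never $z(e)$.

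To check this against the code, I would verify that the $z$-selection rule makes line~\ref{line: test spoke edge} consistent with the table:
\begin{itemize}
\item If $\degree(z(e))=3$ and its other two edges are not of type $(0,2,0)$ or $(1,1,0)$, the edge becomes a spoke.
\item Otherwise $z(e)$ has at most one other edge, or has type $(0,2,0)$ or $(1,1,0)$. In each of these cases it has at most one incident path edge.
\end{itemize}
This is really a direct reading of line~\ref{line: test spoke edge}: a path edge is created only when $\degree(z(e))<3$ or $\type(\delta(z(e))\setminus\{e\})\in\{(0,2,0),(1,1,0)\}$. When $\degree(z(e))\le 2$ there is at most one other edge. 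That settles part~\eqref{ob: 6} cleanly.

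\textbf{Part~\eqref{remark: 2}.} By part~\eqref{ob: 6} and Observation~\ref{obs edge types}\eqref{ob: n_e}, $n_e>1$ means there is exactly one path edge $f\in\delta(z(e))\setminus\{e\}$, and $n_e=n_f+1$. Uniqueness therefore comes from part~\eqref{ob: 6}. The claim to prove is $z(e)=w(f)$ whenever $n_f\ge2$.

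I would argue by contradiction: suppose $z(e)=z(f)$.
\begin{itemize}
\item Since $n_f\ge 2$, when $f$ arrived $z(f)$ already carried a path edge $g$ with $n_f=n_g+1$.
\item So at $e$'s arrival, $z(e)=z(f)$ is incident to both $f$ and $g$, two path edges.
\item This contradicts part~\eqref{ob: 6}, provided $g$ is still a path edge. Edge types never change, so it is.
\end{itemize}
Hence $z(e)=w(f)$. For $n_e=2$ we have $n_f=1$, and both orientations are possible because $f$'s endpoints are then symmetric. This gives the stated remark.

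The main obstacle is the table-consistency check in part~\eqref{ob: 6}. I must make sure that every cell yielding type~$1$ corresponds in the algorithm to the \texttt{else} branch with the test in line~\ref{line: test spoke edge} failing, and that the boxed-cell convention matches the $z$-selection in line~\ref{line: choice of z}.
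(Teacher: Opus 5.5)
Your proposal is correct and follows the paper's own argument. Part~(1) comes from the convention $\degree(v)\le\degree(u)$ and the $z$-selection rule at the moment of arrival. That is the right reading: the every-timepoint version actually fails, for instance when $w(e)$ later receives a further path edge and then a spoke while $z(e)$ keeps degree two. Part~(2) comes from the failed test in line~\ref{line: test spoke edge}, a slightly more direct route than the paper's inspection of the $(2,0,0)$ row and column of Table~\ref{tab: edge-partitions}. Part~(3) uses the same contradiction as the paper, via the path edge at $z(f)$ that determined $n_f$.

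One harmless slip: in the boxed cells of the $(2,0,0)$ \emph{column}, the $(2,0,0)$ endpoint is $v$, not $u$, so $e$ becomes a path edge with that endpoint as $w(e)$. This agrees with your conclusion, and your direct reading of line~\ref{line: test spoke edge} makes the table discussion unnecessary anyway.
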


Part~\eqref{remark: 1} is due to the assumption $\degree(v) \leq \degree(u)$ and the definition of $z(e)$ and $w(e)$ in Algorithm~\ref{alg: one}.
To see part~\eqref{ob: 6}, consider the rows and columns corresponding to $(2, 0, 0)$ in Table~\ref{tab: edge-partitions}. To see part~\eqref{remark: 2}, consider the edge $f$ that was used to assign a value to $n_e$ in line~\ref{line: def of n} of Algorithm~\ref{alg: one}. We have $\type(f)=1$ and $n_e=n_f+1$ and so $n_f > 1$. Assume for a contradiction to~\eqref{remark: 2} that we have $z(f) = z(e)$. Due to $n_f > 1$  we have that there exists an edge $f' \in \delta(z(f))$, $\type(f') = 1$ and $n_f = n_{f'} + 1$, contradicting $\type(e) = 1$ as $\degree(z(e)) = 3$ and $\type(\delta(z(e)) \setminus \{e\}) = (2,0,0)$. Thus $\type(e) = 2$ by line~\ref{line: test spoke edge} in Algorithm~\ref{alg: one}, contradiction.

Note that part~\eqref{ob: 6} of  Observation~\ref{obs special endpoints} implies that the parameter $n_e$ is well defined for $e \in E$ with $\type(e) = 1$, i.e. there is only one choice for $n_e$. 

Finally, we prove the following claim regarding vertices of degree two, which are incident to precisely a path edge and a spoke. In the following, $x$ represents the vertex cover produced by Algorithm~\ref{alg: one}.
\begin{claim} \label{claim 1}
    Assume properties~\ref{P1}, \dots, \ref{P6} hold, then for all $v \in V$ such that $\type(\delta(v)) = (1,1,0)$ with $f_p, f_s \in \delta(v)$, $\type(f_p) = 1$ and $\type(f_s) = 2$ we have that $x_v = c - \widetilde{y}_{n_{f_p} + 1} + y_{f_s}$.
\end{claim}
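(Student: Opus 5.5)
We argue by tracking the history of $v$: since $\type(\delta(v)) = (1,1,0)$, exactly two edges have arrived at $v$, namely $f_p$ and $f_s$. We split according to which arrived first. In each case we compute the total change of $x_v$ caused by these two arrivals. No other edge touches $v$, and by the update rule only endpoints of an arriving edge change their $x$-values, so $x_v$ is exactly this sum.

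\emph{Case 1: $f_p$ arrives first.} When $f_p$ arrives, $v$ has degree one, so $v$ has the smaller degree. By part~\eqref{remark: 1} of Observation~\ref{obs special endpoints} we get $v = w(f_p)$; if both endpoints have degree one, the tie is broken by the $(u,v)$ convention, and we check that $v$ is still the $w$-endpoint or else the formula is symmetric since $n_{f_p}=1$. Lines~\ref{line: x path edge z}--\ref{line: x path edge w} then give $x_v = c-\widetilde{y}_{n_{f_p}+1}$, which is also recorded in \ref{P5} for $\deg(w(e))=1$.

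Next $f_s$ arrives. Its other endpoint $u$ cannot make $v$ reach degree $3$, since $\deg(v)=2$ afterwards. Hence the spoke test in line~\ref{line: test spoke edge} must be triggered with $z(f_s)=u$ of degree $3$, and $v=w(f_s)$; this is exactly the pattern of rows $(1,0,1),(0,1,1),(2,0,0)$ in Table~\ref{tab: edge-partitions}. Line~\ref{line:  spoke cover change} adds $y_{f_s}$ to $x_v$, giving $x_v=c-\widetilde{y}_{n_{f_p}+1}+y_{f_s}$.

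\emph{Case 2: $f_s$ arrives first.} Again $v$ has degree one on arrival, so $v=w(f_s)$ and $x_v$ becomes $y_{f_s}$ by line~\ref{line:  spoke cover change}; this is consistent with part~\eqref{ob: 8} of Observation~\ref{obs vertex neighborhood}. Then $f_p$ arrives, and $v$ has degree $2$ afterwards. For $f_p$ to be a path edge with $v=z(f_p)$, Observation~\ref{obs edge types}\eqref{ob: n_e} would give $n_{f_p}=1$, because $v$ carries no other path edge. The row/column $(0,1,0)$ entries of Table~\ref{tab: edge-partitions} together with part~\eqref{remark: 1} of Observation~\ref{obs special endpoints} force the higher-degree endpoint to be $z$. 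The exception is when the other endpoint also has degree $\le 2$, which happens in the $(0,0,0)$ or $(0,1,0)$ combinations.

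\begin{itemize}
\item If $v=w(f_p)$, line~\ref{line: x path edge w} adds $c-\widetilde{y}_{n_{f_p}+1}$ and we are done.
\item If $v=z(f_p)$ (possible only for degree ties), then $n_{f_p}=1$ and $y_{f_p}=\widetilde{y}_1=c$ unless it is capped. The cap $1-y_{f_s}$ exceeds $c$ since $y_{f_s}\le 1-c$ by \ref{P3}. So $x_v$ increases by $c-(c-\widetilde{y}_2)=\widetilde{y}_2=c/2=c-\widetilde{y}_2$, which again equals $c-\widetilde{y}_{n_{f_p}+1}$.
\end{itemize}

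Finally, we must rule out later changes. Any later arrival at $v$ would raise $\deg(v)$ to $3$, contradicting $\deg(v)=2$ at the current timepoint. The main obstacle is the bookkeeping of tie-breaking for $z$ versus $w$ at degree-one/degree-two endpoints. This needs careful reading of the $(u,v)$ convention and line~\ref{line: choice of z}, which applies only at degree $3$. We resolve it by the $n_{f_p}=1$ symmetry $\widetilde{y}_2=c-\widetilde{y}_2$, as noted in \ref{P5}.
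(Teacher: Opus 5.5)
Your proof is correct and follows essentially the paper's argument. In both, $v=w(f_s)$ contributes $y_{f_s}$, and $f_p$ contributes $c-\widetilde{y}_{n_{f_p}+1}$, either directly when $v=w(f_p)$, or, when $v=z(f_p)$, via $n_{f_p}=1$, $y_{f_p}=c$ and $\widetilde{y}_2=c-\widetilde{y}_2$; your split by arrival order only spells out explicitly what the paper handles implicitly. One small inaccuracy is your parenthetical that $v=z(f_p)$ is ``possible only for degree ties'': it is in fact forced when the other endpoint of $f_p$ was fresh, but your computation in that bullet never uses the tie, so nothing breaks.
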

\begin{proof}[Proof of Claim~\ref{claim 1}]
    First as $\degree(v) = 2$ it follows from line~\ref{line: test spoke edge} that $w(f_s) = v$ and hence by line~\ref{line:  spoke cover change} $f_s$ contributes $y_{f_s}$ to $x_{v}$. If $n_{f_p} \neq 1$ then by \eqref{remark: 1} in Observation~\ref{obs special endpoints} we have that $w(f_p) = v$ and hence following line~\ref{line: x path edge w} we have that $f_p$ contributes $c - \widetilde{y}_{n_{f_p} + 1}$ to $x_v$ as required. If $n_{f_p} = 1$ and $w(f_p) = v$ then $f_p$ contributes $c - \widetilde{y}_{n_{f_p} + 1}$ to $x_v$ for the same reasoning as in the case where $n_{f_p} \neq 1$. If $n_{f_p} = 1$ and $z(f_p) = v$ then by \ref{P3} we have that $y_{f_s} \leq 1 - c$ and hence $y_{f_p} = \widetilde{y}_1 = c$; so following line~\ref{line: x path edge z} we have that $f_p$ contributes $y_{f_p} - (c - \widetilde{y}_{n_{f_p} + 1}) = \frac{c}{2} = c - \widetilde{y}_{n_{f_p} + 1}$ to $x_v$ as required.
    
\end{proof}

\subsection{Bridge Assignments} \label{sec: bridge assignments}

In this section, we provide intuition behind the values assigned to bridges and the fractional cover updates for their endpoints in Algorithm~\ref{alg: one}. Consider an arriving edge $e$ that is a bridge, so we have $\{\type(\delta(u) \setminus \{e\}), \type(\delta(v) \setminus \{e\})\} \in \mathcal{B}$. To provide intuition, we consider each case in $\mathcal{B}$ separately.

We use the notation $x^{old}$ to denote the vertex cover before the arrival of $e$, and $x^{new}$ to denote the vertex cover after the arrival of $e$ and its respective updates. For simplicity, we assume that all edges arrived before $e$ are assigned their ideal values as in consistent instances. Similarly, we further assume that the cover $x^{old}$ satisfies the ideal cover construction depicted in Figure~\ref{fig: cover construction}. These assumptions allow us to simplify the exposition; dropping these assumptions requires more nuanced calculations, which we defer to Appendix~\ref{sec: proof properties}.

\subsubsection*{Case 1: $\{\type(\delta(u) \setminus \{e\}), \type(\delta(v) \setminus \{e\})\} = \{(1,0,0)\}$}

Let $f_u \in \delta(u) \setminus \{e\}$ and $f_v \in \delta(v) \setminus \{e\}$, thus we have $\type(f_u) = \type(f_v) = 1$, i.e. both are path edges and  $i \coloneq n_{f_u}$ and $j \coloneq n_{f_v}$. As per our assumptions, we have $y_{f_u} = \widetilde{y}_{i}$ and $y_{f_v} = \widetilde{y}_j$. 

First, let us provide the intuition for the value assigned to the edge $e$. To guarantee the approximation ratio, we ensure $\left(x^{new}_u-x^{old}_u\right)+\left(x^{new}_v-x^{old}_v\right)=y_e$ as in~\ref{P1} and $x^{new}_u+x^{new}_v\geq c$ as in~\ref{P2}. Therefore, we have to ensure 
\begin{equation*}
    y_e + x^{old}_u + x^{old}_v = y_e + (c - \widetilde{y}_{i+1}) + (c - \widetilde{y}_{j+1}) \geq c\,,
\end{equation*}
after rearranging we obtain $y_e \geq \widetilde{y}_{i+1} + \widetilde{y}_{j+1} - c$. Note that $y_e$ is assigned $\widetilde{y}_{i+1} + \widetilde{y}_{j+1} - c$  in line~\ref{line: b1 y assigned} in Algorithm~\ref{alg: one}. Intuitively, this is the minimum value we can assign $e$ to ensure~\ref{P1} and~\ref{P2}.

Now, let us provide the intuition for the assignments $x^{new}_u$ and $x^{new}_v$. Without loss of generality, we assume that $\widetilde{y}_{i+1} \leq \widetilde{y}_{j+1}$, that is for $z(e)$ and $w(e)$ as defined in line~\ref{line: b1 z, w} in Algorithm~\ref{alg: one} we have $z(e) = u$ and $w(e) = v$.
Furthermore, as we are assuming the edge assignments and cover $x^{old}$ followed the values as in consistent instances depicted in Figure~\ref{fig: cover construction}, we have
\begin{equation*}
    x^{old}_{z(e)} = c - \widetilde{y}_{i+1} \qquad \text{and} \qquad x^{old}_{w(e)} = c - \widetilde{y}_{j + 1}\,.
\end{equation*}

Since $(x^{new}_u-x^{old}_u)+(x^{new}_v-x^{old}_v)=y_e$, let us explain how to  ``distribute" $y_e$ to get $x^{new}$ from $x^{old}$. Let us assume $y_e = a + b$ where $x_u^{new} = x_u^{old} + a$ and $x^{new}_v = x^{old}_v + b$. For Property~\ref{P3} to hold we require
\begin{equation*}
    x^{new}_u = x^{old}_u + a = c - \widetilde{y}_{i+1} + a \in \left[ 2c - 1, \frac{5c - 2}{2} \right] \qquad \text{and} \qquad x^{new}_v = x^{old}_v + b = c - \widetilde{y}_{j+1} + b \in \left[ 2c - 1, \frac{5c - 2}{2} \right]
\end{equation*}

As $a = y_e - b$ and $y_e = \widetilde{y}_{i+1} + \widetilde{y}_{j+1} - c$, we have $x^{new}_u = \widetilde{y}_{j+1} - b$. Our strategy  is to ``balance" $x_u$ and $x_v$, i.e. to try to achieve $x^{new}_u = x^{new}_v$ if possible while preserving~\ref{P1}-\ref{P6}. The property which is a ``bottleneck" is Property~\ref{P3}.

On one side, to achieve $x^{new}_u = x^{new}_v$ we need to have $b = \widetilde{y}_{j+1} - \frac{c}{2}$ and $a = \widetilde{y}_{i+1} - \frac{c}{2}$; and  so if this assignment is possible we get $x^{new}_u = x^{new}_v = \frac{c}{2}$. 

On the other side, to preserve Property~\ref{P3} we require
\begin{equation*}
    x^{new}_u = c - \widetilde{y}_{i+1} + a \geq c - 1 + \widetilde{y}_i + y_e = \widetilde{y}_{j+1} - (1 - \widetilde{y}_i - \widetilde{y}_{i+1})\,.
\end{equation*}
Rearranging implies that we need $a \geq \widetilde{y}_{i+1} - c + \widetilde{y}_{j+1} - (1 - \widetilde{y}_i - \widetilde{y}_{i+1}) = y_e - (1 - \widetilde{y}_i - \widetilde{y}_{i+1})$. 
Thus to ``balance" $x_u$ and $x_v$ when possible with respect to Property~\ref{P3}, we let $a = \widetilde{y}_{i+1} - \min\{\frac{c}{2}, 1 - \widetilde{y}_i - y_e\}$ as seen in Figure~\ref{fig: cover construction for bridge 1} and given in line~\ref{line: b1 z cover change} in Algorithm~\ref{alg: one}. Finally, as $b = y_e - a$ we also get
\begin{equation*}
    b = y_e - \widetilde{y}_{i+1} + \min\{\frac{c}{2}, 1 - \widetilde{y}_i - y_e\} = \widetilde{y}_{j+1} - c + \min\{\frac{c}{2}, 1 - \widetilde{y}_i - y_e\} = \widetilde{y}_{j+1} - \max\{\frac{c}{2}, c - (1 - \widetilde{y}_i - y_e)\}\,.
\end{equation*}
as seen in Figure~\ref{fig: cover construction for bridge 1} and  given in line~\ref{line: b1 w cover change} in Algorithm~\ref{alg: one}.

\begin{figure}[H]
    \centering
    \begin{tikzpicture}[node distance={25mm}, main/.style = {draw, circle}] 
        \begin{scope}[every node/.style={circle, thick, draw, minimum size=5mm}]
            \node[main] (1ld) {$u$}; 
            \node[main] (1lu) [above of=1ld] {};
            \node[main] (2lu) [above of=1lu] {}; 

            \node[main, draw=none] (bfake) [right of=1ld] {};

            \node[main] (1rd) [right of=bfake] {$v$}; 
            \node[main] (1ru) [above of=1rd] {};
            \node[main] (2ru) [above of=1ru] {}; 
        \end{scope}

        \begin{scope}[>={Stealth[black]},
              every edge/.style={draw=black}, line width=2pt]
            \draw[my_green] (1ld) -- node[midway, right, pos=0.5, text=black] {$\widetilde{y}_i$} (1lu);
            \draw[my_green] (1lu) -- node[midway, left, pos=0.5, text=black] {} (2lu);

            \draw[my_green] (1rd) -- node[midway, left, pos=0.5, text=black] {$\widetilde{y}_j$} (1ru);
            \draw[my_green] (1ru) -- node[midway, left, pos=0.5, text=black] {} (2ru);

            \draw[my_blue, dashed] (1ld.east) to ["$ \widetilde{y}_{i+1} - (c - \widetilde{y}_{j+1})$" {above, font=\fontsize{7}{0}\selectfont, shift={(0, 0)}}, text = black] (1rd.west);

            \coordinate[xshift=-6mm] (1l) at ($(1ld)!0.5!(1lu)$);
            \coordinate[xshift=6mm] (1r) at ($(1rd)!0.5!(1ru)$);

            \coordinate[yshift=-6mm] (bl) at ($(1ld)!0.45!(1rd)$);
            \coordinate[yshift=-6mm] (br) at ($(1ld)!0.55!(1rd)$);

        \end{scope}

        \begin{scope}[>={Stealth[red]},
              every edge/.style={draw=red}]

            %path edges
            \draw[red, dashed] [->] (1l.north) to ["$c - \widetilde{y}_{i+1}$" {left, font=\fontsize{7}{0}\selectfont, shift={(-0.1cm, 0)}}, bend right=35] (1ld);

            \draw[red, dashed] [->] (1r.north) to ["$c - \widetilde{y}_{j+1}$" {right, font=\fontsize{7}{0}\selectfont, shift={(0.1cm, 0)}}, bend left=35] (1rd);

            %bridge
            \draw[red] [->] (bl.west) to ["$\widetilde{y}_{i+1} - \min\{ \frac{c}{2} \text{,} \ 1 - \widetilde{y}_{i} - y_e \}$" {below, font=\fontsize{7}{0}\selectfont, shift={(-2cm, 0)}}, bend left=35] (1ld);

            \draw[red] [->] (br.east) to ["$\widetilde{y}_{j+1} - \max\{ \frac{c}{2} \text{,} \ c - (1 - \widetilde{y}_i - y_e) \}$" {below, font=\fontsize{7}{0}\selectfont, shift={(2cm, 0)}}, bend right=35] (1rd);
        \end{scope}

    \end{tikzpicture}
    \caption{
    Edge assignment and cover updates upon the arrival of a bridge $e$ with $\{\type(\delta(u) \setminus \{e\}), \type(\delta(v) \setminus \{e\})\} = \{(1,0,0)\}$. The dashed cover assignments represent the values of $x^{old}$ under the assumption that prior to the arrival of $e$, assignments and cover construction are as per Figure~\ref{fig: cover construction}. Whereas, the solid cover assignments represent the contribution of $e$ to the updated cover $x^{new}$. In particular, the value of $x^{new}_{u}$ is the sum of the dashed and solid assignments into $u$, and the same holds for $x^{new}_v$.
    }
    \label{fig: cover construction for bridge 1}
\end{figure}

\subsubsection*{Case 2: $\{\type(\delta(u) \setminus \{e\}), \type(\delta(v) \setminus \{e\})\} = \{(1,0,0), (0,1,0)\}$}

Without loss of generality, we can assume that before the arrival of $e$ we had $f_u\in\delta(u)$ and $f_v\in \delta(v)$ such that $\type(f_u) = 1$ and $\type(f_v) = 2$. Thus as per the assignment of $z(e)$ and $w(e)$ in line~\ref{line: b2 definitions} of Algorithm~\ref{alg: one}, we have $z(e) = u$ and $w(e) = v$. Let $i \coloneq n_{f_u}$ be the position indicator for $f_u$ in its respective consistent instance. Furthermore, as stated above, we assume $y_{f_u} = \widetilde{y}_{i}$ and $y_{f_v} = 1 - \widetilde{y}_j - \widetilde{y}_{j+1}$ for some $j$. Therefore, as in Figure~\ref{fig: cover construction} we have
\begin{equation*}
    x^{old}_u = c - \widetilde{y}_{i+1} \qquad \text{and} \qquad x^{old}_v = y_{f_v}\,.
\end{equation*}
To guarantee~\ref{P2}, we need $x^{new}_u + x^{new}_v\geq c$ and so we require
\begin{equation*}
    x^{new}_u + x^{new}_v=y_e + x^{old}_u + x^{old}_v = y_e + (c - \widetilde{y}_{i+1}) + y_{f_v} \geq c\,,
\end{equation*}
leading to $y_e \geq \widetilde{y}_{i+1} - y_{f_v}$. Note, that by \eqref{ob: 2} in Observation~\ref{obs fractional matching consistent instances}  we have $y_{f_v} \in [1- \frac{3c}{2}, \frac{c}{2}]$. By Observation~\ref{ob: 1} we have $\widetilde{y}_{i+1} \geq c/2$.
Therefore, $\max\{\widetilde{y}_{i+1} - y_{f_v}, 0\} = \widetilde{y}_{i+1} - y_{f_v}$; and so under the current assumptions, assigning $y_e$ the value of $\widetilde{y}_{i+1} - y_{f_v}$ is precisely the same as assigning $y_e$ the value $\max\{\widetilde{y}_{i+1} - y_{f_v}, 0\}$ in line~\ref{line: b2 y assigned} in Algorithm~\ref{alg: one}.

Let us now consider how to ``distribute" $y_e$ to define $x^{new}$. Let us assume $y_e = a + b$ where $x^{new}_u = x^{old}_u + a$ and $x^{new}_v = x^{old}_v + b$. To satisfy Property~\ref{P3}, we require
\begin{equation*}
    x^{new}_u = c - \widetilde{y}_{i+1} + a \in \left[2c - 1, \frac{5c-2}{2}\right] \qquad \text{and} \qquad x^{new}_v = y_{f_v} + b \in \left[2c - 1, \frac{5c-2}{2}\right]\,.
\end{equation*}
 Therefore, we require $y_{f_v} + b \geq 2c-1$, and so $b \geq (2c-1) - y_{f_v}$. To satisfy Property~\ref{P5}, we require $x^{new}_v \geq x^{old}_v$, and so $b\geq 0$. Taking the maximum of these lower bounds on $b$, we get precisely the cover update in line~\ref{line: b2 w cover change} in Algorithm~\ref{alg: one} as seen in Figure~\ref{fig: cover construction for bridge 2}. So we let $b$ be $\max\{(2c-1) - y_{f_v}, 0\}$. Now we can retrieve the appropriate value of $a$ since $a=y_e-b$ by construction.

\begin{figure}[H]
    \centering
    \begin{tikzpicture}[node distance={25mm}, main/.style = {draw, circle}] 
        \begin{scope}[every node/.style={circle, thick, draw, minimum size=5mm}]
            \node[main] (1ld) {$u$}; 
            \node[main] (1lu) [above of=1ld] {};
            \node[main] (2lu) [above of=1lu] {}; 

            \node[main, draw=none] (bfake) [right of=1ld] {};

            \node[main] (1rd) [right of=bfake] {$v$}; 
            \node[main] (1ru) [above of=1rd] {};
            \node[main] (2rl) [left of=1ru] {}; 
            \node[main] (2rr) [right of=1ru] {}; 
        \end{scope}

        \begin{scope}[>={Stealth[black]},
              every edge/.style={draw=black}, line width=2pt]
            \draw[my_green] (1ld) -- node[midway, right, pos=0.5, text=black] {$\widetilde{y}_i$} (1lu);
            \draw[my_green] (1lu) -- node[midway, left, pos=0.5, text=black] {} (2lu);

            \draw[my_red, decorate,decoration=snake] (1rd) -- node[midway, left, pos=0.5, text=black] {$y_{f_v}$} (1ru);
            \draw[my_green] (1ru) -- node[midway, left, pos=0.5, text=black] {} (2rl);
            \draw[my_green] (1ru) -- node[midway, left, pos=0.5, text=black] {} (2rr);

            \draw[my_blue, dashed] (1ld.east) to ["$\widetilde{y}_{i+1} - y_{f_v}$" {above, font=\fontsize{7}{0}\selectfont, shift={(0, 0)}}, text = black] (1rd.west);

            \coordinate[xshift=-6mm] (1l) at ($(1ld)!0.5!(1lu)$);
            \coordinate[xshift=6mm] (1r) at ($(1rd)!0.5!(1ru)$);

            \coordinate[yshift=-6mm] (bl) at ($(1ld)!0.45!(1rd)$);
            \coordinate[yshift=-6mm] (br) at ($(1ld)!0.55!(1rd)$);

        \end{scope}

        %dual construction
        \begin{scope}[>={Stealth[red]},
              every edge/.style={draw=red}]

            %path edges
            \draw[red, dashed] [->] (1l.north) to ["$c - \widetilde{y}_{i+1}$" {left, font=\fontsize{7}{0}\selectfont, shift={(-0.1cm, 0)}}, bend right=35] (1ld);

            \draw[red, dashed] [->] (1r.north) to ["$y_{f_v}$" {right, font=\fontsize{7}{0}\selectfont, shift={(0.1cm, 0)}}, bend left=35] (1rd);

            %bridge
            \draw[red] [->] (bl.west) to ["$y_e - \max\{ (2c - 1) - y_{f_v} \text{,} \ 0 \}$" {below, font=\fontsize{7}{0}\selectfont, shift={(-1cm, 0)}}, bend left=35] (1ld);

            \draw[red] [->] (br.east) to ["$\max\{ (2c - 1) - y_{f_v } \text{,} \ 0 \}$" {below, font=\fontsize{7}{0}\selectfont, shift={(1cm, 0)}}, bend right=35] (1rd);
        \end{scope}

    \end{tikzpicture}
    \caption{
    Edge assignment and cover updates upon the arrival of a bridge $e$ with $\{\type(\delta(u) \setminus \{e\}), \type(\delta(v) \setminus \{e\})\} = \{(1,0,0), (0,1,0)\}$. The dashed cover assignments represent the values of $x^{old}$ under the assumption that prior to the arrival of $e$, assignments and cover construction are as per Figure~\ref{fig: cover construction}. Whereas, the solid cover assignments represent the contribution of $e$ to the updated cover $x^{new}$. In particular, the value of $x^{new}_{u}$ is the sum of the dashed and solid assignments into $u$, and the same holds for $x^{new}_v$.
    }
    \label{fig: cover construction for bridge 2}
\end{figure}

\subsubsection*{Case 3: $\{\type(\delta(u) \setminus \{e\}), \type(\delta(v) \setminus \{e\})\} = \{(1,0,0), (0,2,0)\}$}

Recall that $\degree(v) \leq \degree(u)$ and so we have $\type(\delta(v) \setminus \{e\}) = (1,0,0)$ and $\type(\delta(u) \setminus \{e\}) = (0,2,0)$. Let us assume that $f_v \in \delta(v) \setminus \{e\}$ and $f_u^1, f_u^2 \in \delta(u) \setminus \{e\}$, where $\type(f_v) = 1$ and $\type(f_u^1) = \type(f_u^2) = 2$. Let $i \coloneq n_{f_v}$ be the position indicator for $f_v$ in its respective instance. Furthermore, due to our assumptions  we have $y_{f_v} = \widetilde{y}_{i}$, and also by \eqref{ob: 2} in Observation~\ref{obs fractional matching consistent instances}  we have $y_{f_u^1}, y_{f_u^2} \in [1 - \frac{3c}{2}, \frac{c}{2}]$. So following the cover construction in Figure~\ref{fig: cover construction}, we have
\begin{equation*}
    x^{old}_u = y_{f_u^1} + y_{f_u^2} \qquad \text{and} \qquad x^{old}_v = c - \widetilde{y}_{i+1}.
\end{equation*}
To guarantee~\ref{P2}, we need $x^{new}_u + x^{new}_v\geq c$ and so we require
\begin{equation*}
    x^{new}_u + x^{new}_v=y_e + x^{old}_u + x^{old}_v = y_e + (y_{f_u^1} + y_{f_u^2}) + (c - \widetilde{y}_{i+1}) \geq c\,,
\end{equation*}
leading to $y_e \geq \widetilde{y}_{i+1} - y_{f_u^1} - y_{f_u^2}$. Notice that after the arrival of $e$, the degree of $u$ becomes three, and so no further arriving edges are incident to $u$. This observation motivates us to keep $x_u$ unchanged, i.e. to have $x^{new}_u = x^{old}_u$ while  guaranteeing~\ref{P2} for $f^1_u$ and $f^2_u$. Since we decide to have $x^{new}_u = x^{old}_u$, we ``distribute" the whole $y_e$ to $x_v$, i.e. we have $x^{new}_v = x^{old}_v + y_e$. Now, to adhere to Property~\ref{P3}, we require
\begin{equation*}
    x^{new}_v = x^{old}_v + y_e = c - \widetilde{y}_{i+1} + y_e \in \left[2c -1, \frac{5c - 2}{2}\right]\,.
\end{equation*}

Due to our assumptions on $y_{f_u^1}$ and $y_{f_u^2}$,  we have $c - y_{f_u^1}, c-y_{f_u^2} \in [2c-1, \frac{5c-2}{2}]$ by Property~\ref{P3}. We aim to have $x^{new}_v = c - y_{f_u^1}$ or $x^{new}_v = c - y_{f_u^2}$, because this would guarantee~\ref{P3} with respect to $x^{new}_v$. Thus to guarantee~\ref{P3}, we can select a nonnegative value for $y_e$ such that $y_e \geq \widetilde{y}_{i+1} - \max\{ y_{f_u^1}, y_{f_u^2}\}$. That corresponds precisely to the assignment in line~\ref{line: b3 y assigned} in Algorithm~\ref{alg: one}. 
Note that by Observation~\ref{obs fractional matching consistent instances} we have $\max\{ y_{f_u^1}, y_{f_u^2} \} \leq \frac{c}{2}$ and $\widetilde{y}_{i+1}\geq c/2$. Hence, we can assign $y_e$ precisely $\widetilde{y}_{i+1} - \max\{ y_{f_u^1}, y_{f_u^2}\}$, because this is a nonnegative value.

Now since $y_e = \widetilde{y}_{i+1} - \max\{ y_{f_u^1}, y_{f_u^2} \}$, we have
\begin{equation*}
    x^{new}_v = c - \widetilde{y}_{i+1} + y_e = c - \widetilde{y}_{i+1} + \widetilde{y}_{i+1} - \max\{ y_{f_u^1}, y_{f_u^2} \} = c - \max\{ y_{f_u^1}, y_{f_u^2} \}
\end{equation*}
as required.

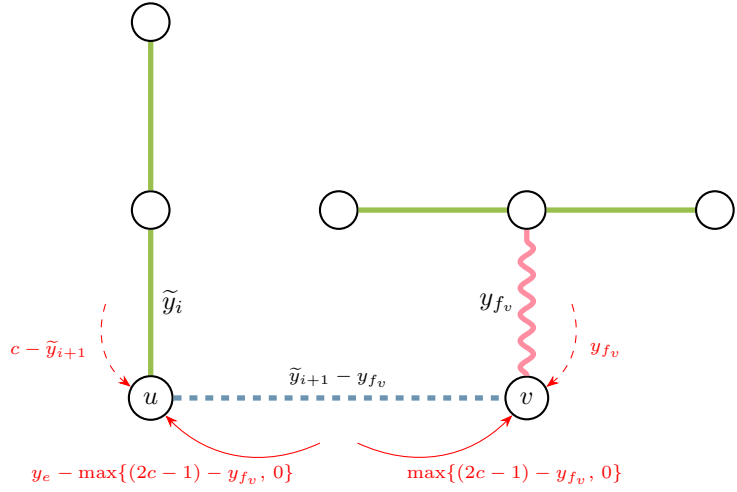
\begin{figure}[H]
    \centering
    \begin{tikzpicture}[node distance={25mm}, main/.style = {draw, circle}] 
        \begin{scope}[every node/.style={circle, thick, draw, minimum size=5mm}]
            \node[main] (1ld) {$v$}; 
            \node[main] (1lu) [above of=1ld] {};
            \node[main] (2lu) [above of=1lu] {}; 

            \node[main, draw=none] (bfake) [right of=1ld] {};

            \node[main] (1rd) [right of=bfake] {$u$}; 
            \node[main] (1ru) [above of=1rd] {};
            \node[main] (2rl) [left of=1ru] {}; 
            \node[main] (2rr) [right of=1ru] {}; 

            \node[main] (1rdd) [below of=1rd] {};
            \node[main] (2rdl) [left of=1rdd] {}; 
            \node[main] (2rdr) [right of=1rdd] {}; 
        \end{scope}

        \begin{scope}[>={Stealth[black]},
              every edge/.style={draw=black}, line width=2pt]
            \draw[my_green] (1ld) -- node[midway, right, pos=0.5, text=black] {$\widetilde{y}_{i}$} (1lu);
            \draw[my_green] (1lu) -- node[midway, left, pos=0.5, text=black] {} (2lu);

            \draw[my_red, decorate,decoration=snake] (1rd) -- node[midway, left, pos=0.5, text=black] {$y_{f_u^1}$} (1ru);
            \draw[my_green] (1ru) -- node[midway, left, pos=0.5, text=black] {} (2rl);
            \draw[my_green] (1ru) -- node[midway, left, pos=0.5, text=black] {} (2rr);

            \draw[my_red, decorate,decoration=snake] (1rd) -- node[midway, left, pos=0.5, text=black] {$y_{f_u^2}$} (1rdd);
            \draw[my_green] (1rdd) -- node[midway, left, pos=0.5, text=black] {} (2rdl);
            \draw[my_green] (1rdd) -- node[midway, left, pos=0.5, text=black] {} (2rdr);

            \draw[my_blue, dashed] (1ld.east) to ["$\widetilde{y}_{i + 1} - \max\{ y_{f_u^1} \text{,} \ y_{f_u^2} \}$" {above, font=\fontsize{7}{0}\selectfont, shift={(0, 0.15)}}, text = black] (1rd.west);

            \coordinate[xshift=-6mm] (1l) at ($(1ld)!0.5!(1lu)$);
            \coordinate[xshift=6mm] (1r) at ($(1rd)!0.5!(1ru)$);
            \coordinate[xshift=6mm] (1r') at ($(1rd)!0.5!(1rdd)$);

            \coordinate[yshift=-6mm] (bl) at ($(1ld)!0.45!(1rd)$);
            \coordinate[yshift=-6mm] (br) at ($(1ld)!0.55!(1rd)$);

        \end{scope}

        %dual construction
        \begin{scope}[>={Stealth[red]},
              every edge/.style={draw=red}]

            %path edges
            \draw[red, dashed] [->] (1l.north) to ["$c - \widetilde{y}_{i+1}$" {left, font=\fontsize{7}{0}\selectfont, shift={(-0.1cm, 0)}}, bend right=35] (1ld);

            \draw[red, dashed] [->] (1r.north) to ["$y_{f_u^1}$" {right, font=\fontsize{7}{0}\selectfont, shift={(0.1cm, 0)}}, bend left=35] (1rd);

            \draw[red, dashed] [->] (1r'.south) to ["$y_{f_u^2}$" {right, font=\fontsize{7}{0}\selectfont, shift={(0.1cm, 0)}}, bend right=35] (1rd);

            %bridge
            \draw[red] [->] (bl.west) to ["$y_e$" {below, font=\fontsize{7}{0}\selectfont, shift={(0, 0)}}, bend left=35] (1ld);

        \end{scope}

    \end{tikzpicture}
    \caption{
    Edge assignment and cover updates upon the arrival of a bridge $e$ with $\{\type(\delta(u) \setminus \{e\}), \type(\delta(v) \setminus \{e\})\} = \{(1,0,0), (0,2,0)\}$. The dashed cover assignments represent the values of $x^{old}$ under the assumption that prior to the arrival of $e$, assignments and cover construction are as per Figure~\ref{fig: cover construction}. Whereas, the solid cover assignments represent the contribution of $e$ to the updated cover $x^{new}$. In particular, the value of $x^{new}_{u}$ is the sum of the dashed and solid assignments into $u$, and the same holds for $x^{new}_v$.
    }
    \label{fig: cover construction for bridge 3}
\end{figure}

\subsubsection*{Case 4: $\{\type(\delta(u) \setminus \{e\}), \type(\delta(v) \setminus \{e\})\} = \{(1,1,0), (1,0,0)\}$}

Due to the assumption that $\degree(v) \leq \degree(u)$ we have $\type(\delta(v) \setminus \{e\}) = (1,0,0)$ and $\type(\delta(u) \setminus \{e\}) = (1,1,0)$. Let $f_v \in \delta(v) \setminus \{e\}$ and $f^1_u, f^2_u \in \delta(u) \setminus \{e\}$, that is $\type(f^1_u) = \type(f_v) = 1$ and $\type(f^2_u) = 2$. Let $i \coloneq n_{f^1_u}$ and $j \coloneq n_{f_v}$, thus as stated above we shall assume that $y_{f^1_u} = \widetilde{y}_i$ and $y_{f_v} = \widetilde{y}_j$, and by \eqref{ob: 2} in Observation~\ref{obs fractional matching consistent instances} $y_{f_u^2} \in [1-\frac{3c}{2}, \frac{c}{2}]$. Therefore, following the cover construction in Figure~\ref{fig: cover construction}, we have
\begin{equation*}
    x^{old}_u = c - \widetilde{y}_{i+1} + y_{f_u^2} \qquad \text{and} \qquad x^{old}_v = c - \widetilde{y}_{j+1}
\end{equation*}
To ensure $x^{new}_u + x^{new}_v\geq c$ as in~\ref{P2}, we require
\begin{equation*}
    x^{new}_u + x^{new}_v=y_e +  x^{old}_u + x^{old}_v  = y_e + (c - \widetilde{y}_{i+1} + y_{f_u^2}) + (c - \widetilde{y}_{j+1}) \geq c\,,
\end{equation*}
leading to $y_e \geq \widetilde{y}_{j+1} - y_{f_u^2} - (c - \widetilde{y}_{i+1})$.  Notice that after the arrival of $e$, the degree of $u$ becomes three, and so no further arriving edges are incident to $u$.  As in the previous case, we employ a similar strategy, although this time we no longer have $x^{new}_v=x^{old}_v+y_e$.

Let $y_e = a + b$, where $x^{new}_u = x^{old}_u + a$ and $x^{new}_v = x^{old}_v + b$. If we enforce
\begin{equation*}
    c - y_{f_u^2} = x^{new}_v = x^{old}_v + b = c - \widetilde{y}_{j+1} + b
\end{equation*}
then $x^{new}_v$ as in the previous case satisfies Property~\ref{P3}. Also if we enforce the above equality $c - y_{f_u^2}= c - \widetilde{y}_{j+1} + b$ then
we have $b = \widetilde{y}_{j+ 1} - y_{f_u^2}$, which is nonnegative by our assumptions on $y_{f_u^2}$. In a straightforward manner, from $b = \widetilde{y}_{j+ 1} - y_{f_u^2}$ and $y_e=a+b$ we also get  $a = y_e - b = y_e - (\widetilde{y}_{j+ 1} - y_{f_u^2})$, as seen in Figure~\ref{fig: cover construction for bridge 4}. So we have
\begin{equation*}
    x^{new}_u = x^{old}_u + a = c - \widetilde{y}_{i+1} + y_{f_u^2} + y_e - (\widetilde{y}_{j+1} - y_{f_u^2}) = c - \widetilde{y}_{i+1} - \widetilde{y}_{j+1} + 2y_{f_u^2} + y_e\,.
\end{equation*}

To ensure that~\ref{P2} holds for $f_u^1$ we require $x^{new}_u \geq c - \widetilde{y}_{i+1}$; similarly, to ensure~\ref{P2} for $f_u^2$ we require $x^{new}_u \geq y_{f_u^2}$ because these inequalities are guaranteeing~\ref{P2} in consistent instances. Therefore we require $x^{new}_u \geq \max\{ c - \widetilde{y}_{i+1}, y_{f_u^2} \}$. For this, we need
\begin{equation*}
    x^{new}_u = c - \widetilde{y}_{i+1} - \widetilde{y}_{j+1} + 2y_{f_u^2} + y_e \geq \max\{ c - \widetilde{y}_{i+1}, y_{f_u^2} \}
\end{equation*}
leading to $y_e \geq \max\{ c - \widetilde{y}_{i+1}, y_{f_u^2} \} + \widetilde{y}_{i+1} + \widetilde{y}_{j+1} - c - 2y_{f_u^2}$. Using $\max\{ c - \widetilde{y}_{i+1}, y_{f_u^2} \} = c - \widetilde{y}_{i+1} + y_{f_u^2} - \min\{c- \widetilde{y}_{i+1}, y_{f_u^2} \}$ we get
\begin{equation*}
    y_e \geq c - \widetilde{y}_{i+1} + y_{f_u^2} - \min\{c- \widetilde{y}_{i+1}, y_{f_u^2} \} + \widetilde{y}_{i+1} + \widetilde{y}_{j+1} - c - 2y_{f_u^2} = \widetilde{y}_{j+1} - y_{f_u^2} -  \min\{c- \widetilde{y}_{i+1}, y_{f_u^2} \}
\end{equation*}
taking the maximum with zero to ensure $y_e \geq 0$, we get precisely the assignment given in line~\ref{line: b4 y assigned} in Algorithm~\ref{alg: one}.

\begin{figure}[H]
    \centering
    \begin{tikzpicture}[node distance={25mm}, main/.style = {draw, circle}] 
        \begin{scope}[every node/.style={circle, thick, draw, minimum size=5mm}]
            \node[main] (1ld) {$v$}; 
            \node[main] (1lu) [above of=1ld] {};
            \node[main] (2lu) [above of=1lu] {}; 

            \node[main, draw=none] (bfake) [right of=1ld] {};

            \node[main] (1rd) [right of=bfake] {$u$}; 
            \node[main] (1ru) [below of=1rd, right of=1rd] {};
            \node[main] (2ru) [left of=1ru] {};

            \node[main] (1rdd) [above of=1rd] {};
            \node[main] (2rdl) [left of=1rdd] {}; 
            \node[main] (2rdr) [right of=1rdd] {}; 
        \end{scope}

        \begin{scope}[>={Stealth[black]},
              every edge/.style={draw=black}, line width=2pt]
            \draw[my_green] (1ld) -- node[midway, right, pos=0.5, text=black] {$\widetilde{y}_{j}$} (1lu);
            \draw[my_green] (1lu) -- node[midway, left, pos=0.5, text=black] {} (2lu);

            \draw[my_green] (1rd) -- node[midway, left, pos=0.5, text=black] {$\widetilde{y}_i$} (1ru);
            \draw[my_green] (1ru) -- node[midway, left, pos=0.5, text=black] {} (2ru);

            \draw[my_red, decorate,decoration=snake] (1rd) -- node[midway, left, pos=0.5, text=black] {$y_{f_u^2}$} (1rdd);
            \draw[my_green] (1rdd) -- node[midway, left, pos=0.5, text=black] {} (2rdl);
            \draw[my_green] (1rdd) -- node[midway, left, pos=0.5, text=black] {} (2rdr);

            \draw[my_blue, dashed] (1ld.east) to ["$\max\{ \widetilde{y}_{j+1} - y_{f_u^2}$" {above, font=\fontsize{7}{0}\selectfont, shift={(0, 0)}}, text = black] ["$- \min\{ c - \widetilde{y}_{i + 1} \text{,} \ y_{f_u^2} \} \text{,} \ 0 \}$" {below, font=\fontsize{7}{0}\selectfont, shift={(0, 0)}}, text = black] (1rd.west);

            \coordinate[xshift=-6mm] (1l) at ($(1ld)!0.5!(1lu)$);
            \coordinate[xshift=6mm] (1r) at ($(1rd)!0.5!(1ru)$);
            \coordinate[xshift=6mm] (1r') at ($(1rd)!0.5!(1rdd)$);

            \coordinate[yshift=-6mm] (bl) at ($(1ld)!0.45!(1rd)$);
            \coordinate[yshift=-6mm] (br) at ($(1ld)!0.55!(1rd)$);

        \end{scope}

        %dual construction
        \begin{scope}[>={Stealth[red]},
              every edge/.style={draw=red}]

            %path edges
            \draw[red, dashed] [->] (1l.north) to ["$c - \widetilde{y}_{j+1}$" {left, font=\fontsize{7}{0}\selectfont, shift={(-0.1cm, 0)}}, bend right=35] (1ld);

            \draw[red, dashed] [->] (1r.north) to ["$c - \widetilde{y}_{i + 1}$" {right, font=\fontsize{7}{0}\selectfont, shift={(0.1cm, 0)}}, bend right=35] (1rd);

            \draw[red, dashed] [->] (1r'.south) to ["$y_{f_u^2}$" {right, font=\fontsize{7}{0}\selectfont, shift={(0.1cm, 0)}}, bend left=35] (1rd);

            %bridge
            \draw[red] [->] (bl.west) to ["$\widetilde{y}_{j + 1} - y_{f_u^2}$" {below, font=\fontsize{7}{0}\selectfont, shift={(-1, 0)}}, bend left=35] (1ld);

            %bridge
            \draw[red] [->] (br.east) to ["$y_e - ( \widetilde{y}_{j+1} - y_{f_u^2})$" {below, font=\fontsize{7}{0}\selectfont, shift={(0.1, 0)}}, bend right=35] (1rd);

        \end{scope}

    \end{tikzpicture}
    \caption{
    Edge assignment and cover updates upon the arrival of a bridge $e$ with $\{\type(\delta(u) \setminus \{e\}), \type(\delta(v) \setminus \{e\})\} = \{(1,1,0), (1,0,0)\}$. The dashed cover assignments represent the values of $x^{old}$ under the assumption that prior to the arrival of $e$, assignments and cover construction are as per Figure~\ref{fig: cover construction}. Whereas, the solid cover assignments represent the contribution of $e$ to the updated cover $x^{new}$. In particular, the value of $x^{new}_{u}$ is the sum of the dashed and solid assignments into $u$, and the same holds for $x^{new}_v$.
    }
    \label{fig: cover construction for bridge 4}
\end{figure}

\section{Upper Bound for Integral Matchings for Maximum Degree Three} \label{sec: int alg}

In the previous sections, we showed that Algorithm~\ref{alg: one} achieves the guarantee $c$ for fractional matchings. In this section, we show that the guarantee $c$ cannot be achieved for integral matchings. In particular, in this section, we prove the following theorem.

\begin{theorem} \label{thm: int alg deg 3}
For integral matchings in the adversarial edge arrival model, no algorithm achieves a guarantee larger than $0.58065$ on graphs of maximum degree three.
\end{theorem}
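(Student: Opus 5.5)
We prove the theorem with an adversarial construction. The adversary plays against an arbitrary deterministic online algorithm for integral matchings. Since it may adapt to the algorithm's choices, it branches on every accept/reject decision. We consider deterministic algorithms; a randomized algorithm is handled by the same tree with its choices replaced by probabilities of acceptance, which amounts to a fractional-style analysis. Because integrality is exactly what separates this setting from the fractional case, we exploit it directly: once the algorithm accepts an edge $e=uv$, both $u$ and $v$ are saturated, and every later edge at $u$ or $v$ is useless to the algorithm. Once it rejects an edge, that edge is lost forever.

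The construction grows a bounded-depth game tree of small graphs of maximum degree three.
\begin{enumerate}
\item Start with an edge $e_1=ab$, as in the consistent instances of Section~\ref{sec: consistent instances}.
\item If the algorithm rejects $e_1$, stop: the ratio is $0$.
\item If it accepts, attach pendant edges at $a$ and $b$. These are path edges $e^l_2, e^r_2$ whose far endpoints are new vertices. The algorithm may accept some of them, for example one pendant.
\item At each stage, depending on which edges are matched, the adversary either stops or appends further edges. It appends them at the unmatched degree-$\le 2$ endpoints, so that the offline optimum can reroute along an augmenting path. These are spokes in the spirit of $\hat e^l_i$, and wherever possible triangles or odd structures that general graphs allow. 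In each branch the adversary extends the alternating path and keeps the algorithm's matched edges at degree-three vertices, so they cannot be swapped out.
\item At every leaf, record the ratio $|M_{\mathrm{alg}}|/|M^*|$ of the algorithm's matching to the offline maximum matching.
\end{enumerate}
The algorithm's best strategy is the max over its choices, and the adversary's best response is the min over its stopping options. Evaluating this min--max over the finite tree should give a value of $18/31\approx 0.580645$. The bound is stated as $0.58065$ rather than $0.5807$ on purpose, and $18/31$ is the fraction I would aim for: some leaf has $|M_{\mathrm{alg}}|=18$ against $|M^*|=31$ after the tree is repeated or stitched.

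To reach such fractions, the key device is repetition. Integral decisions yield small rational ratios like $1/2, 2/3, 3/5$. We therefore take disjoint copies of gadgets and exploit that the algorithm must commit in each copy before seeing the next. The optimum decomposes over connected components, and so does the algorithm's matching, so the adversary combines gadgets whose local ratios average to the target. Concretely, let $c^*$ be the claimed bound. I would define a potential: for each partial gadget state, the quantity ``algorithm value minus $c^*$ times optimum.'' I would then show by case analysis that every algorithm response leads to a state where the adversary can keep this quantity negative, either by stopping or by extending. Finally, I would show the adversary can force it to be negative in a way that stays bounded away from zero when copies are taken.

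The main obstacle is designing the gadget tree itself and getting the exact constant. The max-degree-three restriction severely limits how many extensions each vertex tolerates, and the case analysis branches quickly. I would first try to mirror the Buchbinder et al.\ construction with $n=3$ or $4$ rounds. Then I would replace each fractional value $\widetilde y_n$ by an integral accept/reject branch, using the fact that an accepted edge blocks its endpoints and therefore forbids the hedging that gives fractional algorithms $c\approx 0.5914$. I would verify the finitely many leaves numerically, or by an LP over the algorithm's branch probabilities to cover randomized algorithms, to certify the value $0.58065$.
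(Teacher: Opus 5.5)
There is a genuine gap. The proposal never specifies an instance; you concede that designing the gadget tree is ``the main obstacle.'' So no constraint is ever derived, and $18/31$ is guessed rather than obtained, even though it is the exact value of the paper's LP.

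The framework you set up also could not produce that value:
\begin{itemize}
\item Against a deterministic algorithm with an adaptive adversary, the bound is trivially at most $1/2$. The algorithm must accept $e_1=ab$. Pendant edges at $a$ and $b$ then give $|M^*|=2$ against $1$, since the algorithm cannot take a pendant at a saturated vertex, contrary to your step 3. So a min--max over deterministic accept/reject branches never evaluates to $18/31$.
\item The theorem is only meaningful for randomized algorithms, and then the adversary is oblivious. It cannot branch on realized choices; it can only pick among fixed arrival sequences that share common prefixes. The argument must therefore be about marginal probabilities.
\item Replacing choices by acceptance probabilities (your ``fractional-style analysis'') discards exactly the integrality you want to exploit. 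Saturation of endpoints is already the fractional constraint $\sum_{f\in\delta(u)}y_f\le 1$. What integrality adds is a \emph{joint} constraint on when two vertices are free simultaneously.
\item Your repetition device points the wrong way. On a disjoint union of non-interacting gadgets, $\sum|M_{\mathrm{alg}}|/\sum|M^*|$ is a mediant of the component ratios, so it is never below the worst component. Values like $18/31$ come from LP duality, i.e.\ from weighting constraints that arise at different continuations and stopping points, not from stitching copies together.
\end{itemize}

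The paper supplies exactly these missing pieces.
\begin{itemize}
\item Two disjoint paths of length two, $e^u_1,e^u_2$ and $e^d_1,e^d_2$, arrive. They are randomly relabelled so that both copies have the same marginals $x_1,x_2$.
\item In the first continuation, a pendant edge at the common vertex of $e_1,e_2$ arrives, together with an edge joining their far endpoints, which closes a triangle. Since $e_1$ and $e_2$ are never both chosen, each new edge can be matched only with probability $1-x_1-x_2$. A fractional algorithm could instead put $\min\{1-x_1,1-x_2\}$ on the triangle edge; this is where integrality bites.
\item In the second continuation, each path is extended by an edge with marginal $x^u_3$ (resp.\ $x^d_3$). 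Then come a spoke at the shared endpoint of $e_1$ and this new edge, and an edge with marginal $x_4$ joining the far ends of the two new edges. Finally, pendants arrive at those far ends.
\item Sum the following constraints: the $5\gamma$- and $6\gamma$-constraints, the $4\gamma$-constraint of the second continuation, three times the $4\gamma$-constraint of the first continuation, and twice $2x_1\ge 2\gamma$. This gives exactly $31\gamma\le 18$.
\end{itemize}

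To turn your plan into a proof, you need such an explicit oblivious instance family together with the correlation-based bound. Your remark about ``triangles or odd structures'' is in the right spirit, but that is precisely where the argument has to be made.
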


Let $\mathcal{ALG}$ be a randomized algorithm for integral matchings. Let $\gamma$ be the competitive ratio achievable by $\mathcal{ALG}$.
Consider the following graph in Figure~\ref{fig: int deg 4 start}. Here, 
first edges to arrive are $e^u_1$ and $e^d_1$ followed by $e^u_2$ and $e^d_2$.

\begin{figure}[h!]
        \centering
        \begin{tikzpicture}[node distance={20mm}, main/.style = {draw, circle, scale=1}] 
            \node[main] (1ul) {};
            \node[main] (1ur) [right of=1ul] {};
            \node[main] (2ul) [left of=1ul] {};

            \node[main] (2dl) [right of=1ur] {};
            \node[main] (1dl) [right of=2dl] {};
            \node[main] (1dr) [right of=1dl] {};

            \draw (1ul) -- node[midway, above, pos=0.5] {$e^u_1$} (1ur);
            \draw (1ul) -- node[midway, above, pos=0.5] {$e^u_2$} (2ul);

            \draw (1dl) -- node[midway, above, pos=0.5] {$e^d_1$} (1dr);
            \draw (1dl) -- node[midway, above, pos=0.5] {$e^d_2$} (2dl);

        \end{tikzpicture}
        \caption{First edges in the considered instances.} \label{fig: int deg 4 start}
\end{figure}
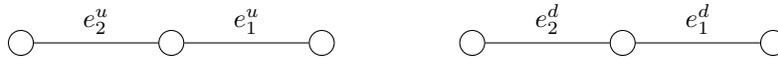

Let $x^u_1$ and $x^u_2$ represent the probability that $e^u_1$ and $e^u_2$ are included in the matching by $\mathcal{ALG}$; similarly, let $x^d_1$ and $x^d_2$ be the probability that $e^d_1$ and $e^d_2$ are included in the matching. Due to the symmetry of the constructions in Figures~\ref{fig: int deg 4 start}, \ref{fig: int deg 4 first} and \ref{fig: int deg 4 second}, we may assume that $x^d_1$ equals $x^u_1$, and $x^d_2$ equals $x^u_2$. Indeed, for this we could see the process in the following way: first edges to arrive are $e^a_1$ and $e^b_1$ followed by $e^a_2$ and $e^b_2$ in Figure~\ref{fig: int deg 4 start labels}; then after $\mathcal{ALG}$ takes decisions related to these four edges, with probability $1/2$ we label the edges as $x^u_1:=x^a_1$, $x^d_1:=x^b_1$, $x^u_2:=x^a_2$, $x^d_2:=x^b_2$, and with probability $1/2$ we label the edges as $x^u_1:=x^b_1$, $x^d_1:=x^a_1$, $x^u_2:=x^b_2$, $x^d_2:=x^a_2$.  Note that the symmetry does not allow us to assume that the inclusion of $e^d_1$ and the inclusion of $e^u_1$ in the matching by $\mathcal{ALG}$ are independent events. Let us denote $x^d_1$ (and so $x^u_1$) by $x_1$, and similarly $x^d_2$ (and so $x^u_2$) by~$x_2$.

The constraints on $x_1$, $x_2$ and $\gamma$ that we can obtain from Figure~\ref{fig: int deg 4 start} are as follows: $2x_1\geq 2\gamma$ and $1-x_1-x_2\geq 0$.

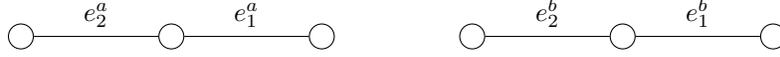
\begin{figure}[h!]
        \centering
        \begin{tikzpicture}[node distance={20mm}, main/.style = {draw, circle, scale=1}] 
            \node[main] (1ul) {};
            \node[main] (1ur) [right of=1ul] {};
            \node[main] (2ul) [left of=1ul] {};

            \node[main] (2dl) [right of=1ur] {};
            \node[main] (1dl) [right of=2dl] {};
            \node[main] (1dr) [right of=1dl] {};

            \draw (1ul) -- node[midway, above, pos=0.5] {$e^a_1$} (1ur);
            \draw (1ul) -- node[midway, above, pos=0.5] {$e^a_2$} (2ul);

            \draw (1dl) -- node[midway, above, pos=0.5] {$e^b_1$} (1dr);
            \draw (1dl) -- node[midway, above, pos=0.5] {$e^b_2$} (2dl);

        \end{tikzpicture}
        \caption{First edges in the constructed instances without ``labels''.} \label{fig: int deg 4 start labels}
\end{figure}

The first option for the further edge arrivals is depicted in Figure~\ref{fig: int deg 4 first}, where after the edges $e_1^u$, $e_1^d$, $e_2^u$, and $e_2^d$, all other edges in Figure~\ref{fig: int deg 4 first} arrive. In Figure~\ref{fig: int deg 4 first}, in the boxes, there are upper bounds on the edges' probability to be included in the matching by $\mathcal{ALG}$. Note that if $e_1^u$ and $e_2^u$ are included with probabilities $x_1$ and $x_2$, respectively, then by integrality of the constructed matching, neither of them is included with the probability $1-x_1-x_2$. An analogous statement holds for $e_1^d$, $e_2^d$. We obtain a new constraint on $x_1$, $x_2$ and $\gamma$ from Figure~\ref{fig: int deg 4 start}: $2x_1+2x_2+4(1-x_1-x_2)\geq 4\gamma$.

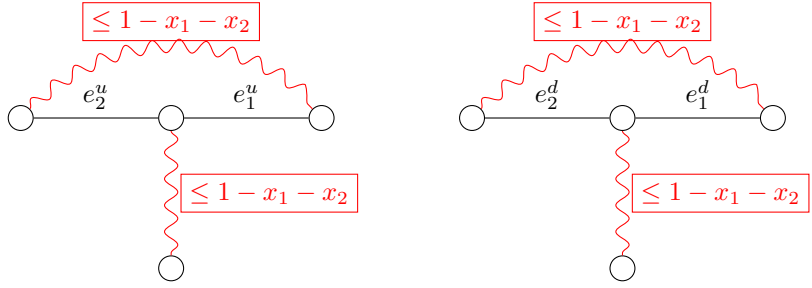
\begin{figure}
        \centering
        \begin{tikzpicture}[node distance={20mm}, main/.style = {draw, circle, scale=1}] 
            \node[main] (1ul) {};
            \node[main] (1ur) [right of=1ul] {};
            \node[main] (2ul) [left of=1ul] {};

            \node[main] (1ls) [below of=1ul] {};

            \node[main] (2dl) [right of=1ur] {};
            \node[main] (1dl) [right of=2dl] {};
            \node[main] (1dr) [right of=1dl] {};

            \node[main] (1ds) [below of=1dl] {};

            \draw (1ul) -- node[midway, above, pos=0.5] {$e^u_1$} (1ur);
            \draw (1ul) -- node[midway, above, pos=0.5] {$e^u_2$} (2ul);
            
            \draw[red, decorate,decoration=snake] (1ul) -- node[midway, right, pos=0.5] {\fbox{$\leq1 - x_1 - x_2$}} (1ls);

            \draw[red, decorate, decoration=snake] (2ul) to[out=45,in=135] node[at start, above, pos=0.5] {\fbox{$\leq1 - x_1 - x_2$}} (1ur);

            \draw (1dl) -- node[midway, above, pos=0.5] {$e^d_1$} (1dr);
            \draw (1dl) -- node[midway, above, pos=0.5] {$e^d_2$} (2dl);
            
            \draw[red, decorate,decoration=snake] (1dl) -- node[midway, right, pos=0.5] {\fbox{$\leq1 - x_1 - x_2$}} (1ds);

            \draw[red, decorate,decoration=snake] (2dl) to[out=45,in=135] node[at start, above, pos=0.5] {\fbox{$\leq1 - x_1 - x_2$}} (1dr);
        \end{tikzpicture}
        \caption{First option for the considered instances.} \label{fig: int deg 4 first}
\end{figure}

The second option for the further edge arrivals is depicted in Figure~\ref{fig: int deg 4 second}, where after the edges $e_1^u$, $e_1^d$, $e_2^u$ and $e_2^d$ the remaining edges depicted by solid straight lines arrive; then all curvy edges in Figure~\ref{fig: int deg 4 second} arrive; then all dashed edges arrive. Again, in Figure~\ref{fig: int deg 4 second},  in the boxes there are either the probabilities or the upper bounds on the edges' probability to be included in the matching by $\mathcal{ALG}$.  We obtain a new constraint on $x_1$, $x_2$, $x_3^u$, $x_3^d$, $x_4$ and $\gamma$ from Figure~\ref{fig: int deg 4 start}: \[2x_1+2 x_2+x_3^u+x_3^d+x_4+(1-x_1-x_3^u)+(1-x_1-x_3^d) \geq 5\gamma\] after the arrival of curvy edges; and \[2x_1+2 x_2+x_3^u+x_3^d+x_4+(1-x_1-x_3^u)+(1-x_1-x_3^d)+(1-x_3^u-x_4)+(1-x_3^d-x_4) \geq 6\gamma\] after the arrival of the dashed edge.  We also get constraints for each vertex in the graph.
     
\begin{figure}
       \centering
        \begin{tikzpicture}[node distance={20mm}, main/.style = {draw, circle, scale=1}] 
            \node[main] (1ul) {};
            \node[main] (1ur) [right of=1ul] {};
            \node[main] (2ul) [left of=1ul] {};
            \node[main] (2ur) [right of=1ur] {};

            \node[main] (1dl) [below of=1ul] {};
            \node[main] (1dr) [right of=1dl] {};
            \node[main] (2dl) [left of=1dl] {};
            \node[main] (2dr) [right of=1dr] {};

            \node[main] (1us) [above of=1ur] {};
            \node[main] (1ds) [below of=1dr] {};

            \node[main] (2us) [above of=2ur] {};
            \node[main] (2ds) [below of=2dr] {};

            \draw (1ul) -- node[midway, above, pos=0.5] {$e^u_1$} (1ur);
            \draw (1ul) -- node[midway, above, pos=0.5] {$e^u_2$} (2ul);
            \draw (1ur) -- node[midway, above, pos=0.5] {\fbox{$x_3^u$}} (2ur);

            \draw (1dl) -- node[midway, below, pos=0.5] {$e^d_1$} (1dr);
            \draw (1dl) -- node[midway, below, pos=0.5] {$e^d_2$} (2dl);
            \draw (1dr) -- node[midway, below, pos=0.5] {\fbox{$x_3^d$}} (2dr);
            \draw[red, decorate,decoration=snake] (1ur) -- node[midway, left, pos=0.5] {\fbox{$1 - x_1 - x^u_3\geq$}} (1us);
            \draw[red, decorate,decoration=snake] (1dr) -- node[midway, left, pos=0.5] {\fbox{$1 - x_1 - x^d_3\geq$}} (1ds);

            \draw[red, decorate,decoration=snake] (2ur) -- node[midway, left, pos=0.5] {\fbox{$x_4$}} (2dr);

            \draw[blue, dashed] (2ur) -- node[midway, right, pos=0.5] {\fbox{$\leq1 - x^u_3 - x_4$}} (2us);
            \draw[blue, dashed] (2dr) -- node[midway, right, pos=0.5] {\fbox{$\leq 1 - x^d_3 - x_4$}} (2ds);
        \end{tikzpicture}
        \caption{Second option for the considered instances.} \label{fig: int deg 4 second}
\end{figure}
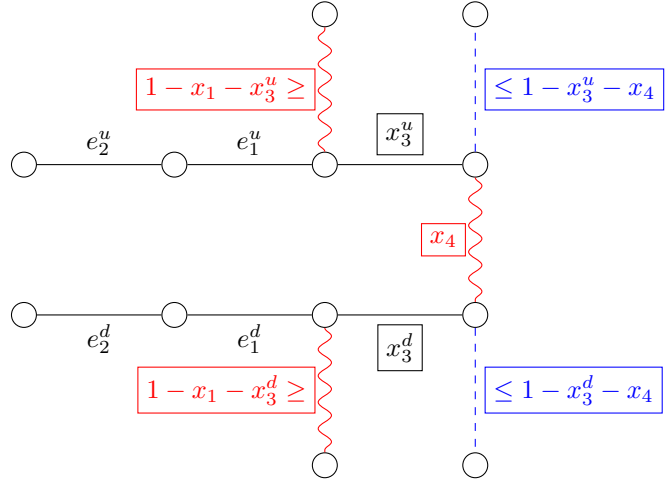

Thus, the following Linear Program provides an upper bound on $\gamma$.
\begin{maxi*}
    {}{\gamma}{}{}
    \addConstraint{2 x_1 \geq 2\gamma}
    \addConstraint{2x_1+2 x_2+4(1-x_1-x_2) \geq 4\gamma}
    \addConstraint{1-x_1-x_2 \geq 0}
    \addConstraint{2x_1+2 x_2+x_3^u+x_3^d \geq 4\gamma}
    \addConstraint{2x_1+2 x_2+x_3^u+x_3^d+x_4+(1-x_1-x_3^u)+(1-x_1-x_3^d) \geq 5\gamma}
    \addConstraint{2x_1+2 x_2+x_3^u+x_3^d+x_4+(1-x_1-x_3^u)+(1-x_1-x_3^d)}
    \addConstraint{}{ \qquad +(1-x_3^u-x_4)+(1-x_3^d-x_4) \geq 6\gamma }
    \addConstraint{1-x_1-x_3^u\geq 0}
    \addConstraint{1-x_1-x_3^d\geq 0}
    \addConstraint{1-x_3^u-x_4\geq 0}
    \addConstraint{1-x_3^d-x_4\geq 0}
    \addConstraint{0\leq x_1,\,x_2,\, x_3^u,\, x_3^d,\,x_4\leq 1}
\end{maxi*}
Solving the above linear program, we get $\gamma \leq 0.58065$ as required.

\section{Upper Bound for Fractional Matchings for Maximum Degree Four} \label{sec: deg 4 upper bound}

In this section, we show that for fractional matchings, the best possible guarantee deteriorates by going from graphs with maximum degree three to maximum degree four. In particular, we show the following theorem and note that we have $0.58884 < c$. 

\begin{theorem} \label{thm: fractional alg deg 4}
For fractional matchings in the adversarial edge arrival model, no algorithm achieves a guarantee larger than $0.58884$ on bipartite graphs of maximum degree four.
\end{theorem}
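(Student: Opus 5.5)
We would follow the template of Section~\ref{sec: int alg}: build an adaptive adversarial instance on a bipartite graph of maximum degree four, write the constraints any online fractional algorithm with guarantee $\gamma$ must satisfy, and bound $\gamma$ by a small linear program. Since the algorithm is fractional and deterministic without loss of generality (averaging a randomized algorithm gives a fractional one with the same guarantee), every edge $f$ has a well-defined value $y_f$. The adversary reveals a prefix and then branches among a few continuations, each of which must be answered with value at least $\gamma$ times the maximum matching at every timepoint. As in Section~\ref{sec: int alg}, we use a symmetrization trick: reveal two identical gadgets and relabel them at random, so that symmetric edges may be assumed to carry equal values. This keeps the number of LP variables small.

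\textbf{Construction.} We start from the degree-three hard instance of \cite{Buchbinder} (the consistent instances of Section~\ref{sec: consistent instances}): a central edge $e_1$ with two paths growing from its endpoints, whose values are forced to be close to $\widetilde{y}_1=c$, $\widetilde{y}_2=c/2$, $\widetilde{y}_3,\dots$. The extra degree is then exploited at a path vertex. In the degree-three instance, a vertex $v^l_i$ eventually has the two path edges and one spoke. In degree four, the adversary can attach two pendant edges to an endpoint of the most recent path edge, or a pendant edge plus a new path continuation. This forces the algorithm either to leave more slack at that vertex (losing on the "stop here" branch) or to saturate it (losing on the branch where pendants arrive). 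Concretely, the plan has three steps. First, reveal $e_1$ and the first few path edges, with variables $y_1,y_2,y_3,\dots$, and record the stopping constraints $\sum(\text{revealed } y)\ge\gamma\cdot\mathrm{OPT}$. Second, at each depth, offer the branch that ends the instance with pendant edges (two per high-degree vertex where allowed). These contribute values bounded by the residual capacities $1-\sum_{f\in\delta(v)}y_f$ at the path vertices, and the branch's OPT grows by one per new leaf-matched vertex. Third, truncate the recursion after a few levels, taking at most about $5$--$6$ variables, and solve the resulting LP numerically to get $\gamma\le 0.58884$.

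\textbf{Main obstacle.} The main difficulty is designing the gadget so that the LP optimum actually drops below $c\approx0.5914$. Naive use of degree four, such as simply adding pendant edges to the degree-three instance, may only reproduce the bound $c$, since the Algorithm~\ref{alg: one} style dual can absorb one more pendant. We would first try the variant in which, at the vertex $v^l_1$ (endpoint of $e_1$), the adversary adds a second path edge instead of a spoke. This creates two forked paths that both demand value near $\widetilde{y}_2$ while sharing $v^l_1$'s capacity. We would then tune the depth at which the spokes are released, checking by LP solve that the value is $0.58884$ and that every vertex has degree at most four and the graph is bipartite (a tree). Verifying that each branch's maximum matching is computed correctly is routine.
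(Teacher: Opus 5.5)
There is a genuine gap. What you give is a plan for searching for an instance, not a proof. The step you yourself call the main obstacle, exhibiting a degree-four instance whose LP value falls below $c$, is never carried out. The content of the theorem is exactly such an instance together with the LP it induces. The number $0.58884$ is a property of one specific instance, so it cannot be asserted for a construction that is still to be ``tuned''.

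Your concrete suggestions do not point toward a working gadget either:
\begin{itemize}
\item You note that extra pendant edges on the degree-three instance may simply be absorbed.
\item The forked-path idea at $v^l_1$ comes with no argument that it costs the algorithm anything.
\item There is no reason to expect an LP with five or six variables to drop below $c$. Even the degree-three bound $c$ is only approached as the number of rounds grows, and the paper's LP has one variable for each of roughly a hundred edges and $30$ prefix constraints.
\item Insisting on a tree is an unjustified restriction. The paper's graph is bipartite but not a tree: each $f_i$ below closes an even cycle through $e_1$.
\end{itemize}

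The missing idea is recursive.
\begin{itemize}
\item The paper first runs a six-round consistent instance with the spokes interleaved, $B_i=\{e^l_i,e^r_i,\hat{e}^l_{i-2},\hat{e}^r_{i-2}\}$.
\item For each level $i\le 4$, it uses the fourth unit of degree at $v^l_i,v^r_i$ to add a second pendant edge $\hat{f}^l_i,\hat{f}^r_i$. In the same batch it adds an edge $f_i$ joining the far spoke endpoints $\hat{v}^l_i,\hat{v}^r_i$.
\item From each $f_i$ it grows a fresh six-round consistent instance in which $f_i$ plays the role of $e_1$.
\end{itemize}
In the degree-three instance the far end of a spoke is a leaf, so the spoke can take its value $1-\widetilde{y}_i-\widetilde{y}_{i+1}$ at no later cost. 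Here that endpoint becomes the root of a child instance. At that root, the child's own edges $f_i$, its first path edge and its first spoke ideally take $c$, $\frac{c}{2}$ and $1-\frac{3c}{2}$, which already sum to $1$ and leave no room for the old spoke.

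Meanwhile the optimum reroutes $v^l_i,v^r_i$ through $\hat{f}^l_i,\hat{f}^r_i$, drops the spokes, and uses the child instances fully. Thus $\mu$ rises by $1$ per batch in $B_7,\dots,B_{14}$ and by $2$ per batch in $B_{15},\dots,B_{30}$, up to $\mu_{30}=50$. The spoke values are therefore effectively charged twice.

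The LP maximizing $\gamma$ subject to $\sum_{e\in B_1\cup\dots\cup B_i}y_e\ge\gamma\mu_i$ for $i=1,\dots,30$ and the vertex capacity constraints gives $0.58884$. A single arrival sequence with prefix constraints suffices, so neither branching nor symmetrization is needed.
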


To prove Theorem~\ref{thm: fractional alg deg 4}, let us consider the graph in Figure~\ref{fig fractional deg 4}. The edges in Figure~\ref{fig fractional deg 4} arrive in 30 batches $B_1$, \ldots, $B_{30}$ of nine different types, according to the list below
    \begin{itemize}
        \item $B_1 := \left\{e_1\right\}$.
        \item $B_2  := \left\{e^l_2, \,e^r_2\right\}$.
        \item $B_i := \left\{e^l_i,\, e^r_i,\, \hat{e}^l_{i - 2}, \hat{e}^r_{i - 2} \right\}$ for $i = 3, \dots, 6$.
        \item $B_i := \left\{f_{i - 6},\, \hat{f}^l_{i - 6},\, \hat{f}^r_{i - 6} \right\}$ for $i = 7, \dots, 10$.
        \item $B_i := \left\{e^{l, c_{i - 10}}_2,\, e^{r, c_{i - 10}}_2\right\}$ for $i = 11, \dots, 14$.
        \item $B_i := \left\{e^{l, c_{i - 14}}_3,\, e^{r, c_{i - 14}}_3,\, \hat{e}^{l, c_{i - 14}}_1,\, \hat{e}^{r, c_{i - 14}}_1 \right\}$ for $i = 15, \dots, 18$.
        \item $B_i := \left\{e^{l, c_{i - 18}}_4,\, e^{r, c_{i - 18}}_4,\, \hat{e}^{l, c_{i - 18}}_2,\, \hat{e}^{r, c_{i - 18}}_2 \right\}$ for $i = 19, \dots, 22$.
        \item $B_i := \left\{e^{l, c_{i - 22}}_5,\, e^{r, c_{i - 22}}_5,\, \hat{e}^{l, c_{i - 22}}_3,\, \hat{e}^{r, c_{i - 22}}_3 \right\}$ for $i = 23, \dots, 26$.
        \item $B_i := \left\{e^{l, c_{i - 26}}_5,\, e^{r, c_{i - 26}}_5,\, \hat{e}^{l, c_{i - 26}}_3,\, \hat{e}^{r, c_{i - 26}}_3 \right\}$ for $i = 27, \dots, 30$.
    \end{itemize}
    where $B_i$ represents the $i$th edge batch to arrive. For example, first arrives the edges in $B_1$, then the edges in $B_2$, then the edges in $B_3$, and so on. In Figure~\ref{fig fractional deg 4}, the width of the edges and the looseness of the dashes/dots in the edge pattern indicates the order of arrival.
\begin{figure}
        \centering
        \begin{tikzpicture}[node distance={13mm}, main/.style = {draw, circle, scale=0.85},rotate=90,transform shape, scale=0.85]
            \node[main] (1l) {}; 
            \node[main] (1r) [right of=1l] {}; 
            \node[main] (2r) [right =30mm of 1r] {}; 
            \node[main] (3r) [right =30mm of 2r] {}; 
            \node[main] (4r) [right of=3r] {};
            \node[main] (5r) [right of=4r] {};
            \node[main] (6r) [right of=5r] {}; 
            \node[main] (2l) [left =30mm of 1l] {}; 
            \node[main] (3l) [left =30mm of 2l] {}; 
            \node[main] (4l) [left of=3l] {};
            \node[main] (5l) [left of=4l] {};
            \node[main] (6l) [left of=5l] {};
    
            \node (2lsi) [below of=2l] {};
            \node (2rsi) [below of=2r] {};
            \node (1lsi) [above of=1l] {};
            \node (1rsi) [above of=1r] {};
    
            \node (5ls') [below of=4l] {};
            \node (5rs') [below of=4r] {};
    
            \node[main] (1ls) [above of=1lsi] {};
            \node[main] (1rs) [above of=1rsi] {};
            \node[main] (1ls') [below of=1l] {};
            \node[main] (1rs') [below of=1r] {};
            \node[main] (2ls) [below right=15mm and 30mm of 2lsi] {};
            \node[main] (2rs) [below left=15mm and 30mm of 2rsi] {};
            \node[main] (2ls') [above of =2l] {};
            \node[main] (2rs') [above of =2r] {};
            \node[main] (3ls') [below of=3l] {};
            \node[main] (3rs') [below of=3r] {};
            \node[main] (4ls') [above of=4l] {};
            \node[main] (4rs') [above of=4r] {};
    
            \node[main] (3ls) [above of=4ls'] {};
            \node[main] (3rs) [above of=4rs'] {};
            \node[main] (4ls) [below of=5ls'] {};
            \node[main] (4rs) [below of=5rs'] {e};
    
            %child 1 nodes
            \node[main] (2rc1) [right of=1rs] {};
            \node[main] (2lc1) [left of=1ls] {}; 
            \node[main] (3rc1) [right of=2rc1] {};
            \node[main] (3lc1) [left of=2lc1] {};
            \node[main] (1rsc1) [above of=1rs] {};
            \node[main] (1lsc1) [above of=1ls] {};
            %new nodes
            \node[main] (4rc1) [right of=3rc1] {};
            \node[main] (4lc1) [left of=3lc1] {};
            \node[main] (5rc1) [right of=4rc1] {};
            \node[main] (5lc1) [left of=4lc1] {};
            \node[main] (6rc1) [right of=5rc1] {};
            \node[main] (6lc1) [left of=5lc1] {};
            %new spokes
            \node[main] (2rsc1) [above of=2rc1] {};
            \node[main] (2lsc1) [above of=2lc1] {};
            \node[main] (3rsc1) [above of=3rc1] {};
            \node[main] (3lsc1) [above of=3lc1] {};
            \node[main] (4rsc1) [above of=4rc1] {};
            \node[main] (4lsc1) [above of=4lc1] {};
    
            %child 2 nodes
            \node[main] (2rc2) [right of=2rs] {};
            \node[main] (2lc2) [left of=2ls] {}; 
            \node[main] (3rc2) [right of=2rc2] {};
            \node[main] (3lc2) [left of=2lc2] {};
            \node[main] (1rsc2) [below of=2rs] {};
            \node[main] (1lsc2) [below of=2ls] {};
            %new nodes
            \node[main] (4rc2) [right of=3rc2] {};
            \node[main] (4lc2) [left of=3lc2] {};
            \node[main] (5rc2) [right of=4rc2] {};
            \node[main] (5lc2) [left of=4lc2] {};
            \node[main] (6rc2) [right of=5rc2] {};
            \node[main] (6lc2) [left of=5lc2] {};
            %new spokes
            \node[main] (2rsc2) [below of=2rc2] {};
            \node[main] (2lsc2) [below of=2lc2] {};
            \node[main] (3rsc2) [below of=3rc2] {};
            \node[main] (3lsc2) [below of=3lc2] {};
            \node[main] (4rsc2) [below of=4rc2] {};
            \node[main] (4lsc2) [below of=4lc2] {};
    
            %child 3 nodes
            \node[main] (2rc3) [right of=3rs] {};
            \node[main] (2lc3) [left of=3ls] {}; 
            \node[main] (3rc3) [right of=2rc3] {};
            \node[main] (3lc3) [left of=2lc3] {};
            \node[main] (1rsc3) [above of=3rs] {};
            \node[main] (1lsc3) [above of=3ls] {};
            %new nodes
            \node[main] (4rc3) [above of=3rc3] {};
            \node[main] (4lc3) [above of=3lc3] {};
            \node[main] (5rc3) [above of=4rc3] {};
            \node[main] (5lc3) [above of=4lc3] {};
            \node[main] (6rc3) [above of=5rc3] {};
            \node[main] (6lc3) [above of=5lc3] {};
            %new spokes
            \node[main] (2rsc3) [above of=2rc3] {};
            \node[main] (2lsc3) [above of=2lc3] {};
            \node[main] (3rsc3) [right of=3rc3] {};
            \node[main] (3lsc3) [left of=3lc3] {};
            \node[main] (4rsc3) [right of=4rc3] {};
            \node[main] (4lsc3) [left of=4lc3] {};
    
            %child 4 nodes
            \node[main] (2rc4) [right of=4rs] {};
            \node[main] (2lc4) [left of=4ls] {}; 
            \node[main] (3rc4) [right of=2rc4] {};
            \node[main] (3lc4) [left of=2lc4] {};
            \node[main] (1rsc4) [below of=4rs] {};
            \node[main] (1lsc4) [below of=4ls] {};
            %new nodes
            \node[main] (4rc4) [below of=3rc4] {};
            \node[main] (4lc4) [below of=3lc4] {};
            \node[main] (5rc4) [below of=4rc4] {};
            \node[main] (5lc4) [below of=4lc4] {};
            \node[main] (6rc4) [below of=5rc4] {};
            \node[main] (6lc4) [below of=5lc4] {};
            %new spokes
            \node[main] (2rsc4) [below of=2rc4] {};
            \node[main] (2lsc4) [below of=2lc4] {};
            \node[main] (3rsc4) [right of=3rc4] {};
            \node[main] (3lsc4) [left of=3lc4] {};
            \node[main] (4rsc4) [right of=4rc4] {};
            \node[main] (4lsc4) [left of=4lc4] {};
    
            \draw[line width=4.1pt] (1l) -- node[midway, below, pos=0.5] {$e_1$} (1r);
            \draw[line width=3.8pt] (1l) -- node[midway, below, pos=0.5] {$e^l_2$} (2l);
            \draw[line width=3.8pt] (1r) -- node[midway, below, pos=0.5] {$e^r_2$} (2r);
            \draw[line width=3.5pt] (2l) -- node[midway, below, pos=0.5] {$e^l_3$} (3l);
            \draw[line width=3.5pt] (2r) -- node[midway, below, pos=0.5] {$e^r_3$} (3r);
            \draw[line width=3.2pt] (3l) -- node[midway, below, pos=0.5] {$e^l_4$} (4l);
            \draw[line width=3.2pt] (3r) -- node[midway, below, pos=0.5] {$e^r_4$} (4r);
            \draw[line width=2.9pt] (4l) -- node[midway, below, pos=0.5] {$e^l_5$} (5l);
            \draw[line width=2.9pt] (4r) -- node[midway, below, pos=0.5] {$e^r_5$} (5r);
            \draw[line width=2.6pt] (5l) -- node[midway, below, pos=0.5] {$e^l_6$} (6l);
            \draw[line width=2.6pt] (5r) -- node[midway, below, pos=0.5] {$e^r_6$} (6r);
    
            \draw[line width=3.5pt, decorate,decoration=snake]  (1l) -- node[midway, left, pos=0.5] {$\hat{e}^l_1$} (1ls);
            \draw[line width=3.5pt, decorate,decoration=snake]  (1r) -- node[midway, right, pos=0.5] {$\hat{e}^r_1$} (1rs);
            \draw[line width=3.2pt, decorate,decoration=snake] (2l) -- node[midway, above right, pos=0.5] {$\hat{e}^l_2$} (2ls);
            \draw[line width=3.2pt, decorate,decoration=snake] (2r) -- node[midway,  above left, pos=0.5] {$\hat{e}^r_2$} (2rs);
            \draw[line width=2.9pt, decorate,decoration=snake] (3l) -- node[midway, right, pos=0.5] {$\hat{e}^l_3$} (3ls);
            \draw[line width=2.9pt, decorate,decoration=snake] (3r) -- node[midway, left, pos=0.5] {$\hat{e}^r_3$} (3rs);
            \draw[line width=2.6pt, decorate,decoration=snake] (4l) -- node[midway, left, pos=0.5] {$\hat{e}^l_4$} (4ls);
            \draw[line width=2.6pt, decorate,decoration=snake] (4r) -- node[midway, right, pos=0.5] {$\hat{e}^r_4$} (4rs);
    
            \draw[line width=1.6pt, dashed] (1l) -- node[midway, left, pos=0.5] {$\hat{f}^l_1$} (1ls');
            \draw[line width=1.6pt,  dashed]  (1r) -- node[midway, right, pos=0.5] {$\hat{f}^r_1$} (1rs');
            \draw[line width=1.6pt,  dashed]  (1ls) -- node[midway, below, pos=0.5] {$f_1$} (1rs);
    
            \draw[line width=1.2pt, dashed] (2l) -- node[midway, left, pos=0.5] {$\hat{f}^l_2$} (2ls');
            \draw[line width=1.2pt,  dashed] (2r) -- node[midway, right, pos=0.5] {$\hat{f}^r_2$} (2rs');
            \draw[line width=1.2pt,  dashed] (2ls) -- node[midway, below, pos=0.5] {$f_2$} (2rs);
    
            \draw[line width=.9pt, dashed] (3l) -- node[midway, left, pos=0.5] {$\hat{f}^l_3$} (3ls');
            \draw[line width=.9pt, dashed] (3r) -- node[midway, right, pos=0.5] {$\hat{f}^r_3$} (3rs');
            \draw[line width=.9pt, dashed] (3ls) to[out=50,in= 130] node[midway, below, pos=0.5] {$f_3$} (3rs);
    
            \draw[line width=.6pt,  dashed] (4l) -- node[midway, left, pos=0.5] {$\hat{f}^l_4$} (4ls');
            \draw[line width=.6pt, dashed] (4r) -- node[midway, right, pos=0.5] {$\hat{f}^r_4$} (4rs');
            \draw[line width=.6pt, dashed] (4ls) to[out=-45,in= -135] node[midway, below, pos=0.5] {$f_4$} (4rs);
    
            %child 1 edges
            \draw[line width=.7pt, densely dotted] (1ls) -- node[midway, below, pos=0.5] {$e^{l, c_1}_2$} (2lc1);
            \draw[line width=.7pt, densely dotted] (1rs) -- node[midway, below, pos=0.5] {$e^{r, c_1}_2$} (2rc1);
            \draw[line width=.7pt, dotted] (2lc1) -- node[midway, below, pos=0.5] {$e^{l, c_1}_3$} (3lc1);
            \draw[line width=.7pt, dotted] (2rc1) -- node[midway, below, pos=0.5] {$e^{r, c_1}_3$} (3rc1);
            \draw[line width=.7pt, dotted] (1ls) -- node[midway, left, pos=0.5] {$\hat{e}^{l, c_1}_1$} (1lsc1);
            \draw[line width=.7pt, dotted] (1rs) -- node[midway, right, pos=0.5] {$\hat{e}^{r, c_1}_1$} (1rsc1);   

            %new edges
            \draw[line width=.7pt, dotted] (3lc1) -- node[midway, below, pos=0.5] {$e^{l, c_1}_4$} (4lc1);
            \draw[line width=.7pt, dotted] (3rc1) -- node[midway, below, pos=0.5] {$e^{r, c_1}_4$} (4rc1);
            \draw[line width=.7pt, dotted] (4lc1) -- node[midway, below, pos=0.5] {$e^{l, c_1}_5$} (5lc1);
            \draw[line width=.7pt, dotted] (4rc1) -- node[midway, below, pos=0.5] {$e^{r, c_1}_5$} (5rc1);
            \draw[line width=.7pt, dotted] (5lc1) -- node[midway, below, pos=0.5] {$e^{l, c_1}_6$} (6lc1);
            \draw[line width=.7pt, dotted] (5rc1) -- node[midway, below, pos=0.5] {$e^{r, c_1}_6$} (6rc1);

            \draw[line width=.7pt, dotted] (2lc1) -- node[midway, left, pos=0.5] {$\hat{e}^{l, c_1}_2$} (2lsc1);
            \draw[line width=.7pt, dotted] (2rc1) -- node[midway, right, pos=0.5] {$\hat{e}^{r, c_1}_2$} (2rsc1);
            \draw[line width=.7pt, dotted] (3lc1) -- node[midway, left, pos=0.5] {$\hat{e}^{l, c_1}_3$} (3lsc1);
            \draw[line width=.7pt, dotted] (3rc1) -- node[midway, right, pos=0.5] {$\hat{e}^{r, c_1}_3$} (3rsc1);  
            \draw[line width=.7pt, dotted] (4lc1) -- node[midway, left, pos=0.5] {$\hat{e}^{l, c_1}_4$} (4lsc1);
            \draw[line width=.7pt, dotted] (4rc1) -- node[midway, right, pos=0.5] {$\hat{e}^{r, c_1}_4$} (4rsc1);  
    
            %child 2 edges
            \draw[line width=.6pt, densely dotted] (2ls) -- node[midway, below, pos=0.5] {$e^{l, c_2}_2$} (2lc2);
            \draw[line width=.6pt, densely dotted] (2rs) -- node[midway, below, pos=0.5] {$e^{r, c_2}_2$} (2rc2);
            \draw[line width=.6pt, dotted] (2lc2) -- node[midway, above, pos=0.5] {$e^{l, c_2}_3$} (3lc2);
            \draw[line width=.6pt, dotted] (2rc2) -- node[midway, above, pos=0.5] {$e^{r, c_2}_3$} (3rc2);
            \draw[line width=.6pt, dotted] (2ls) -- node[midway, left, pos=0.7] {$\hat{e}^{l, c_2}_1$} (1lsc2);
            \draw[line width=.6pt, dotted] (2rs) -- node[midway, right, pos=0.7] {$\hat{e}^{r, c_2}_1$} (1rsc2); 

            %new edges
            \draw[line width=.7pt, dotted] (3lc2) -- node[midway, above, pos=0.5] {$e^{l, c_2}_4$} (4lc2);
            \draw[line width=.7pt, dotted] (3rc2) -- node[midway, above, pos=0.5] {$e^{r, c_2}_4$} (4rc2);
            \draw[line width=.7pt, dotted] (4lc2) -- node[midway, above, pos=0.5] {$e^{l, c_2}_5$} (5lc2);
            \draw[line width=.7pt, dotted] (4rc2) -- node[midway, above, pos=0.5] {$e^{r, c_2}_5$} (5rc2);
            \draw[line width=.7pt, dotted] (5lc2) -- node[midway, above, pos=0.5] {$e^{l, c_2}_6$} (6lc2);
            \draw[line width=.7pt, dotted] (5rc2) -- node[midway, above, pos=0.5] {$e^{r, c_2}_6$} (6rc2);

            \draw[line width=.7pt, dotted] (2lc2) -- node[midway, left, pos=0.5] {$\hat{e}^{l, c_2}_2$} (2lsc2);
            \draw[line width=.7pt, dotted] (2rc2) -- node[midway, right, pos=0.5] {$\hat{e}^{r, c_2}_2$} (2rsc2);
            \draw[line width=.7pt, dotted] (3lc2) -- node[midway, left, pos=0.5] {$\hat{e}^{l, c_2}_3$} (3lsc2);
            \draw[line width=.7pt, dotted] (3rc2) -- node[midway, right, pos=0.5] {$\hat{e}^{r, c_2}_3$} (3rsc2);  
            \draw[line width=.7pt, dotted] (4lc2) -- node[midway, left, pos=0.5] {$\hat{e}^{l, c_2}_4$} (4lsc2);
            \draw[line width=.7pt, dotted] (4rc2) -- node[midway, right, pos=0.5] {$\hat{e}^{r, c_2}_4$} (4rsc2);  
    
            %child 3 edges
            \draw[line width=.5pt, densely dotted] (3ls) -- node[midway, below, pos=0.5] {$e^{l, c_3}_2$} (2lc3);
            \draw[line width=.5pt, densely dotted] (3rs) -- node[midway, below, pos=0.5] {$e^{r, c_3}_2$} (2rc3);
            \draw[line width=.5pt, dotted] (2lc3) -- node[midway, below, pos=0.5] {$e^{l, c_3}_3$} (3lc3);
            \draw[line width=.5pt, dotted]  (2rc3) -- node[midway, below, pos=0.5] {$e^{r, c_3}_3$} (3rc3);
            \draw[line width=.5pt, dotted]  (3ls) -- node[midway, left, pos=0.5] {$\hat{e}^{l, c_3}_1$} (1lsc3);
            \draw[line width=.5pt, dotted]  (3rs) -- node[midway, right, pos=0.5] {$\hat{e}^{r, c_3}_1$} (1rsc3);  

            %new edges
            \draw[line width=.7pt, dotted] (3lc3) -- node[midway, left, pos=0.5] {$e^{l, c_3}_4$} (4lc3);
            \draw[line width=.7pt, dotted] (3rc3) -- node[midway, right, pos=0.5] {$e^{r, c_3}_4$} (4rc3);
            \draw[line width=.7pt, dotted] (4lc3) -- node[midway, right, pos=0.5] {$e^{l, c_3}_5$} (5lc3);
            \draw[line width=.7pt, dotted] (4rc3) -- node[midway, left, pos=0.5] {$e^{r, c_3}_5$} (5rc3);
            \draw[line width=.7pt, dotted] (5lc3) -- node[midway, right, pos=0.5] {$e^{l, c_3}_6$} (6lc3);
            \draw[line width=.7pt, dotted] (5rc3) -- node[midway, left, pos=0.5] {$e^{r, c_3}_6$} (6rc3);

            \draw[line width=.7pt, dotted] (2lc3) -- node[midway, left, pos=0.5] {$\hat{e}^{l, c_3}_2$} (2lsc3);
            \draw[line width=.7pt, dotted] (2rc3) -- node[midway, right, pos=0.5] {$\hat{e}^{r, c_3}_2$} (2rsc3);
            \draw[line width=.7pt, dotted] (3lc3) -- node[midway, below, pos=0.5] {$\hat{e}^{l, c_3}_3$} (3lsc3);
            \draw[line width=.7pt, dotted] (3rc3) -- node[midway, below, pos=0.5] {$\hat{e}^{r, c_3}_3$} (3rsc3);  
            \draw[line width=.7pt, dotted] (4lc3) -- node[midway, above, pos=0.5] {$\hat{e}^{l, c_3}_4$} (4lsc3);
            \draw[line width=.7pt, dotted] (4rc3) -- node[midway, above, pos=0.5] {$\hat{e}^{r, c_3}_4$} (4rsc3);  
    
            %child 4 edges
            \draw[line width=.4pt, densely dotted] (4ls) -- node[midway, above, pos=0.5] {$e^{l, c_4}_2$} (2lc4);
            \draw[line width=.4pt, densely dotted] (4rs) -- node[midway, above, pos=0.5] {$e^{r, c_4}_2$} (2rc4);
            \draw[line width=.5pt, dotted]  (2lc4) -- node[midway, above, pos=0.5] {$e^{l, c_4}_3$} (3lc4);
            \draw[line width=.5pt, dotted] (2rc4) -- node[midway, above, pos=0.5] {$e^{r, c_4}_3$} (3rc4);
            \draw[line width=.5pt, dotted] (4ls) -- node[midway, left, pos=0.5] {$\hat{e}^{l, c_4}_1$} (1lsc4);
            \draw[line width=.5pt, dotted] (4rs) -- node[midway, right, pos=0.5] {$\hat{e}^{r, c_4}_1$} (1rsc4);  

            \draw[line width=.7pt, dotted] (3lc4) -- node[midway, left, pos=0.5] {$e^{l, c_4}_4$} (4lc4);
            \draw[line width=.7pt, dotted] (3rc4) -- node[midway, right, pos=0.5] {$e^{r, c_4}_4$} (4rc4);
            \draw[line width=.7pt, dotted] (4lc4) -- node[midway, right, pos=0.5] {$e^{l, c_4}_5$} (5lc4);
            \draw[line width=.7pt, dotted] (4rc4) -- node[midway, left, pos=0.5] {$e^{r, c_4}_5$} (5rc4);
            \draw[line width=.7pt, dotted] (5lc4) -- node[midway, right, pos=0.5] {$e^{l, c_4}_6$} (6lc4);
            \draw[line width=.7pt, dotted] (5rc4) -- node[midway, left, pos=0.5] {$e^{r, c_4}_6$} (6rc4);

            \draw[line width=.7pt, dotted] (2lc4) -- node[midway, left, pos=0.5] {$\hat{e}^{l, c_4}_2$} (2lsc4);
            \draw[line width=.7pt, dotted] (2rc4) -- node[midway, right, pos=0.5] {$\hat{e}^{r, c_4}_2$} (2rsc4);
            \draw[line width=.7pt, dotted] (3lc4) -- node[midway, above, pos=0.5] {$\hat{e}^{l, c_4}_3$} (3lsc4);
            \draw[line width=.7pt, dotted] (3rc4) -- node[midway, above, pos=0.7] {$\hat{e}^{r, c_4}_3$} (3rsc4);  
            \draw[line width=.7pt, dotted] (4lc4) -- node[midway, below, pos=0.5] {$\hat{e}^{l, c_4}_4$} (4lsc4);
            \draw[line width=.7pt, dotted] (4rc4) -- node[midway, below, pos=0.5] {$\hat{e}^{r, c_4}_4$} (4rsc4);  
        \end{tikzpicture}
        \caption{Considered instance of a graph of maximum degree four.}
        \label{fig fractional deg 4}
    \end{figure}

    Furthermore, given the sequence of arrivals, the maximum matching cardinality of the matching in the graph increases. The increase of the maximum matching cardinality is as follows: the arrival of $B_1$ and $B_2$ increase the cardinality by $1$ each, the arrival of  $B_3$, \ldots, $B_6$ increase the cardinality by $2$, $B_7$, \ldots, $B_{10}$ increase the cardinality by $1$, $B_{11}$, \ldots, $B_{14}$ increase by $1$, and $B_{15}$, \dots, $B_{30}$ increase the cardinality by $2$.

    Let $\gamma$ denote the guarantee achieved on graphs with maximum degree four. Let $\mu_i$ denote the cardinality of the maximum matching after the arrival of batch $B_i$ for $i = 1, \dots, 30$. Due to the above discussion, we have $\mu_1=1$, $\mu _2=2$; we have $\mu_3=4$,\ldots, $\mu _6=10$; we have $\mu_7=11$,\ldots, $\mu _{14}=18$; we have $\mu_{15}=20$,\ldots, $\mu _{30}=50$  
    Consider the following Linear Program to determine an upper bound on~$\gamma$.
    \begin{maxi*}
        {}{\gamma}{}{}
        \addConstraint{\sum_{e \in \bigcup_{j = 1}^i B_j} y_e \geq \gamma\cdot \mu_i}{}{\mkern40mu \text{for all} \ i = 1 \dots, 30}
        \addConstraint{\sum_{e \in \delta(u)} y_e \leq 1}{}{\mkern40mu \text{for all}\ u \in V}
    \end{maxi*}
    
     Solving the above Linear Program, we obtain $\gamma \leq 0.58884 < c \approx 0.5914$ as required. 

We note that the intent of Theorem~\ref{thm: fractional alg deg 4} is not to optimize the bound on bipartite graphs of maximum degree four but to provide a gap on the guarantees achievable for graphs of maximum degree three and four. In particular, by generalizing the instance in Figure~\ref{fig fractional deg 4} by increasing the number of rounds, and treating the $f_i$ edges as $e_1$ for a recursive process, one can improve upon this bound.

\section{Open Questions}

Let us point to further directions and open questions related to our work. The work of~\cite{Buchbinder} shows that the best possible guarantee of an online algorithm equals $2/3$ for graphs of maximum degree two, both for integral and fractional matchings in both vertex arrival and edge arrival models. The upper bound from~\cite{Buchbinder} and our work show that the possible guarantee of an online algorithm for fractional matchings equals $c=4/(9-\sqrt 5)\approx 0.5914$ for graphs of maximum degree three in both vertex arrival and edge arrival models. Our work leads to the following open question: What is the smallest value $d$ such that online algorithms for fractional matchings achieve different best possible guarantees in vertex and edge arrival models for graphs of maximum degree $d$?

Note that our algorithm achieves the guarantee $c=4/(9-\sqrt 5)\approx 0.5914$ for fractional matchings in both bipartite and non-bipartite graphs. Also, the work of~\cite{Buchbinder} shows that $c$ is an upper bound on the guarantee of any online algorithm for bipartite graphs. In general, for fractional matchings, the construction in~\cite{Gamlath} shows that the best possible guarantee is $1/2$ for both bipartite and non-bipartite graphs in the edge arrival model.  Is there $d$ such that online algorithms for fractional matchings achieve different best possible guarantees in the edge arrival model for bipartite and non-bipartite graphs of maximum degree $d$? If the answer is positive, then what is the smallest such $d$? Apart from the maximum degree, what other parameters of the underlying graphs have a crucial role in the difference of guarantees in bipartite and non-bipartite graphs? 
Is it possible to obtain the guarantee $c$ for integral matchings in the case of bipartite graphs with maximum degree three?

\subsection*{Acknowledgments.} The research of Thomas Snow was supported by NSERC Undergraduate Student Research Award (USRA) and Math Undergraduate Research Award (MURA). The research of Kanstantsin Pashkovich was supported by NSERC Discovery Grants Program RGPIN-2020-04346. The authors are grateful to Ricardo Fukasawa for many helpful discussions. Kanstantsin Pashkovich is grateful to Laurent Poirrier for the help with encoding linear programs to obtain upper bounds.

\bibliographystyle{alpha}
\bibliography{bibliography.bib}

\newpage

% Appendix
\begin{appendix}

\section{Properties of Values in Consistent Instances} \label{appendix: fractional matching in consistent instances}
In our analysis, we need the following facts. 

\begin{lemma} \label{lma fractional matching consistent instances} The following properties hold:
    \begin{enumerate} 
        \item \label{lma 1.1} for all natural $n$, $n\geq 4$ we have $2\widetilde{y}_{n + 1} - 2\widetilde{y}_n = (4c - 2)F_{n-1} - cF_{n-2}$.

        \item \label{lma 1.2}
   for all natural $n$, $n\geq 4$ we have $2\widetilde{y}_{n+1} = 2\widetilde{y}_n - c\psi^{n-1} = 2\widetilde{y}_n + c(-1)^n \phi^{1-n}$.

   \item \label{lma 1.3}
    the subsequence $(\widetilde{y}_{2k})_{k \in \mathbb{N}}$ of $(\widetilde{y}_k)_{k \in \mathbb{N}}$ is a strictly increasing sequence.

    \item \label{lma 1.4}   the subsequence $(\widetilde{y}_{2k+1})_{k \in \mathbb{N}}$ of $(\widetilde{y}_k)_{k \in \mathbb{N}}$ is a strictly decreasing sequence.    

    \item \label{lma 1.5}
    for every natural $k$ and $n$ we have that $\widetilde{y}_{2n+1} > \widetilde{y}_{2k}$.

    \item \label{lma 1.6}
    for all natural $n$ we have $\widetilde{y}_n + 2\widetilde{y}_{n+1} < 1 + \frac{c}{2}$.

    \item \label{lma 1.7}
    for all natural $n$ we have $1 - \widetilde{y}_n - \widetilde{y}_{n+1} = c - \widetilde{y}_{n+2}$.
    \end{enumerate}
\end{lemma}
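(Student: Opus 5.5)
The plan is to reduce everything to the closed form of $\widetilde{y}_n$ via Binet's formula, using the single algebraic fact that pins down $c$. Since $\phi^{-1}=\phi-1=\frac{\sqrt5-1}{2}$, we have $4-\phi^{-1}=\frac{9-\sqrt5}{2}$. Hence $c=4/(9-\sqrt5)$ is exactly the value for which
\[
4c-2=\frac{c}{\phi}.
\]
This identity is what kills the dominant $\phi^n$-term below.

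For \eqref{lma 1.1}, I subtract the defining formulas for $n+1$ and $n$ (both valid for $n\ge4$) and use $F_{n+1}-F_n=F_{n-1}$ and $F_{n-1}-F_{n-2}=F_{n-3}$. This gives
\[
2\widetilde{y}_{n+1}-2\widetilde{y}_n=(3F_{n-1}+F_{n-3})c-2F_{n-1}.
\]
Rewriting $F_{n-3}=F_{n-1}-F_{n-2}$ turns this into $(4c-2)F_{n-1}-cF_{n-2}$.

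For \eqref{lma 1.2}, I substitute Binet's formula into this expression. The $\phi$-part is $\phi^{n-2}\bigl((4c-2)\phi-c\bigr)/\sqrt5$, which vanishes by the identity above. For the $\psi$-part I use $4c-2=c/\phi$ and $1/\psi=-\phi$. It equals $-\frac{c\,\psi^{n-1}}{\sqrt5}\bigl(\phi^{-1}+\phi\bigr)=-c\psi^{n-1}$, since $\phi+\phi^{-1}=\sqrt5$. Writing $\psi=-\phi^{-1}$ gives the second form of \eqref{lma 1.2}.

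Parts \eqref{lma 1.3} and \eqref{lma 1.4} then follow for indices $\ge4$, because the increment $\widetilde{y}_{n+1}-\widetilde{y}_n$ alternates in sign and strictly decreases in absolute value. Concretely, $\widetilde{y}_{n+2}-\widetilde{y}_n=\frac{c}{2}(-1)^n\phi^{1-n}(1-\phi^{-1})$, which is positive for even $n$ and negative for odd $n$. The finitely many initial comparisons involving $\widetilde{y}_1,\widetilde{y}_2,\widetilde{y}_3$ (and $\widetilde{y}_4$) I check numerically with $c\approx0.5914$. For \eqref{lma 1.5}, both subsequences converge to the same limit $L$, since the increments tend to $0$. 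The even subsequence increases strictly to $L$ and the odd one decreases strictly to $L$, so every odd term exceeds $L$, which exceeds every even term.

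For \eqref{lma 1.7}, both sides are affine in $c$ with Fibonacci coefficients. For $n\ge4$, the coefficient of $c$ and the constant term match by Fibonacci recursions after multiplying by $2$. For example, the constant terms give $2-(-2F_n+2)/2\cdot2\ldots$, which reduces to $F_{n+2}=F_{n+1}+F_n$. The $c$-coefficients give $3F_{n+2}+F_n=3F_{n+1}+F_{n-1}+3F_n+F_{n-2}+\ldots$, again a recursion identity. If this direct check is messy, I would instead use \eqref{lma 1.2}: show that $g_n:=1-\widetilde{y}_n-\widetilde{y}_{n+1}-c+\widetilde{y}_{n+2}$ satisfies $g_{n+1}-g_n=0$, because the three $\psi$-increments cancel as $\psi^{2}=\psi+1$. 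Then verify $g_n=0$ for one or two small $n$. The cases $n=1,2,3$, which involve the special values $\widetilde{y}_1,\widetilde{y}_2,\widetilde{y}_3$, I verify by direct substitution.

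The step I expect to be the main obstacle is \eqref{lma 1.6}. Via \eqref{lma 1.7}, it is equivalent to $\widetilde{y}_{n+1}+\widetilde{y}_{n+2}<\frac{3c}{2}$. Each such pair consists of one odd-indexed and one even-indexed term. By \eqref{lma 1.3}–\eqref{lma 1.5}, the pair sum is at most the current odd term plus $L$, so the sums are eventually dominated by their values at small indices. I would therefore write $\widetilde{y}_{n+1}+\widetilde{y}_{n+2}=2\widetilde{y}_{n+1}+\frac{c}{2}(-1)^{n+1}\phi^{-n}$ and bound it using the monotone envelopes. This leaves only the first few $n$ to check numerically; the tightest case will likely be $n=1$ or $n=2$, where $\widetilde{y}_3=\frac{5c-2}{2}$ enters.
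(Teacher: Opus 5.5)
Your proposal is correct in all its claims, and its backbone is the paper's: Binet's formula plus the single identity $4c-2=c/\phi$ (the paper writes it as $4c+\psi c-2=0$). Parts (1)--(4) therefore follow the paper, and your explicit handling of indices below $4$ is more careful than the paper's ``corollaries of (2)''.

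You diverge from the paper in (5) and (7). For (5) you sandwich both subsequences around their common limit. The paper instead compares $\widetilde{y}_{2n+1}$ and $\widetilde{y}_{2k}$ directly, through a case split $k\le n$ / $k>n$ using (2) at index $2n$ or $2k$. For (7) you compare coefficients of $c$ for $n\ge4$ and substitute directly for $n=1,2,3$; the identity in fact holds for every value of $c$, using only the Fibonacci recursion. The paper instead inducts from $n=1$ using (2), which tacitly needs (2) at indices $1,2,3$, outside its stated range $n\ge4$. It does hold there, because the closed form with $F_0=0$, $F_{-1}=1$ reproduces $\widetilde{y}_1,\widetilde{y}_2,\widetilde{y}_3$, but the paper does not say so. Your route sidesteps this.

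Part (6) is the one place where what you wrote is a plan rather than a proof. ``Eventually dominated by their values at small indices'' yields no inequality until you know $L$, or say from which index the envelope bound drops below $\frac{3c}{2}$. The paper handles this numerically: for $n\ge7$ it bounds $2\widetilde{y}_n+4\widetilde{y}_{n+1}=6\widetilde{y}_n+2c(-1)^n\phi^{1-n}\le6\widetilde{y}_7+2c\phi^{-6}<2+c$ with a thin margin, and checks $n\le6$ by hand.

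Your own reduction finishes it with no numerics:
\begin{itemize}
\item Letting $n\to\infty$ in (7) gives $1-2L=c-L$, i.e.\ $L=1-c$.
\item For $n\ge1$, the odd-indexed member of $\{\widetilde{y}_{n+1},\widetilde{y}_{n+2}\}$ has index at least $3$, so it is at most $\widetilde{y}_3$.
\item The even-indexed member is strictly below $L$.
\item Hence $\widetilde{y}_{n+1}+\widetilde{y}_{n+2}<\frac{5c-2}{2}+1-c=\frac{3c}{2}$.
\end{itemize}
Equivalently, apply (7) once more: $\widetilde{y}_{n+1}+\widetilde{y}_{n+2}=1-c+\widetilde{y}_{n+3}$. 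So (6) is exactly $\widetilde{y}_{n+3}<\widetilde{y}_3=\frac{5c-2}{2}$, which is immediate from (4) and (5) because $n+3\ge4$. This is cleaner than the paper's estimate.
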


Note that~\eqref{lma 1.7} in the above lemma is used throughout our work. In particular, we use both $1 - \widetilde{y}_n - \widetilde{y}_{n+1}$ and $c - \widetilde{y}_{n+2}$ interchangeably.

For the sake of the proof let us provide the first six values of $\widetilde{y}$, $\widetilde{y}_1 = c\approx 0.5914$, $\widetilde{y}_2 = \frac{c}{2}\approx 0.2957$, $\widetilde{y}_3 = \frac{5c - 2}{2}\approx 0.4784$, $\widetilde{y}_4 = 4c - 2\approx 0.3655$, $\widetilde{y}_5 = \frac{15c - 8}{2}\approx 0.4353$, and $\widetilde{y}_6 = \frac{25c}{2} - 7\approx 0.3921$.
\begin{proof}[Proof of part~\eqref{lma 1.1} in Lemma~\ref{lma fractional matching consistent instances}]
    Let $n$ be a natural number with $n\geq 4$, then we have
    \begin{align*}
        2(\widetilde{y}_{n + 1} - \widetilde{y}_n) &= 3cF_{n+1} + cF_{n-1} - 2c -2F_{n+1} + 2 - 
        3cF_n - cF_{n-2} \\
        & \qquad + 2c + 2F_n - 2 \\
        &= 3cF_n + 3cF_{n-1} + cF_{n-1} - 2F_n - 2F_{n-1} - 3cF_n - cF_{n-2} + 2F_n \\
        &= (4c - 2)F_{n-1} - cF_{n-2} \,,
    \end{align*}
    where the first equality follows from the definition of $\widetilde{y}_n$, $\widetilde{y}_{n+1}$, and the second equality follows from the property $F_n = F_{n-1} + F_{n-2}$ of the Fibonacci numbers.
    
\end{proof}

\begin{proof}[Proof of part~\eqref{lma 1.2} in Lemma~\ref{lma fractional matching consistent instances}]
    By part~\eqref{lma 1.1} in Lemma~\ref{lma fractional matching consistent instances} we have
    \begin{align*}
        2\widetilde{y}_{n+1} &= 2\widetilde{y}_n + (4c-2)F_{n-1} - cF_{n-2} \\
        &= 2\widetilde{y}_n + c(4F_{n-1} - F_{n-2}) - 2F_{n-1} \\
        &= 2\widetilde{y}_n + c(4\frac{\phi^{n-1} - \psi^{n-1}}{\sqrt{5}} - \frac{\phi^{n-2} - \psi^{n-2}}{\sqrt{5}}) - 2F_{n-1} \\
        &= 2\widetilde{y}_n + c(\frac{\phi^{n-1}(4 + \psi) - \psi^{n-1}(4 + \psi + \sqrt{5})}{\sqrt{5}}) - 2F_{n-1} \\
        &= 2\widetilde{y}_n + c(F_{n-1}(4 + \psi) - \psi^{n-1}) - 2F_{n-1} \qquad\quad(\star) \\
        &= 2\widetilde{y}_n - c\psi^{n-1} \qquad\qquad\qquad\qquad\qquad\qquad\quad(\star\star)\\
        &= 2\widetilde{y}_n - c(-\frac{1}{\phi})^{n-1} \\
        &= 2\widetilde{y}_n - c(-1)^{n-1}\phi^{1-n} \\
        &= 2\widetilde{y}_n + c(-1)^n \phi^{1-n} \,,
    \end{align*}
    where $(\star\star)$ follows from $(\star)$ as $4c + \psi c - 2 = 0$, which is easily verifiable through computation.
    
\end{proof}

\begin{proof}[Proof of parts~\eqref{lma 1.3} and~\eqref{lma 1.4} in Lemma~\ref{lma fractional matching consistent instances}]

    Both statements are corollaries of \eqref{lma 1.2} in Lemma~\ref{lma fractional matching consistent instances}.
    
\end{proof}

\begin{proof}[Proof of part~\ref{lma 1.5} in Lemma~\ref{lma fractional matching consistent instances}]
    We consider the following two cases. \\
    \underline{Case 1:} $k \leq n$ \\
    \begin{align*}
        2\widetilde{y}_{2n+1} &= 2\widetilde{y}_{2n} + c(-1)^{2n}\phi^{1-2n} \\
        &= 2\widetilde{y}_{2n} + c\phi^{1-2n} \\
        &> 2\widetilde{y}_{2n} \\
        &> 2\widetilde{y}_{2k}\,,
    \end{align*}
     where the final inequality follows from the fact that $k \leq n$ and by~\eqref{lma 1.3} in Lemma~\ref{lma fractional matching consistent instances} $(\widetilde{y}_{2m})_{m \in \mathbb{N}}$ is a strictly increasing sequence, hence $2\widetilde{y}_{2n} > 2\widetilde{y}_{2k}$. \\
    \underline{Case 2:} $k > n$ \\
    \indent By~\eqref{lma 1.4} in Lemma~\ref{lma fractional matching consistent instances} $(\widetilde{y}_{2m+1})_{m \in \mathbb{N}}$ is a strictly decreasing sequence and hence, as $k > n$ we have that $2\widetilde{y}_{2n + 1} > 2\widetilde{y}_{2k + 1}$. So, \\
    \begin{align*}
        2\widetilde{y}_{2n+1} &> 2\widetilde{y}_{2k + 1} \\
        &= 2\widetilde{y}_{2k} + c(-1)^{2k}\phi^{1-2k} \\
        &= 2\widetilde{y}_{2k} + c\phi^{1-2k} \\
        &> 2\widetilde{y}_{2k}
    \end{align*}
    So, for all natural $k$ and $n$ we have that $\widetilde{y}_{2n+1} > \widetilde{y}_{2k}$ as required.
    
\end{proof}

\begin{proof}[Proof of part~\ref{lma 1.6} in Lemma~\ref{lma fractional matching consistent instances}]
    It is not hard to see that for $n \leq 6$ the statement holds. So assume $n \geq 7$.
    \begin{align*}
        2\widetilde{y}_n + 4\widetilde{y}_{n+1} &= 2\widetilde{y}_n + 4\widetilde{y}_n + 2c(-1)^n\phi^{1-n} \\
        &= 6\widetilde{y}_n + 2c(-1)^n\phi^{1-n} \\
        &\leq 6\widetilde{y}_n + 2c\phi^{1-n} \\
        &\leq 6\widetilde{y}_7 + 2c\phi^{1-7} \\
        &< 3\frac{21}{25} + 2c(\frac{1 + \sqrt{5}}{2})^{-6} \\
        &< 2 + c\,,
    \end{align*}
    where the second inequality follows from~\eqref{lma 1.4} and~\eqref{lma 1.5} in Lemma~\ref{lma fractional matching consistent instances}. So it follows that $y_n \leq y_7$ for all $n \geq 7$ as required.
    
\end{proof}

\begin{proof}[Proof of part~\ref{lma 1.7} in Lemma~\ref{lma fractional matching consistent instances}]
    We proceed by induction on $n$. The case where $n=1$ holds and is easily verifiable. So assume the statement holds for arbitrary natural $n$ that is, $1 - \widetilde{y}_n - \widetilde{y}_{n+1} + \widetilde{y}_{n+2} = c$. So by the inductive hypothesis and \eqref{lma 1.2} in Lemma~\ref{lma fractional matching consistent instances} we have,
    \begin{align*}
        1 - &\widetilde{y}_{n+1} - \widetilde{y}_{n+2} + \widetilde{y}_{n+3} \\
        &= c + \widetilde{y}_n - 2\widetilde{y}_{n+2} + \widetilde{y}_{n+3} \\
        &= c + \widetilde{y}_n - \widetilde{y}_{n+2} + \frac{1}{2}c(-1)^{n+2}\phi^{-n-1} \\
        &=c + \widetilde{y}_n + \frac{1}{2}c(-1)^{n+2}\phi^{-n-1} - \widetilde{y}_{n+1} - \frac{1}{2}c(-1)^{n+1}\phi^{-n} \\
        &=c + \frac{1}{2}c(-1)^{n+2}\phi^{-n-1} - \frac{1}{2}c(-1)^{n+1}\phi^{-n} - \frac{1}{2}c(-1)^{n}\phi^{-n + 1} \\
        &=c + \frac{1}{2}c(-1)^{n+2}\phi^{-n-1} + \frac{1}{2}c(-1)^{n+2}\phi^{-n} + \frac{1}{2}c(-1)^{n+1}\phi^{-n + 1} \\
        &= c + \frac{1}{2}c(-1)^{n+1}\phi^{-n - 1}(\phi^2 - \phi - 1) \\
        &= c\,,
    \end{align*}
    where the last statement holds as $\phi^2 - \phi - 1 = 0$ a well known identity of the golden ratio.
    
\end{proof}
\newpage

\section{Proof of Properties in Main Lemmas} \label{sec: proof properties}
Let us now prove both Lemma~\ref{lma main lemma} and Lemma~\ref{lma main lemma additional}. We prove the statement by induction on our time point. The base case, i.e. the case when no edge has arrived, can be verified in a straightforward way. Let us assume that the statement, i.e. all properties \ref{P1}, \dots, \ref{P6} hold before the arrival of an edge $e = uv$, where $\degree(u) \geq \degree(v)$. 
    
    Let $G$ be the graph after the arrival of $e$ and let $x^{old}$ be produced by the algorithm immediately before the arrival of edge $e$, i.e.  $x^{old}$ and $y$ satisfy \ref{P1}, \ldots, \ref{P6} with respect to the graph $G \setminus e$. Let $x^{new}_u$ and $x^{new}_v$ be the values assigned to $x_u$ and $x_v$ immediately after the edge $e$ arrived.

    We now consider the following case study:
    \begin{enumerate}[label=\arabic*.]
        \item \label{Case 1} $\type(e)=1$
        \begin{enumerate}[label*=\roman*.]
            \item\label{case: main 1.1} $n_e=1$
            \item \label{case: main 1.2} $n_e>1$
        \end{enumerate}
        \item \label{case: main 2} $\type(e)=2$
        \item $\type(e)=3$
        \begin{enumerate}[label*=\roman*.]
            \item \label{case: main 3.1} $\{\type(\delta(u) \setminus \{e\}), \type(\delta(v) \setminus \{e\})\} = \{(1,0,0)\}$
            \item \label{case: main 3.2} $\{\type(\delta(u) \setminus \{e\}), \type(\delta(v) \setminus \{e\})\} = \{(1,0,0), (0,1,0)\}$
            \item \label{case: main 3.3} $\{\type(\delta(u) \setminus \{e\}), \type(\delta(v) \setminus \{e\})\} = \{(0,2,0), (1,0,0)\}$
            \item \label{case: main 3.4} $\{\type(\delta(u) \setminus \{e\}), \type(\delta(v) \setminus \{e\})\} = \{(1,1,0), (1,0,0)\}$
        \end{enumerate}
    \end{enumerate}

\textbf{Case~\ref{case: main 1.1}: $\type(e) = 1$ and $n_e=1$.}

By line~\ref{line: test spoke edge}, we have $\degree(z(e)) < 3$ or $\type(\delta(z(e)) \setminus \{e\}) \in \{(0,2,0), (1,1,0)\}$.
 
\textbf{Checking~\ref{P1}.} In this case, the assignment of new values $x^{new}_{z(e)}$, $x^{new}_{w(e)}$ and $y_e$ happens in lines \ref{line: x path edge z}, \ref{line: x path edge w} and \ref{line: y path edge}, so it is straightforward to check that the property~\ref{P1} holds for $G$. 

\textbf{Checking~\ref{P2}.} Note, that if in line~\ref{line: y path edge} $y_e$ is assigned the value $\widetilde{y}_{n_e}$, i.e. the value $\widetilde{y}_{1}=c$, then the property~\ref{P2} holds in $G$, since
\[x^{new}_{z(e)} + x^{new}_{w(e)} = x^{old}_{z(e)} + x^{old}_{w(e)} + y_e\geq y_e\,.\]
Let us consider the case when $y_e$ is assigned the value smaller than $\widetilde{y}_{n_e}$, i.e. $c=\widetilde{y}_1> 1 - \sum_{f \in \delta(z(e)) \setminus \{e\}} y_f$. Thus, $\degree(z(e))=2$ or $\degree(z(e))=3$. Due to $n_e=1$ the first entry in $\type(\delta(z(e)))$ equals $1$, and so we have only two possible cases $\type(\delta(z(e)))=(1,1,0)$ and $\type(\delta(z(e)))=(1,2,0)$ by Table~\ref{tab: edge-partitions}. In the case when $\type(\delta(z(e)))=(1,1,0)$, by property~\ref{P3}, we have that $y_f \leq 1-c$ for $f \in \delta(z(e)) \setminus\{e\}$ and thus, $y_e$ is assigned the value $\widetilde{y}_{n_1}=c$ and~\ref{P2} holds due to the same arguments as above.
In the case when $\type(\delta(z(e)))=(1,2,0)$, we have
\begin{align*}
    x^{new}_{z(e)}=\left(x^{old}_{z(e)} + y_e - (c-\widetilde{y}_2)\right)=x^{old}_{z(e)}+\left(1-\sum_{f\in \delta(z(e)),\, f\neq e}y_f\right)-c/2=1-c/2\,,
\end{align*}
where the first and second equalities are due to assignments in Algorithm~\ref{alg: one} and that $\widetilde{y}_2=c/2$, and the third equality is \eqref{ob: 8} in Observation~\ref{obs vertex neighborhood}.
Similarly, we have
\begin{align*}
    x^{new}_{w(e)} = x^{old}_{w(e)} + c/2\,.
\end{align*}
Thus, we have
\begin{align*}
x^{new}_{z(e)} + x^{new}_{w(e)} = 1 + x^{old}_{w(e)} \geq c\,.
\end{align*}
For every $f\in \delta(z(e))$, $f\neq e$ we have
\[x^{new}_{z(f)}+x^{new}_{w(f)}=(c-y_f)+x^{new}_{z(e)}\geq \left(c-(1-c)\right)+1-c/2=3c/2\,,\]
where the first equality holds as $w(f) = z(e)$ and $x^{new}_{z(f)} = x_{z(f)}^{old} = c - y_f$ by \ref{P4} and line~\ref{line: y spoke edge} in Algorithm~\ref{alg: one}, and the inequality is due to \ref{P3}. Since $x^{new}_{w(e)}\geq x^{old}_{w(e)}$ and $x^{new}_a+x^{new}_b\geq c$ for every $ab=f\in \delta(z(e))$, we get that $x^{new}$ satisfies~\ref{P2}.

\textbf{Checking~\ref{P3}.} The values of $x^{new}$ and $x^{old}$ vary only for $z(e)$ and $w(e)$. If $\degree(z(e)) = 2$, then by $n_e = 1$ we have that $\type(\delta(z(e))) = (1,1,0)$. Similarly, if $\degree(w(e)) = 2$ then by \eqref{obs 2.1} in Observation~\ref{obs edge types} and Table~\ref{tab: edge-partitions}, we have $\type(\delta(w(e))) = (1,1,0)$. Thus, both $w(e)$ and $z(e)$ do not satisfy the premise of~\ref{P3}, and so \ref{P3} continues to hold by the inductive hypothesis.

\textbf{Checking~\ref{P4}.} Here, we follow the same arguments as we used for verifying~\ref{P2} above. We have $x^{new}_{w(e)}=x^{old}_{w(e)}+c/2$, so the value of $x_{w(e)}$ is non-decreasing. We also have that $y_e=c$ unless $\type(\delta(z(e))\setminus e)=(0,2,0)$ as shown above when checking \ref{P1}; hence $x^{new}_{z(e)}=x^{old}_{z(e)}+c/2$ unless $\type(\delta(z(e))\setminus e)=(0,2,0)$. This shows that~\ref{P4} holds upon the arrival of $e$. 

\textbf{Checking~\ref{P5}.} Here, we follow the same arguments as we used for verifying~\ref{P2} above. If $y_e = c$ then $x^{new}_{z(e)} \geq \frac{c}{2} = \widetilde{y}_{n_e + 1}$. If $y_e < c$ then $x^{new}_{z(e)} = 1 - \frac{c}{2} \geq \frac{c}{2} = \widetilde{y}_{n_e + 1}$.
In all cases, $x^{new}_{w(e)} \geq \frac{c}{2} = c - \widetilde{y}_{n_e + 1}$ which holds with equality if $\degree(w(e)) = 1$. So \ref{P5} holds.

\textbf{Checking~\ref{P6}.} Here, we again follow the same arguments as we used for verifying~\ref{P2} above. We have $x^{new}_{w(e)}=x^{old}_{w(e)}+c/2$, so the value of $x_{w(e)}$ is non-decreasing. Finally, if $x^{new}_{z(e)}<x^{old}_{z(e)}$ then we have $x^{new}_{z(e)} = 1 - c/2 \geq c \geq 1 - c$; and so hence~\ref{P6} holds inductively by~\ref{P3}.

\textbf{Checking~\ref{P7}.} This holds straightforwardly with respect to $z(e)$ by the assignment done in line~\ref{line: y path edge} in Algorithm~\ref{alg: one}. We now consider $w(e)$. First, note that if $\degree(w(e)) = 3$ then this follows by choice of $w(e)$ in line~\ref{line: choice of z}, so we may assume $\degree(w(e)) < 3$. So, by \eqref{obs 2.1} in Observation~\ref{obs edge types} we have that for $f \in \delta(w(e)) \setminus \{e\}$, $\type(f) \neq 1$; furthermore, as $\degree_{G \setminus e}(w(e)) < 2$ we have by \eqref{ob: 3} in Observation~\ref{obs edge types} that $\type(f) \neq 3$. So $\type(f) = 2$ and hence inductively by \ref{P3} we have that $y_f \leq 1 - c$; therefore, $\sum_{f \in \delta(w(e))} y_f \leq y_e + 1 - c \leq c + 1 - c = 1$s
where the first inequality is due to $\degree(w(e)) \in \{1,2\}$, the second inequality holds as $y_e \leq c$. So \ref{P7} holds as required. \\

\textbf{Case~\ref{case: main 1.2}: $\type(e) = 1$ and $n_e > 1$.}
By line~\ref{line: test spoke edge}, we have $\degree(z(e)) < 3$ or $\type(\delta(z(e)) \setminus \{e\}) \in \{(0,2,0), (1,1,0)\}$. However, as $n_e \neq 1$, by line~\ref{line: def of n} we have that there exists $ \ f_p \in \delta(z) \setminus \{e\}$ with $\type(f_p) = 1$, hence $\type(\delta(z(e)) \setminus \{e\}) \neq (0,2,0)$. Moreover, by \eqref{ob: 6} in Observation~\ref{obs special endpoints} and line~\ref{line: def of n} in Algorithm~\ref{alg: one} we have $n_e = n_{f_p} + 1$.

\textbf{Checking~\ref{P1}.} In this case, the assignment of new values $x^{new}_{z(e)}$, $x^{new}_{w(e)}$ and $y_e$ happens in lines \ref{line: x path edge z}, \ref{line: x path edge w} and \ref{line: y path edge} of Algorithm~\ref{alg: one}, so it is straightforward to check that the property~\ref{P1} holds for $G$. 

\textbf{Checking~\ref{P2}.} Note, that if in line~\ref{line: y path edge} of Algorithm~\ref{alg: one}, $y_e$ is assigned the value $\widetilde{y}_{n_e}$, then the property~\ref{P2} holds in $G$, since
\[x^{new}_{z(e)} + x^{new}_{w(e)} = x^{old}_{z(e)} + x^{old}_{w(e)} + y_e \geq c - \widetilde{y}_{n_e} + y_e = c\,,\]
where the inequality holds as $x^{old}_{z(e)} \geq c - \widetilde{y}_{n_{f_p} + 1} = c - \widetilde{y}_{n_e}$ inductively by \ref{P5} on $f_p$ and \eqref{remark: 2} in Observation~\ref{obs special endpoints}.

Let us consider the case when $y_e < \widetilde{y}_{n_e}$, i.e. $\widetilde{y}_{n_e} > 1 - \sum_{f \in \delta(z(e)) \setminus \{e\}} y_f$. Thus, $\degree(z(e))=2$ or $\degree(z(e))=3$. Due to $n_e > 1$ the first entry in $\type(\delta(z(e)))$ equals $2$, and so we have only two possible cases $\type(\delta(z(e)))=(2,0,0)$ and $\type(\delta(z(e)))=(2,1,0)$ by Table~\ref{tab: edge-partitions}. 

In the case when $\type(\delta(z(e)))=(2,0,0)$, by line~\ref{line: y path edge} in Algorithm~\ref{alg: one} we have that for $f \in \delta(z(e)) \setminus \{e\}$, $y_f \leq \widetilde{y}_{n_f} = \widetilde{y}_{n_e - 1}$. So by \eqref{ob: 3c/2 bound} in Observation~\ref{obs fractional matching consistent instances} we have that $\widetilde{y}_{n_e} + \widetilde{y}_{n_e - 1} <= 3c/2 < 1$ thus, we have that $y_e$ is assigned the value $\widetilde{y}_{n_e}$; and so~\ref{P2} holds due to the same arguments as above.

In the case when $\type(\delta(z(e)))=(2,1,0)$, let $f_s \in \delta(z(e))$ with $\type(f_s) = 2$, hence we have
\begin{equation} \label{case 1.ii eq 1}
    \begin{split}
    x^{new}_{z(e)} &= \left(x^{old}_{z(e)} + y_e - (c-\widetilde{y}_{n_e + 1})\right) \\ 
    &= c - \widetilde{y}_{n_{f_p} + 1} + y_{f_s} + \left(1-\sum_{f\in \delta(z(e)),\, f\neq e}y_f\right) - (c - \widetilde{y}_{n_e + 1}) \\ 
    &= 1 - y_{f_p} - \widetilde{y}_{n_e} + \widetilde{y}_{n_e + 1} \\
    &\geq 1 - \widetilde{y}_{n_e - 1} - \widetilde{y}_{n_e} + \widetilde{y}_{n_e + 1} \\
    &= c\,,
    \end{split}
\end{equation}
where the first equality is due to assignments in Algorithm~\ref{alg: one}, the second equality is due to Claim~\ref{claim 1}, the inequality is due to \eqref{ob: 4} in Observation~\ref{obs edge types} and $n_e = n_{f_p} + 1$, and the final equality holds by~\eqref{lma 1.7} in Lemma~\ref{lma fractional matching consistent instances}. Thus, we have,
\begin{align*}
x^{new}_{z(e)} + x^{new}_{w(e)} \geq c \,.
\end{align*}
Similarly, we have
\begin{align*}
    x^{new}_{w(e)} = x^{old}_{w(e)} + (c - \widetilde{y}_{n_e + 1}) \geq x^{old}_{w(e)}\,,
\end{align*}
where the inequality holds by~\eqref{ob: 1} in Observation~\ref{obs fractional matching consistent instances}.
So as $x^{new}_{z(e)} \geq c$ and $x^{new}_{w(e)} \geq x^{old}_{w(e)}$ we get that $x^{new}$ satisfies~\ref{P2}.

\textbf{Checking~\ref{P3}.} The values of $x^{new}$ and $x^{old}$ vary only for $z(e)$ and $w(e)$. If $\degree(z(e))$ equals $2$, then by $n_e > 1$ we have that $\type(\delta(z(e))) = (2,0,0)$, hence by the same argument as earlier, by line~\ref{line: y path edge} in Algorithm~\ref{alg: one} and \eqref{ob: 3c/2 bound} in Observation~\ref{obs fractional matching consistent instances} we have that $y_e = \widetilde{y}_{n_e}$.

So inductively by \ref{P5} and line~\ref{line: x path edge z} we have that,
\[x^{new}_{z(e)} = (c - \widetilde{y}_{n_e}) + \widetilde{y}_{n_e} - (c - \widetilde{y}_{n_e + 1}) = \widetilde{y}_{n_e + 1} \in [\frac{c}{2}, \frac{5c - 2}{2}]\,,\]
where the inclusion is due to~\eqref{ob: 1} in Observation~\ref{obs fractional matching consistent instances}, and as $\frac{c}{2} > 2c - 1$ we have that \ref{P3} holds with respect to $z(e)$. Moreover,
\[x^{new}_{z(e)} = \widetilde{y}_{n_e + 1} = c - 1 + \widetilde{y}_{n_e - 1} + \widetilde{y}_{n_e} \geq  c - 1 + y_e + y_{f_p} \,,\]
where the second equality holds by~\eqref{lma 1.7} in Lemma~\ref{lma fractional matching consistent instances} and the inequality holds by \eqref{ob: 4} in Observation~\ref{obs edge types} and as $y_e = \widetilde{y}_{n_e}$. 
Similarly, if $\degree(w(e)) = 2$ then by~\eqref{obs 2.1} in Observation~\ref{obs edge types} and Table~\ref{tab: edge-partitions}, we have $\type(\delta(w(e))) = (1,1,0)$. Thus, $w(e)$ does not satisfy the premise of~\ref{P3}, and so \ref{P3} continues to hold with respect to $w(e)$.

\textbf{Checking~\ref{P4}.} Here, we follow the same arguments as we used for verifying~\ref{P2} above. We have $x^{new}_{w(e)} = x^{old}_{w(e)} + (c - \widetilde{y}_{n_e + 1}) \geq x^{old}_{w(e)}$, where the inequality holds by~\eqref{ob: 1} in Observation~\ref{obs fractional matching consistent instances}. So the value of $x_{w(e)}$ is non-decreasing.

We also have that $y_e = \widetilde{y}_{n_e}$ unless $\type(\delta(z(e))\setminus e)=(1,1,0)$; hence $x^{new}_{z(e)} = x^{old}_{z(e)} + y_e - (c - \widetilde{y}_{n_e + 1}) \geq x^{old}_{z(e)}$ unless $\type(\delta(z(e))\setminus e)=(1,1,0)$, where the inequality holds by~\eqref{ob: 1} and \eqref{ob: 2} in Observation~\ref{obs fractional matching consistent instances}. This shows that~\ref{P4} holds upon the arrival of $e$. 

\textbf{Checking~\ref{P5}.} Here, we follow the same arguments as we used for verifying~\ref{P2} above. If $y_e = \widetilde{y}_{n_e}$ then inductively by \ref{P5} and line~\ref{line: x path edge z} we have that 
\[x^{new}_{z(e)} \geq (c - \widetilde{y}_{n_e}) + \widetilde{y}_{n_e} - (c - \widetilde{y}_{n_e + 1}) = \widetilde{y}_{n_e + 1}\,.\]
If $y_e < \widetilde{y}_{n_e}$ then by \eqref{case 1.ii eq 1} we have $x^{new}_{z(e)} \geq c > \widetilde{y}_{n_e + 1}$, where the inequality is due to~\eqref{ob: 1} in Observation~\ref{obs fractional matching consistent instances}.
In all cases, by line~\ref{line: x path edge w} we have $x^{new}_{w(e)} \geq (c - \widetilde{y}_{n_e + 1})$ which holds with equality if $\degree(w(e)) = 1$. So \ref{P5} holds.

\textbf{Checking~\ref{P6}.} Here, we again follow the same arguments as we used for verifying~\ref{P2} above. We have $x^{new}_{w(e)} = x^{old}_{w(e)} + (c - \widetilde{y}_{n_e + 1})$, so the value of $x_{w(e)}$ is non-decreasing and hence inductively \ref{P6} holds with respect to $w(e)$. Finally, if $x^{new}_{z(e)} < x^{old}_{z(e)}$ then we have $x^{new}_{z(e)} \geq c \geq 1 - c$; and so~\ref{P6} holds inductively by~\ref{P3}.

\textbf{Checking~\ref{P7}.} This holds straightforwardly with respect to $z(e)$ by the assignment done in line~\ref{line: y path edge} in Algorithm~\ref{alg: one}. We now consider $w(e)$. First, note that if $\degree(w(e)) = 3$ then this follows by choice of $w(e)$ in line~\ref{line: choice of z}, so we may assume $\degree(w(e)) < 3$. So, by \eqref{obs 2.1} in Observation~\ref{obs edge types} we have that for $f \in \delta(w(e)) \setminus \{e\}$, $\type(f) \neq 1$; furthermore, as $\degree_{G \setminus e}(w(e)) < 2$ we have by \eqref{ob: 3} in Observation~\ref{obs edge types} that $\type(f) \neq 3$. So $\type(f) = 2$ and hence inductively by \ref{P3} we have that $y_f \leq 1 - c$; therefore, $\sum_{f \in \delta(w(e))} y_f \leq y_e + 1 - c \leq \widetilde{y}_{n_e} + 1 - c \leq 1$
where the first inequality is due to $\degree(w(e)) \in \{1,2\}$, the second inequality holds as $y_e \leq \widetilde{y}_{n_e}$, and the last inequality is due to~\eqref{ob: 1} in Observation~\ref{obs fractional matching consistent instances}. So \ref{P7} holds as required. \\

\textbf{Case~\ref{case: main 2}: $\type(e) = 2$.}
By line~\ref{line: test spoke edge}, we have $\degree(z(e)) = 3$ and $\type(\delta(z) \setminus \{e\}) \notin \{(0,2,0), (1,1,0)\}$.

\textbf{Checking~\ref{P1}.} In this case, the assignment of new values $x^{new}_{z(e)}$, $x^{new}_{w(e)}$ and $y_e$ happens in lines \ref{line:  spoke cover change} and \ref{line: y spoke edge}, so it is straightforward to check that the property~\ref{P1} holds for $G$. 

\textbf{Checking~\ref{P2}.} By line~\ref{line: y spoke edge} we have that $y_e = c - x^{old}_{z(e)}$ hence,
\[x^{new}_{z(e)} + x^{new}_{w(e)} = x^{old}_{z(e)} + x^{old}_{w(e)} + y_e \geq x^{old}_{z(e)} + c - x^{old}_{z(e)} = c\,,\]
where the first equality holds by line~\ref{line:  spoke cover change}. Moreover by line~\ref{line:  spoke cover change} we have that $x^{new}_{z(e)} = x^{old}_{z(e)}$ and $x^{new}_{w(e)} = x^{old}_{w(e)} + y_e > x^{old}_{w(e)}$ where the inequality holds as $y_e \in [1 - \frac{3c}{2}, 1 - c]$ inductively by \ref{P3}. So we have that $x^{new}$ satisfies~\ref{P2}.

\textbf{Checking~\ref{P3}.} First, by line~\ref{line: test spoke edge}, we have $\degree(z(e)) = 3$ so $z(e)$ does not satisfy the premise of \ref{P3}. If $\degree(w(e)) = 2$ then by Table~\ref{tab: edge-partitions} we have that $\type(\delta(w(e))) \in \{(1,1,0), (0,2,0)\}$ hence $w(e)$ does not satisfy the premise of \ref{P3}. So \ref{P3} holds inductively.

\textbf{Checking~\ref{P4}.} Similarly to the proof of \ref{P2}, we have by line~\ref{line:  spoke cover change} that $x^{new}_{z(e)} = x^{old}_{z(e)}$ and $x^{new}_{w(e)} = x^{old}_{w(e)} + y_e > x^{old}_{w(e)}$, hence \ref{P4} holds inductively as all other values of $x^{new}$ remain unchanged.

\textbf{Checking~\ref{P5}.} Following the proof of \ref{P4} above, we have that $x^{new}_a \geq x^{old}_a$ for all $a \in V$ hence, inductively \ref{P5} holds with respect to $x^{new}$.

\textbf{Checking~\ref{P6}.} By line~\ref{line:  spoke cover change} we have that $x^{new}_{w(e)} = x^{old}_{w(e)} + y_e \geq y_e$ and as all other values of $x^{new}$ remain unchanged we have that \ref{P6} holds inductively.

\textbf{Checking~\ref{P7}.} By line~\ref{line: test spoke edge} we have that $\type(\delta(z(e)) \setminus \{e\}) \notin \{(1,1,0), (0,2,0)\}$ and $\degree_{G \setminus e}(z(e)) = 2$, so inductively by \ref{P3} we have that $x^{old}_{z(e)} \geq c - 1 + \sum_{f \in \delta(z(e)) \setminus \{e\}} y_f$. Therefore, as $y_e = c - x^{old}_{z(e)}$ we have that $\sum_{f \in \delta(z(e)} y_f \leq 1$ as required. Now, if $\degree(w(e)) = 3$ then $\sum_{f \in \delta(w(e))} y_f \leq 1$ by choice of $w(e)$ in line~\ref{line: choice of z}. If $\degree(w(e)) = 2$ then by Table~\ref{tab: edge-partitions} we have that $\type(\delta(w)) \in \{(1,1,0), (0,2,0)\}$ so for $f \in \delta(z(e)) \setminus \{e\}$ we have inductively by \ref{P3} and \eqref{ob: 1} in Observation~\ref{obs fractional matching consistent instances} that $y_f \leq \max\{c, 1-c\} = c$. Hence, as $y_e \leq 1 - c$ by~\ref{P3}, we have that $\sum_{f \in \delta(w(e))} y_f \leq c + 1 - c = 1$. Finally, if $\degree(w(e)) = 1$ then $\sum_{f \in \delta(w(e))} y_f = y_e < 1$. So \ref{P7} holds with respect to $x^{new}$ as required. \\

\textbf{Case~\ref{case: main 3.1}: $\type(e) = 3$ and $\{\type(\delta(u) \setminus \{e\}), \type(\delta(v) \setminus \{e\})\} = \{(1,0,0)\}$.}
Let $f_u, f_v, z(e)$, and $w(e)$ be defined as in lines~\ref{line: b1 f_u, f_v} and \ref{line: b1 z, w} in Algorithm~\ref{alg: one}. 
We first compute the updated values of $x^{new}_{z(e)}$ and $x^{new}_{w(e)}$.
\begin{equation} \label{case 3.1 x_z new} 
    \begin{split}
        x^{new}_{z(e)} = x^{old}_{z(e)} + \widetilde{y}_{n_{f_{z(e)}} + 1} - \min\left\{\frac{c}{2}, 1 - y_{f_{z(e)}} - y_{e}\right\} = c - \min\left\{\frac{c}{2}, 1 - y_{f_{z(e)}} - y_{e}\right\} \,,
    \end{split}
\end{equation}
where the first equality is due to line~\ref{line: b1 z cover change} and the second equality holds inductively by \ref{P5} as $\type(\delta(z(e)) \setminus \{e\}) = (1,0,0)$ we have $x^{old}_{z(e)} = c - \widetilde{y}_{f_{z(e)} + 1}$.
Similarly,
\begin{equation} \label{case 3.1 x_w new} 
    \begin{split}
        x^{new}_{w(e)} &= x^{old}_{w(e)} + \widetilde{y}_{n_{f_{w(e)}} + 1} - \max\left\{\frac{c}{2}, c - ( 1 - y_{f_{z(e)}} - y_{e})\right\} \\ 
        &= c - \max\left\{\frac{c}{2}, c - ( 1 - y_{f_{z(e)}} - y_{e})\right\} \,,
    \end{split}
\end{equation}
where the first equality is due to line~\ref{line: b1 w cover change} and the second equality holds inductively by \ref{P5} as $\type(\delta(w(e)) \setminus \{e\}) = (1,0,0)$ we have $x^{old}_{w(e)} = c - \widetilde{y}_{f_{w(e)} + 1}$.

\textbf{Checking~\ref{P1}.} As $\min\{\frac{c}{2}, 1 - y_{f_{z(e)}} - y_{e}\} + \max\{\frac{c}{2}, c - ( 1 - y_{f_{z(e)}} - y_{e})\} = c$ and the assignments the new values $x^{new}_{z(e)}, x^{new}_{w(e)}$, and $y_e$ happen in lines~\ref{line: b1 z cover change}, \ref{line: b1 w cover change}, and \ref{line: b1 y assigned}, it is straightforward to check that \ref{P1} holds for $G$.

\textbf{Checking~\ref{P2}.} By \eqref{case 3.1 x_z new} and \eqref{case 3.1 x_w new} we have the following,
\[x^{new}_{z(e)} + x^{new}_{w(e)} = 2c - \left(\min\left\{\frac{c}{2}, 1 - y_{f_{z(e)}} - y_{e}\right\} + \max\left\{\frac{c}{2}, c - ( 1 - y_{f_{z(e)}} - y_{e})\right\}\right) = c \,,\]
where the second equality holds as $\min\left\{\frac{c}{2}, 1 - y_{f_{z(e)}} - y_{e}\right\} + \max\left\{\frac{c}{2}, c - ( 1 - y_{f_{z(e)}} - y_{e})\right\} = c$.
We will now show that $x^{new}_{z(e)} \geq x^{old}_{z(e)}$ and $x^{new}_{w(e)} \geq x^{old}_{w(e)}$ which inductively implies that \ref{P2} holds with respect to $x^{new}$.
First, by~\eqref{ob: 1} in Observation~\ref{obs fractional matching consistent instances} we have that $\widetilde{y}_{f_{z(e)} + 1} \geq \frac{c}{2}$, therefore,
\[x^{new}_{z(e)} = x^{old}_{z(e)} + \widetilde{y}_{n_{f_{z(e)}} + 1} - \min\left\{\frac{c}{2}, 1 - y_{f_{z(e)}} - y_{e}\right\} \geq x^{old}_{z(e)} \,.\]
Similarly, if $\max\left\{\frac{c}{2}, c - ( 1 - y_{f_{z(e)}} - y_{e})\right\} = \frac{c}{2}$ then,
\[x^{new}_{w(e)} = x^{old}_{w(e)} + \widetilde{y}_{n_{f_{w(e)}} + 1} - \frac{c}{2} \geq x^{old}_{w(e)} \,.\]
Now, if $\max\left\{\frac{c}{2}, c - ( 1 - y_{f_{z(e)}} - y_{e})\right\} = c - ( 1 - y_{f_{z(e)}} - y_{e})$ then we have,
\begin{align*}
    x^{new}_{w(e)} &= x^{old}_{w(e)} + \widetilde{y}_{n_{f_{w(e)}} + 1} - \left(c - ( 1 - y_{f_{z(e)}} - y_{e})\right) = x^{old}_{w(e)} + 1 - y_{f_{z(e)}} - \widetilde{y}_{n_{f_{z(e)}} + 1} \\ &\geq x^{old}_{w(e)} + 1 - \widetilde{y}_{f_{z(e)}} - \widetilde{y}_{n_{f_{z(e)}} + 1} \geq x^{old}_{w(e)} \,,
\end{align*}
where the second equality is due to line~\ref{line: b1 y assigned}, the first inequality is due to \eqref{ob: 4} in Observation~\ref{obs edge types}, and the final inequality is due to~\eqref{lma 1.7} in Lemma~\ref{lma fractional matching consistent instances} and  \eqref{ob: 2} in Observation~\ref{obs fractional matching consistent instances}.
So \ref{P2} holds with respect to $x^{new}$ as required.

\textbf{Checking~\ref{P3}.} We first show $1 - y_{f_z} - y_e \geq 2 - 3c$ as follows,
\begin{align*}
    1 - y_{f_z} - y_e \geq 1 - \widetilde{y}_{n_{f_{z(e)}}} - \widetilde{y}_{n_{f_{z(e)}} + 1} + (c - \widetilde{y}_{n_{f_{w(e)}} + 1}) \geq 2(1 - \frac{3c}{2}) = 2 - 3c \,,
\end{align*}
where the fist inequality holds by line~\ref{line: b1 y assigned} and \eqref{ob: 4} in Observation~\ref{obs edge types} and the second inequality holds by~\eqref{lma 1.7} in Lemma~\ref{lma fractional matching consistent instances} and  \eqref{ob: 2} in Observation~\ref{obs fractional matching consistent instances}. So, by \eqref{case 3.1 x_z new} we have,
\[x_z = c - \min\left\{\frac{c}{2}, 1 - y_{f_z} - y_{e}\right\} \in [\frac{c}{2}, 4c - 2] \subset [2c - 1, \frac{5c - 2}{2}] \,.\]
Furthermore, 
\[x_z = c - \min\left\{\frac{c}{2}, 1 - y_{f_z} - y_{e}\right\} \geq c - 1 + y_{f_z} + y_{e} \,.\]
So \ref{P3} holds with respect to $z(e)$. We now check $w(e)$, first we have,
\[x_w = c - \max\left\{\frac{c}{2}, c - ( 1 - y_{f_z} - y_{e})\right\} \in [2-3c, \frac{c}{2}] \subset [2c - 1, \frac{5c - 2}{2}] \,,\]
as $1 - y_{f_z} - y_e \geq 2 - 3c$ from above. Now, to show $x^{new}_{w(e)} \geq c - 1 + \sum_{f \in \delta(w(e))} y_f$ we consider the following case study. If $\max\left\{\frac{c}{2}, c - ( 1 - y_{f_{z(e)}} - y_{e})\right\} = \frac{c}{2}$ then we have the following,
\begin{align*}
    c - x^{new}_{w(e)} + y_e + y_{f_{w(e)}} \leq 2 \widetilde{y}_{n_{f_{w(e)}} + 1} + \widetilde{y}_{n_{f_{w(e)}}} - \frac{c}{2} \leq 1
\end{align*}
where the first inequality holds as $x^{new}_{w(e)} = \frac{c}{2}$ by \eqref{case 3.1 x_w new} and \eqref{ob: 4} in Observation~\ref{obs edge types} and the second inequality holds by~\eqref{lma 1.6} in Lemma~\ref{lma fractional matching consistent instances}.
We now consider the case where $\max\left\{\frac{c}{2}, c - ( 1 - y_{f_{z(e)}} - y_{e})\right\} = c - ( 1 - y_{f_{z(e)}} - y_{e})$. First, we have the following,
\begin{align*}
    c - ( 1 - y_{f_{z(e)}} - y_{e}) \leq \widetilde{y}_{n_{f_{z(e)}}} + \widetilde{y}_{n_{f_{z(e)}} + 1} + \widetilde{y}_{n_{f_{w(e)}} + 1} - 1 = \widetilde{y}_{{n_{f_z}} + 2} + \widetilde{y}_{n_{f_w} + 1} - c \,,
\end{align*} 
where the inequality is due to line~\ref{line: b1 y assigned} and \eqref{ob: 4} in Observation~\ref{obs edge types} and the equality is due to~\eqref{lma 1.7} in Lemma~\ref{lma fractional matching consistent instances}. So we have,
\begin{align*}
    c - x_w + y_e + y_{f_w} &\leq 2 \widetilde{y}_{n_{f_w} + 1} + \widetilde{y}_{n_{f_w}} + \widetilde{y}_{{n_{f_z}} + 2} + \widetilde{y}_{{n_{f_z}} + 1} - 2c \\
    &\leq 1 + \frac{c}{2} + \frac{3c}{2} - 2c = 1 \,,
\end{align*}
where the first inequality holds by line~\ref{line: b1 y assigned} and as $c - x_{w(e)} = c - ( 1 - y_{f_{z(e)}} - y_{e}) \leq \widetilde{y}_{{n_{f_z}} + 2} + \widetilde{y}_{n_{f_w} + 1} - c$ by \eqref{case 3.1 x_w new} and the second inequality holds by~\eqref{lma 1.6} in Lemma~\ref{lma fractional matching consistent instances} and because $\widetilde{y}_{{n_{f_z}} + 2} + \widetilde{y}_{{n_{f_z}} + 1} \leq \frac{3c}{2}$ which follows from~\eqref{lma 1.7} in Lemma~\ref{lma fractional matching consistent instances} as well as \eqref{ob: 2} in Observation~\ref{obs fractional matching consistent instances}. So \ref{P3} holds with respect to $w(e)$ and therefore with respect to $G$ as required.

\textbf{Checking~\ref{P4}, \ref{P5}, and \ref{P6}.} Following the proof of \ref{P2} above we have shown that both $x^{new}_{z(e)} \geq x^{old}_{z(e)}$ and $x^{new}_{w(e)} \geq x^{old}_{w(e)}$ and so as all other values of $x$ remain unchanged we have that properties \ref{P4}, \ref{P5}, and \ref{P6} hold inductively. 

\textbf{Checking~\ref{P7}.} Following the proof of \ref{P3} above we have that $x^{new}_{z(e)} \geq c - 1 + \sum_{f \in \delta(z(e))} y_f$, $x^{new}_{w(e)} \geq c - 1 + \sum_{f \in \delta(w(e))} y_f$, and $x^{new}_{z(e)}, x^{new}_{w(e)} \leq \frac{5c - 2}{2} < c$ hence, $\sum_{f \in \delta(z(e))} y_f < 1$ and $\sum_{f \in \delta(w(e))} y_f < 1$ so \ref{P7} holds. \\

\textbf{Case~\ref{case: main 3.2}: $\type(e) = 3$ and $\{\type(\delta(u) \setminus \{e\}), \type(\delta(v) \setminus \{e\})\} = \{(1,0,0), (0,1,0)\}$.}
Let us define $f_1, f_2, z(e),$ and $w(e)$ as in line~\ref{line: b2 definitions}. 
We first compute the updated values of $x^{new}_{z(e)}$ and $x^{new}_{w(e)}$.
\begin{equation} \label{case 3.2 x_z new} 
    \begin{split} 
        x^{new}_{z(e)} &= x^{old}_{z(e)} + y_e - \max\left\{(2c-1) - y_{f_2}, 0\right\} \\
        &= c - \widetilde{y}_{n_{f_1} + 1} + \max\left\{\widetilde{y}_{n_{f_1} + 1} - y_{f_2}, 0\right\} - \max\left\{(2c-1) - y_{f_2}, 0\right\} \\
        &= \begin{cases}
            c - \widetilde{y}_{n_{f_1} + 1} & \mbox{ if $ y_{f_2} \geq \widetilde{y}_{n_{f_1} + 1}$}\\
            c - y_{f_2} & \mbox{ if $ y_{f_2} \in [2c-1,  \widetilde{y}_{n_{f_1} + 1})$} \\
            1 - c & \mbox{ if $ y_{f_2} < 2c-1 $} \\
        \end{cases}
    \end{split}   
\end{equation}
where the first equality is due to line~\ref{line: b2 z cover change} and the second equality is due to line~\ref{line: b2 y assigned} and \ref{P5} inductively as $\type(\delta(z(e)) \setminus \{e\}) = (1,0,0)$.
Similarly,
\begin{equation} \label{case 3.2 x_w new} 
    \begin{split}
        x^{new}_{w(e)} &= x^{old}_{w(e)} + \max\left\{(2c-1) - y_{f_2}, 0\right\} \\
        &= y_{f_2} + \max\left\{(2c-1) - y_{f_2}, 0\right\} \\
        &= \begin{cases}
            y_{f_2} & \mbox{ if $ y_{f_2} > 2c - 1$}\\
            2c-1 & \mbox{ if $ y_{f_2} \leq 2c - 1$} \\
        \end{cases}
    \end{split}
\end{equation}
where the first equality is due to line~\ref{line: b2 w cover change} and the second equality follows from \eqref{ob: 8} in Observation~\ref{obs vertex neighborhood}.

\textbf{Checking~\ref{P1}.} In this case, the assignment of new values $x^{new}_{z(e)}$, $x^{new}_{w(e)}$ and $y_e$ happen in lines \ref{line: b2 y assigned}, \ref{line: b2 z cover change}, and \ref{line: b2 w cover change}, so it is straightforward to check that the property~\ref{P1} holds for $G$. 

\textbf{Checking~\ref{P2}.} By \eqref{case 3.2 x_z new} and \eqref{case 3.2 x_w new} we have,
\begin{align*}
    x^{new}_{z(e)} + x^{new}_{w(e)} = c - \widetilde{y}_{n_{f_1} + 1} + \max\left\{\widetilde{y}_{n_{f_1} + 1} - y_{f_2}, 0\right\} + y_{f_2} \geq c \,,
\end{align*}
where the inequality follows directly from the previous line through a case analysis of $\max\left\{\widetilde{y}_{n_{f_1} + 1} - y_{f_2}, 0\right\}$. We proceed by showing that $x^{new}_{z(e)} \geq x^{old}_{z(e)}$ and $x^{new}_{w(e)} \geq x^{old}_{w(e)}$ which will imply inductively that \ref{P2} hols on $G$. First, let us consider $x^{new}_{z(e)}$. By \eqref{case 3.2 x_z new} it suffices to show that $y_e - \max\left\{(2c-1) - y_{f_2}, 0\right\} \geq 0$. Consider the following,
\begin{align*}
    y_e - \max\left\{(2c-1) - y_{f_2}, 0\right\} &= \max\left\{\widetilde{y}_{n_{f_1} + 1} - y_{f_2}, 0\right\} - \max\left\{(2c-1) - y_{f_2}, 0\right\} \\
    &\geq \begin{cases}
        \widetilde{y}_{n_{f_1} + 1} - (2c - 1) & \mbox{ if $ y_e = \widetilde{y}_{n_{f_1} + 1} - y_{f_2}$}\\
        0 & \mbox{ if $ y_e = 0 $} \\
    \end{cases} \,,
\end{align*}
where the equality is due to line~\ref{line: b2 y assigned} and the inequality holds as if $y_e = 0$ then $y_{f_2} \geq \widetilde{y}_{n_{f_1} + 1} \geq \frac{c}{2} > 2c - 1$ by~\eqref{ob: 1} in Observation~\ref{obs fractional matching consistent instances} and so $\max\left\{(2c-1) - y_{f_2}, 0\right\} = 0$. So $x^{new}_{z(e)} \geq x^{old}_{z(e)}$; moreover, by \eqref{case 3.2 x_w new} we have that $x^{new}_{w(e)} \geq x^{old}_{w(e)}$ hence \ref{P2} holds as required.

\textbf{Checking~\ref{P3}.} We first show $x^{new}_{z(e)} \in [2c-1, \frac{5c-2}{2}]$. By \ref{case 3.2 x_z new} we have,
\begin{equation*}
    x^{new}_{z(e)} = 
    \begin{cases}
        c - \widetilde{y}_{n_{f_1} + 1} & \mbox{ if $y_{f_2} \geq \widetilde{y}_{n_{f_1} + 1}$ }\\
        c - y_{f_2} & \mbox{ if $ y_{f_2} \in [2c-1,  \widetilde{y}_{n_{f_1} + 1})$} \\
        1 - c & \mbox{ if $ y_{f_2} < 2c-1 $} \\
    \end{cases} \ \ 
    \in [2c-1, \frac{5c-2}{2}] \,,
\end{equation*}
where the inclusion holds as $2c-1 < 1-c < \frac{5c-2}{2}$, $\widetilde{y}_{n_{f_1} + 1} \in [\frac{c}{2}, \frac{5c - 2}{2}]$ by~\eqref{ob: 1} in Observation~\ref{obs fractional matching consistent instances}, and $y_{f_2} \in [1 - \frac{3c}{2}, 1-c]$ inductively by \ref{P3}, so if $y_{f_2} \geq \widetilde{y}_{n_{f_1} + 1}$ then $\widetilde{y}_{n_{f_1} + 1} \leq 1 - c$ hence, $c - \widetilde{y}_{n_{f_1} + 1} \in [2c-1 \frac{c}{2}]$. We will now show that $x^{new}_{z(e)} \geq c - 1 + \sum_{f \in \delta(z(e))} y_f$ through a case analysis on the value of $x^{new}_{z(e)}$ as in \eqref{case 3.2 x_z new}. If $x^{new}_{z(e)} =  c - \widetilde{y}_{n_{f_1} + 1}$ then we have that $y_{f_2} \geq \widetilde{y}_{n_{f_1} + 1}$ and hence by line~\ref{line: b2 y assigned} we have that $y_e = 0$ so,
\[c - x^{new}_{z(e)} + y_{f_1} + y_e = \widetilde{y}_{n_{f_1} + 1} + y_{f_1} \leq \widetilde{y}_{n_{f_1} + 1} + \widetilde{y}_{n_{f_1}} \leq 1 \,, \]
where the first inequality follows from \eqref{ob: 4} in Observation~\ref{obs edge types} and the second inequality is due to \eqref{ob: 3c/2 bound} in Observation~\ref{obs fractional matching consistent instances}.
If $x^{new}_{z(e)} =  c - y_{f_2}$ then we have that $y_{f_2} \in [2c-1,  \widetilde{y}_{n_{f_1} + 1})$ and hence by line~\ref{line: b2 y assigned} $y_e = \widetilde{y}_{n_{f_1} + 1} - y_{f_2}$. So,
\[c - x^{new}_{z(e)} + y_{f_1} + y_e = \widetilde{y}_{n_{f_1} + 1} + y_{f_1} \leq \widetilde{y}_{n_{f_1} + 1} + \widetilde{y}_{n_{f_1}} \leq 1 \,, \]
where the inequalities hold for the same reason as in the case where $x^{new}_{z(e)} =  c - \widetilde{y}_{n_{f_1} + 1}$.
Finally, if $x^{new}_{z(e)} = 1 - c$ then we have that $y_{f_2} < 2c-1 < \widetilde{y}_{n_{f_1} + 1}$ and hence by line~\ref{line: b2 y assigned} we have $y_e = \widetilde{y}_{n_{f_1} + 1} - y_{f_2}$ so,
\[c - x^{new}_{z(e)} + y_{f_1} + y_e \leq \widetilde{y}_{n_{f_1} + 1} + \widetilde{y}_{n_{f_1}} + 2c - 1 - y_{f_2} \leq 5c - 2 < 1 \,, \]
where the first inequality is due to \eqref{ob: 4} in Observation~\ref{obs edge types} and the second inequality holds as $\widetilde{y}_{n_{f_1} + 1} + \widetilde{y}_{n_{f_1}} \leq \frac{3c}{2}$ by \eqref{ob: 3c/2 bound} in Observation~\ref{obs fractional matching consistent instances} and $y_{f_2} \geq 1 - \frac{3c}{2}$ inductively by \ref{P3}. So \ref{P3} holds with respect to $z(e)$. We now check $w(e)$, by \eqref{case 3.2 x_w new} we have that,
\begin{equation*}
    x^{new}_{w(e)} = 
    \begin{cases}
        y_{f_2} & \mbox{ if $ y_{f_2} > 2c - 1$} \\
        2c-1 & \mbox{ if $ y_{f_2} \leq 2c - 1$} \\
    \end{cases} \ \ 
    \in [2c-1, \frac{5c-2}{2}] \,,
\end{equation*}
where the inclusion holds as $y_{f_2} \leq 1 - c$ inductively by \ref{P3}. We will now show that $x^{new}_{w(e)} \geq c - 1 + \sum_{f \in \delta(w(e))} y_f$ through a case analysis on the value of $x^{new}_{w(e)}$ as in \eqref{case 3.2 x_z new}. If $x^{new}_{w(e)} = y_{f_2}$ we have that $y_{f_2} > 2c - 1$ and hence,
\[c - x^{new}_{w(e)} + y_{f_2} + y_e = c - y_e < 1 \,.\]
Moreover, if $x^{new}_{w(e)} = 2c-1$ then $y_{f_2} < 2c-1 < \widetilde{y}_{n_{f_1} + 1}$ and hence by line~\ref{line: b2 y assigned} we have $y_e = \widetilde{y}_{n_{f_1} + 1} - y_{f_2}$ so,
\[c - x^{new}_{w(e)} + y_{f_2} + y_e = 1 - c + \widetilde{y}_{n_{f_1} + 1} \leq 1 \,,\]
where the inequality holds by \eqref{ob: 1} in Observation~\ref{obs fractional matching consistent instances}. So \ref{P3} holds with respect to $w(e)$ and hence on $G$ inductively. 

\textbf{Checking~\ref{P4}, \ref{P5}, and \ref{P6}.} Following the proof of \ref{P2} above we have shown that both $x^{new}_{z(e)} \geq x^{old}_{z(e)}$ and $x^{new}_{w(e)} \geq x^{old}_{w(e)}$ and so as all other values of $x$ remain unchanged we have that properties \ref{P4}, \ref{P5}, and \ref{P6} hold inductively. 

\textbf{Checking~\ref{P7}.} Following the proof of \ref{P3} we have that $x^{new}_{z(e)} \geq c - 1 + \sum_{f \in \delta(z(e))} y_f$, $x^{new}_{w(e)} \geq c - 1 + \sum_{f \in \delta(w(e))} y_f$, and $x^{new}_{z(e)}, x^{new}_{w(e)} \leq \frac{5c - 2}{2} < c$ hence, $\sum_{f \in \delta(z(e))} y_f < 1$ and $\sum_{f \in \delta(w(e))} y_f < 1$ so \ref{P7} holds. \\

\textbf{Case~\ref{case: main 3.3}: $\type(e) = 3$ and $\{\type(\delta(u) \setminus \{e\}), \type(\delta(v) \setminus \{e\})\} = \{(0,2,0), (1,0,0)\}$.}
So as $\degree(u) \geq \degree(v)$ we have that $\type(\delta(u) \setminus \{e\}) = (0,2,0)$ and $\type(\delta(v) \setminus \{e\}) = (1,0,0)$. Let $f_v \in \delta(v) \setminus \{e\}$ that is $\type(f_v) = 1$ and let $f_1, f_2 \in \delta(u) \setminus \{e\}$ that is $\type(f_1) = \type(f_2) = 2$, without loss of generality we may assume that $y_{f_1} \geq y_{f_2}$. 
So, by line~\ref{line: b3 v cover change} we have that,
\begin{equation} \label{case 3.3 x_u new} 
    x^{new}_u = x^{old}_u = y_{f_1} + y_{f_2}
\end{equation}
where the second equality is due to \eqref{ob: 8} in Observation~\ref{obs vertex neighborhood} and,
\begin{equation} \label{case 3.3 x_v new} 
    \begin{split}
        x^{new}_{v} &= x^{old}_{v} + y_e \\
        &= c - \widetilde{y}_{n_{f_v} + 1} + \max\left\{\widetilde{y}_{n_{f_v} + 1} - y_{f_1}, 0\right\} \\
        &= \begin{cases}
            c - \widetilde{y}_{n_{f_v} + 1} & \mbox{ if $ y_{f_1} > \widetilde{y}_{n_{f_v} + 1}$}\\
            c - y_{f_1} & \mbox{ if $ y_{f_2} \leq y_{f_1} \leq \widetilde{y}_{n_{f_v} + 1}$} \\
        \end{cases}
    \end{split}
\end{equation}
where the second equality holds by line~\ref{line: b3 y assigned} and inductively by \ref{P5} as $\type(\delta(v) \setminus\{e\}) = (1,0,0)$.

\textbf{Checking~\ref{P1}.} In this case, the assignment of new values $x^{new}_{u}$, $x^{new}_{v}$ and $y_e$ happens in lines \ref{line: b3 y assigned} and \ref{line: b3 v cover change}, so it is straightforward to check that the property~\ref{P1} holds for $G$. 

\textbf{Checking~\ref{P2}.} By \eqref{case 3.3 x_u new} and \eqref{case 3.3 x_v new} we have,
\[x^{new}_{u} + x^{new}_{v} = c - \widetilde{y}_{n_{f_v} + 1} + \max\left\{\widetilde{y}_{n_{f_v} + 1} - y_{f_1}, 0\right\} + y_{f_1} + y_{f_2} \geq c\]
where the inequality holds by a case analysis of $\max\left\{\widetilde{y}_{n_{f_v} + 1} - y_{f_1}, 0\right\}$. Moreover, by \eqref{case 3.3 x_u new} and \eqref{case 3.3 x_v new} we have, that $x^{new}_u = x^{old}_u$ and $x^{new}_v = x^{old}_v + y_e = x^{old}_v + \max\left\{\widetilde{y}_{n_{f_v} + 1} - y_{f_1}, 0\right\} \geq x^{old}_v$, therefore we have inductively that \ref{P2} holds for $G$ as required.

\textbf{Checking~\ref{P3}.} First, as $\degree(u) = 3$ we have that $u$ does not satisfy the premise of~\ref{P3}. Now, by \eqref{case 3.3 x_v new} we have that,
\begin{equation*}
    x^{new}_{v} 
    = \begin{cases}
        c - \widetilde{y}_{n_{f_v} + 1} & \mbox{ if $ y_{f_1} > \widetilde{y}_{n_{f_v} + 1}$}\\
        c - y_{f_1} & \mbox{ if $ y_{f_2} \leq y_{f_1} \leq \widetilde{y}_{n_{f_v} + 1}$} \\
    \end{cases}
\end{equation*}
and hence we proceed by case analysis on the value of $x^{new}_v$. 
If $x^{new}_v = c - \widetilde{y}_{n_{f_v} + 1}$ then $y_{f_1} > \widetilde{y}_{n_{f_v} + 1}$ and hence by line~\ref{line: b3 y assigned} we have that $y_e = 0$. Moreover, inductively by \ref{P3} we have $y_{f_1} \leq 1 - c$ and so $x^{new}_v = c - \widetilde{y}_{n_{f_v} + 1} \geq 2c - 1$ and hence by  \eqref{ob: 2} in Observation~\ref{obs fractional matching consistent instances} we have that $x^{new}_v \in [2c-1, \frac{c}{2}] \subset [2c-1, \frac{5c-2}{2}]$. Also,
\[c - x^{new}_v + y_e + y_{f_v} = \widetilde{y}_{n_{f_v} + 1} + y_{f_v} \leq \widetilde{y}_{n_{f_v} + 1} + \widetilde{y}_{n_{f_v}} \leq 1 \,\]
where the first inequality holds by \eqref{ob: 4} in Observation~\ref{obs edge types} and the second holds by \eqref{ob: 3c/2 bound} in Observation~\ref{obs fractional matching consistent instances}.
If $x^{new}_v = c - y_{f_1}$ then $y_{f_1} \leq \widetilde{y}_{n_{f_v} + 1}$ and hence by line~\ref{line: b3 y assigned} we have that $y_e = \widetilde{y}_{n_{f_v} + 1} - y_{f_1}$. Moreover, inductively by \ref{P3} we have that $y_{f_1} \in [1 - \frac{3c}{2}, 1 - c]$ and so, $x^{new}_v = c - y_{f_1} \in [2c - 1, \frac{5c - 2}{2}]$. Also,
\[c - x^{new}_v + y_e + y_{f_v} = \widetilde{y}_{n_{f_v} + 1} + y_{f_v} \leq \widetilde{y}_{n_{f_v} + 1} + \widetilde{y}_{n_{f_v}} \leq 1 \,\]
where the first inequality holds by \eqref{ob: 4} in Observation~\ref{obs edge types} and the second holds by \eqref{ob: 3c/2 bound} in Observation~\ref{obs fractional matching consistent instances}.
So \ref{P3} holds as required.

\textbf{Checking~\ref{P4}, \ref{P5}, and \ref{P6}.} Following the proof of \ref{P2} above we have shown that both $x^{new}_{u} \geq x^{old}_{u}$ and $x^{new}_{v} \geq x^{old}_{v}$ and so as all other values of $x$ remain unchanged we have that properties \ref{P4}, \ref{P5}, and \ref{P6} hold inductively. 

\textbf{Checking~\ref{P7}.} Following the proof of \ref{P3} we have that $x^{new}_v \geq c - 1 + \sum_{f \in \delta(v)} y_f$ and $x^{new}_v \leq \frac{5c-2}{2} < c$ hence, $\sum_{f \in \delta(v)} y_f \leq 1$ as required. Moreover,
\begin{align*}
    \sum_{f \in \delta(u)} y_f = y_{f_1} + y_{f_2} + y_e 
    = \begin{cases}
        y_{f_1} + y_{f_2} &\mbox{ if $ y_e = 0 $} \\
        \widetilde{y}_{n_{f_v} + 1} + y_{f_2} &\mbox{ if $ y_e = \widetilde{y}_{n_{f_v} + 1} - y_{f_1} $} 
    \end{cases}
    \leq 1 \,,
\end{align*}
where the second equality follows from line~\ref{line: b3 y assigned} and the inequality holds as $y_{f_1}, y_{f_2} \leq 1 - c$ inductively by \ref{P3} and $ \widetilde{y}_{n_{f_v} + 1} \leq \frac{5c-2}{2}$ by \eqref{ob: 1} in Observation~\ref{obs fractional matching consistent instances}. So \ref{P7} holds as required. \\

\textbf{Case~\ref{case: main 3.4}: $\type(e) = 3$ and $\{\type(\delta(u) \setminus \{e\}), \type(\delta(v) \setminus \{e\})\} = \{(1,1,0), (1,0,0)\}$.}
First, as $\degree(u) \geq \degree(v)$ we have that $\type(\delta(u) \setminus \{e\}) = (1,1,0)$ and $\type(\delta(v) \setminus \{e\}) = (1,0,0)$. Let $f_1, f_2 \in \delta(u) \setminus \{e\}$ with $\type(f_i) = i$ for $i = 1, 2$, and let $f_v \in \delta(v) \setminus \{e\}$ hence $\type(f_v) = 1$. By line~\ref{line: b4 u cover change} we have,
\begin{equation} \label{case 3.4 x_u new} 
    \begin{split}
        x^{new}_{u} &= x^{old}_{u} + y_e - \max\left\{\widetilde{y}_{n_{f_v} + 1} - y_{f_2}, 0\right\} \\
        &= c - \widetilde{y}_{n_{f_1} + 1} + y_{f_2} + \max\{\widetilde{y}_{n_{f_v} + 1} - y_{f_2} - \min\{c - \widetilde{y}_{n_{f_1} + 1}, y_{f_2}\}, 0\} \\
        &\qquad - \max\{\widetilde{y}_{n_{f_v} + 1} - y_{f_2}, 0\} \\
        &= \begin{cases}
            c - \widetilde{y}_{n_{f_1} + 1} + y_{f_2} - \min\{c - \widetilde{y}_{n_{f_1} + 1}, y_{f_2}\} & \mbox{ if $ y_e \neq 0 $} \\
            c - \widetilde{y}_{n_{f_1} + 1} + y_{f_2} & \mbox{ if $ y_e = 0 > \widetilde{y}_{n_{f_v} + 1} - y_{f_2}$} \\
            c - \widetilde{y}_{n_{f_1} + 1} + y_{f_2} - (\widetilde{y}_{n_{f_v} + 1} - y_{f_2}) & \mbox{ if $ y_e = 0 \leq \widetilde{y}_{n_{f_v} + 1} - y_{f_2}$} \\
        \end{cases} \\
        &\geq c - \widetilde{y}_{n_{f_1} + 1} + y_{f_2} - \min\{c - \widetilde{y}_{n_{f_1} + 1}, y_{f_2}\} \\
        &= \max\{c - \widetilde{y}_{n_{f_1} + 1}, y_{f_2}\}
    \end{split}
\end{equation}
where the second equality holds by line~\ref{line: b4 y assigned} and as $x^{old}_{u} = c - \widetilde{y}_{n_{f_1} + 1} + y_{f_2}$ by Claim~\ref{claim 1} and the third equality holds from a case analysis of the two max functions. The inequality holds as if $y_e = 0$ then $\widetilde{y}_{n_{f_v} + 1} - y_{f_2} \leq \min\{c - \widetilde{y}_{n_{f_1} + 1}, y_{f_2}\}$ and the final equality is due to a case analysis of $\min\{c - \widetilde{y}_{n_{f_1} + 1}, y_{f_2}\}$. 
Moreover, by line~\ref{line: b4 v cover change} we have,
\begin{equation} \label{case 3.4 x_v new} 
    \begin{split}
        x^{new}_{v} &= x^{old}_{v} + \max\left\{\widetilde{y}_{n_{f_v} + 1} - y_{f_2}, 0\right\} = \max\left\{c - \widetilde{y}_{n_{f_v} + 1}, c - y_{f_2}\right\}
    \end{split}
\end{equation}
where the second equality holds as $x^{old}_{v} = c - \widetilde{y}_{n_{f_v} + 1}$ inductively by \ref{P5} and by a case analysis of $\max\left\{\widetilde{y}_{n_{f_v} + 1} - y_{f_2}, 0\right\}$.

\textbf{Checking~\ref{P1}.} In this case, the assignment of new values $x^{new}_{u}$, $x^{new}_{v}$ and $y_e$ happens in lines \ref{line: b4 y assigned}, \ref{line: b4 u cover change} and \ref{line: b4 v cover change}, so it is straightforward to check that the property~\ref{P1} holds for $G$. 

\textbf{Checking~\ref{P2}.} Following \eqref{case 3.4 x_u new} and \eqref{case 3.4 x_v new} we have that,
\[x^{new}_{u} + x^{new}_{v} \geq \max\{c - \widetilde{y}_{n_{f_1} + 1}, y_{f_2}\} + \max\left\{c - \widetilde{y}_{n_{f_v} + 1}, c - y_{f_2}\right\} \geq y_{f_2} + c - y_{f_2} = c\]
Furthermore, by \eqref{case 3.4 x_u new} we have that $x^{new}_u \geq \max\{c - \widetilde{y}_{n_{f_1} + 1}, y_{f_2}\}$ so inductively by \ref{P5} we have that $\sum_{a \in \Eends(f_1)} x^{new}_{a} \geq \widetilde{y}_{n_{f_1} + 1} + c - \widetilde{y}_{n_{f_1} + 1} = c$ also, as $x^{new}_u \geq y_{f_2}$ it follows by \eqref{ob: 7} in Observation~\ref{obs edge types} that $\sum_{a \in \Eends(f_2)} x^{new}_{a} \geq c$.
Furthermore, by \eqref{case 3.4 x_v new} we have that $x^{new}_v \geq x^{old}_v$ and hence we have inductively that \ref{P2} holds.

\textbf{Checking~\ref{P3}.} First we have that as $\degree(u) = 3$ that $u$ fails the premise of \ref{P3}. Now, following \eqref{case 3.4 x_v new} we have that,
\begin{align*}
    x^{new}_{v} = \max\left\{c - \widetilde{y}_{n_{f_v} + 1}, c - y_{f_2}\right\} \in [2c-1, \frac{5c-2}{2}]
\end{align*}
where the inclusion holds as if $x^{new}_{v} = c - \widetilde{y}_{n_{f_v} + 1}$ then $\widetilde{y}_{n_{f_v} + 1} \leq y_{f_2} \leq 1 - c$ inductively by \ref{P3} so along with  \eqref{ob: 2} in Observation~\ref{obs fractional matching consistent instances} we have that $x^{new}_{v} \in [2c-1, \frac{c}{2}] \subset [2c-1, \frac{5c-2}{2}]$. Moreover, if $x^{new}_{v} = c - y_{f_2}$ then inductively by \ref{P3} we have that $x^{new}_{v} \in [2c-1, \frac{5c-2}{2}]$.
To show $x^{new}_{v} \geq c - 1 + \sum_{f \in \delta(v)} y_f$ we consider the following case study based on the value of $x^{new}_v$ as in \eqref{case 3.4 x_v new}. If $x^{new}_{v} = c - \widetilde{y}_{n_{f_v} + 1}$ then we have that $\widetilde{y}_{n_{f_v} + 1} \leq y_{f_2}$ hence $y_e = 0$ therefore,
\[c - x^{new}_v + y_{f_v} + y_e \leq \widetilde{y}_{n_{f_v} + 1} + \widetilde{y}_{n_{f_v}} < 1 \,,\]
where the first inequality follows from \eqref{ob: 4} in Observation~\ref{obs edge types} and the second inequality follows from \eqref{ob: 3c/2 bound} in Observation~\ref{obs fractional matching consistent instances}.
If $x^{new}_{v} = c - y_{f_2}$ then,
\begin{align*}
    c - x^{new}_v + y_{f_v} + y_e &= y_{f_2} + y_{f_v} + \max\{\widetilde{y}_{n_{f_v} + 1} - y_{f_2} - \min\{c - \widetilde{y}_{n_{f_1} + 1}, y_{f_2}\}, 0\} \\
    &= \begin{cases}
        y_{f_2} + y_{f_v} & \mbox{ if $ y_e = 0 $} \\
        \widetilde{y}_{n_{f_v} + 1} + y_{f_v} - \min\{c - \widetilde{y}_{n_{f_1} + 1}, y_{f_2}\} & \mbox{ if $ y_e \neq 0 $}
    \end{cases} \\
    &\leq 1
\end{align*}
where the first equality holds by the assignment of $y_e$ in line~\ref{line: b4 y assigned} and the inequality holds as $y_{f_2} \leq 1 - c$ inductively by \ref{P3} and $y_{f_v} \leq c$ by \eqref{ob: 1} in Observation~\ref{obs fractional matching consistent instances} so $y_{f_2} + y_{f_v} \leq 1 - c + c = 1$ and by \eqref{ob: 4} in Observation~\ref{obs edge types} and \eqref{ob: 3c/2 bound} in Observation~\ref{obs fractional matching consistent instances} we have that $\widetilde{y}_{n_{f_v} + 1} + y_{f_v} \leq \widetilde{y}_{n_{f_v} + 1} + \widetilde{y}_{n_{f_v}} < 1$.
So \ref{P3} holds on $G$ as required.

\textbf{Checking~\ref{P4}.} As $\type(\delta(u) \setminus \{e\}) = (1,1,0)$ we have that the premise of \ref{P4} does not hold with respect to $u$. Moreover, by \eqref{case 3.4 x_v new} we have that $x^{new}_{v} \geq x^{old}_v$ as required.

\textbf{Checking~\ref{P5} and \ref{P6}.} By \eqref{case 3.4 x_u new} and \eqref{case 3.4 x_v new} we have that $x^{new}_{u} \geq \max\left\{c - \widetilde{y}_{n_{f_1} + 1}, y_{f_2}\right\}$ and $x^{new}_v \geq x^{old}_v$ and so both \ref{P5} and \ref{P6} holds.

\textbf{Checking~\ref{P7}.} Following the proof of \ref{P3} we have that $x^{new}_v \geq c - 1 + \sum_{f \in \delta(v)} y_f$ and $x^{new}_v \leq \frac{5c-2}{2} < 1$ and hence, $\sum_{f \in \delta(v)} y_f \leq 1$ as required. 
To show $\sum_{f \in \delta(u)} y_f \leq 1$, we consider a case study on the value of $y_e$ as in line~\ref{line: b4 y assigned}. If $y_e = 0$ then,
\[\sum_{f \in \delta(u)} y_f = y_{f_1} + y_{f_2} \leq c + 1 - c = 1 \,,\]
where the inequality holds by \eqref{ob: 1} in Observation~\ref{obs fractional matching consistent instances} and inductively by \ref{P3}. If we have \[y_e = \widetilde{y}_{n_{f_v} + 1} - y_{f_2} - \min\left\{c - \widetilde{y}_{n_{f_1} + 1}, y_{f_2}\right\}\] then we have
\begin{align*}
    \sum_{f \in \delta(u)} y_f &= y_{f_1} + y_{f_2} + \widetilde{y}_{n_{f_v} + 1} - y_{f_2} - \min\left\{c - \widetilde{y}_{n_{f_1} + 1}, y_{f_2}\right\} \\
    &\leq \widetilde{y}_{n_{f_1}} + \widetilde{y}_{n_{f_v} + 1} - \min\left\{c - \widetilde{y}_{n_{f_1} + 1}, y_{f_2}\right\} \\
    &\leq c + \frac{5c-2}{2} - (1 - \frac{3c}{2})
    = 5c - 2 
    < 1
\end{align*}
where the first inequality holds by \eqref{ob: 4} in Observation~\ref{obs edge types}. The second inequality holds as $ \widetilde{y}_{n_{f_1}} \leq c$ and $\widetilde{y}_{n_{f_v} + 1} \leq \frac{5c-2}{2}$ by \eqref{ob: 1} in Observation~\ref{obs fractional matching consistent instances} and $\min\left\{c - \widetilde{y}_{n_{f_1} + 1}, y_{f_2}\right\} \geq 1 - \frac{3c}{2}$ by  \eqref{ob: 2} in Observation~\ref{obs fractional matching consistent instances} and inductively by \ref{P3}. So \ref{P7} holds as required.

So all the properties hold by induction.

\section{Upper Bound for MinIndex for Maximum Degree Three}
First, let us introduce a framework developed by Buchbinder, Segev, and Tkach \cite{Buchbinder}, so-called MinIndex algorithm. This framework produces an integral matching within the general adversarial edge arrivals model. MinIndex is parametrized by a natural number $k$ and $k$ nonnegative numbers $p_1$, \ldots, $p_k$ such that $p_1+\ldots+p_k=1$. The framework functions by maintaining a distribution of matchings where each matching in this distribution is returned with a pre-determined probability. Once an edge arrives, it is greedily added to the first matching for which it is feasible, see Algorithm~\ref{alg: MinIndex}. 

\begin{algorithm}[H]
    \caption{MinIndex($k, p_1, \dots, p_k$)} \label{alg: MinIndex}
    \begin{algorithmic}
        \State Initialize: $M_i \gets \emptyset$ for all $i = 1, \dots, k$
        \State When $e$ arrives:
        \If{$M_i \cup \{e\}$ is not feasible for all $i = 1, \dots, k$}
            \State Reject $e$.
        \Else
            \State $M_i \gets M_i \cup \{e\}$ where $i$ is the minimal index for which $M_i \cup \{e\}$ is feasible.
        \EndIf
        \State \Return $M_i$ with probability $p_i$.
    \end{algorithmic}
\end{algorithm}

\cite{Buchbinder} shows that MinIndex with $k=3$ and $(p_1,p_2,p_3)=(5/9,3/9,1/9)$ achieves the guarantee $5/9$ in the adversarial edge arrival model when the underlying graph is restricted to be a forest. Here, we show that $5/9$ is the best guarantee achievable by MinIndex even on forests with maximum degree three. Note, \cite{Buchbinder} demonstrated that MinIndex cannot achieve a guarantee larger than $5/9$ on forests of maximum degree four, so our results improves this bound both in terms of the guarantee and in terms of the permitted maximum degree.

\begin{theorem} \label{thm: upper bound for MinIndex}
    For no selection of parameters $k$ and $p_1$, $p_2$, \ldots $p_k$, MinIndex achieves a guarantee larger than $5/9$ on forests of maximum degree three.
\end{theorem}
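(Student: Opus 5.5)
We will prove Theorem~\ref{thm: upper bound for MinIndex} with an adversarial family of forests of maximum degree three, built adaptively against the structure of MinIndex and not against the numerical values $p_i$. MinIndex is deterministic apart from the final choice of index, so for each edge the adversary knows in advance which matching $M_i$ will receive it. Hence the adversary can steer the edges into a prescribed pattern of indices. For a fixed parameter vector $(p_1,\dots,p_k)$, the expected size of the output equals $\sum_i p_i |M_i|$. At every timepoint $t$ we obtain the inequality $\sum_i p_i |M_i^{(t)}| \ge \gamma\, \mu_t$. We then show that these inequalities, taken over a few stopping times, force $\gamma\le 5/9$ for every probability vector $p$. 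This is a linear program in the variables $p_1,p_2,p_3,\dots$ and $\gamma$, exactly in the style of the LPs of Sections~\ref{sec: int alg} and~\ref{sec: deg 4 upper bound}.

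The construction proceeds by layers. First, an edge $e_1$ arrives and goes into $M_1$. Next, two edges are attached to its endpoints, one at each end, creating a path of length three whose middle edge lies in $M_1$. Both new edges go into $M_2$, and the maximum matching becomes $2$. Next, pendant edges are attached at the degree-two vertices. Their ends are blocked in $M_1$ and $M_2$, so they are forced into $M_3$. The adversary places them so that the degree stays at most three. Then it extends the leaves so that the maximum matching grows while the new edges land in $M_1$ again or in high indices. In this way we reproduce, inside maximum degree three, the effect of the degree-four gadget of~\cite{Buchbinder}: the adversary creates vertices that are saturated by $M_1$ and $M_2$, and then attaches edges whose optimal use requires $M_3$.

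We then set the stopping times and obtain the constraints. The first stopping time is after $e_1$ alone, giving $p_1\ge\gamma$. The second is after the first layer, with $\mu=2$, giving $p_1+2p_2\ge 2\gamma$. The third is after the pendant layer, with $\mu=3$, giving $p_1+2p_2+2p_3\ge 3\gamma$ or a similar constraint. Together with $\sum_i p_i\le 1$, and using disjoint copies of the gadget to amplify the counts, we aim for an LP whose optimum is $5/9$. The optimum should be attained at $(5/9,3/9,1/9)$, consistent with the matching lower bound. We verify this by exhibiting dual multipliers that certify $\gamma\le 5/9$. Only $p_1,p_2,p_3$ appear in the final constraints, because every $M_i$ with $i\ge4$ contributes at most what it would if its mass were moved down to index $3$.

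The main obstacle is the gadget itself. Every edge must land in the intended matching, and no vertex may exceed degree three. The degree-four construction of~\cite{Buchbinder} uses a vertex of degree four to block three matchings at once. With degree three, we instead expect to block using two different vertices whose $M_1$ and $M_2$ edges are arranged crosswise along a path. We would first try paths $a\!-\!b\!-\!c\!-\!d$ whose middle edge lies in $M_1$ and whose outer edges lie in $M_2$, with further edges attached at $a$ and $d$. We would then tune the copies by hand, or with a small LP search, until the dual certificate yields exactly $5/9$.
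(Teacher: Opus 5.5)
Your proposal has a genuine gap: it never produces an instance that forces $\gamma\le 5/9$. The only constraints you actually derive are $p_1\ge\gamma$ and $p_1+2p_2\ge 2\gamma$. These coincide with the paper's trivial constraints from consistent instances with one and two rounds, and on their own they still allow $\gamma=2/3$.

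Your third constraint is not achievable. After the path $a\!-\!b\!-\!c\!-\!d$ with $bc\in M_1$ and $ab,cd\in M_2$, an edge is forced into $M_3$ only if one of its endpoints is $b$ or $c$, the only vertices covered by $M_1$. So your pendant layer consists of leaves at $b$ and $c$. Then each of $b$ and $c$ carries two leaves, and the maximum matching stays $2$, not $3$.

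In fact no forest can give $p_1+2p_2+2p_3\ge 3\gamma$. Together with $p_1\ge\gamma$ and $\sum_i p_i\le 1$ it yields $3\gamma\le 2-p_1\le 2-\gamma$, i.e. $\gamma\le 1/2$. That contradicts the result of Buchbinder et al. that MinIndex with $(5/9,3/9,1/9)$ is $5/9$-competitive on all forests.

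Disjoint copies of a gadget also change no ratio. Your last paragraph concedes that the gadget is still to be found, but that gadget is the entire content of the theorem.

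What is missing are two concrete families.

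\emph{Second-family constraint.} Take the consistent instance with a reordered arrival: first $e_1$ and the odd-indexed path edges, then the even-indexed path edges, then all spokes. MinIndex puts the three batches into $M_1$, $M_2$, $M_3$, and letting $n\to\infty$ gives $\tfrac12p_1+\tfrac12p_2+p_3\ge\gamma$. With the two trivial constraints this only yields $4/7$, attained at $(4/7,2/7,1/7)$.

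\emph{Decisive family.} Take a long path $e_1,\dots,e_{3n+3}$ whose edges arrive in three batches by index modulo $3$. Then $M_1$, $M_2$, $M_3$ cover the path periodically, and every internal vertex $u$ is covered by exactly two of them. If $u$ misses $M_3$, a plain pendant $uw^u$ goes into $M_3$. If $u$ misses $M_1$ or $M_2$, a small tree hanging from a fresh vertex $w^u$ is built first, so that $w^u$ is already covered by the missing matching. Only then does $uw^u$ arrive, and it is forced into $M_4$. Each $u$ receives only one extra edge, so the maximum degree stays three. The edges $uw^u$ extend to a perfect matching of size $6n+2$, while $|M_1|,|M_2|,|M_3|,|M_4|=3n+1,\,4n+2,\,3n,\,2n$. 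This gives $\gamma\le\tfrac12p_1+\tfrac23p_2+\tfrac12p_3+\tfrac13p_4$.

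\emph{Combining.} Put multipliers $1/9$, $2/9$, $6/9$ on $p_1\ge\gamma$, on the second-family constraint, and on this one. This yields $\gamma\le\tfrac59(p_1+p_2+p_3)+\tfrac29p_4\le\tfrac59$. Note that the construction genuinely uses a fourth matching. Your claim that mass at indices $i\ge4$ can be moved to index $3$ is therefore not a general property of MinIndex. It is harmless here only because, in each constraint, the coefficient of $p_4$ does not exceed that of $p_3$, and $M_i=\emptyset$ for $i\ge5$ in all instances.
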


In the remaining part of this appendix, we prove Theorem~\ref{thm: upper bound for MinIndex}. For this, we consider two families of instances. The first family is constructed below. The second family is constructed based on consistent instances from Section~\ref{sec: consistent instances} but with a modified edge arrival order. The instances in both families are parametrized by a parameter $n$.

    Let $k$ be a natural number and $p_1$, \dots, $p_k$ be nonnegative numbers such that $p_1+p_2+\ldots +p_k=1$. Let $M_1$, \dots, $M_k$ be the matchings computed by MinIndex, i.e. by Algorithm~\ref{alg: MinIndex}. Let~$M$ be a random variable indicating the matching output by Algorithm~\ref{alg: MinIndex}. We denote by $\gamma$ the guarantee achieved by MinIndex with the parameters $k$ and $p_1$, \dots, $p_k$.

    \subsection*{First Family}

    Let $n$ be a natural number. Let us describe the edges that are going to arrive at the beginning. The first edges to arrive form a path~$P$, consisting of the edges $e_1$, $e_2$,\ldots, $e_{3n+3}$. Here, the edges $e_j$ and $e_{j+1}$ are incident for every $j=1,\ldots,3n+2$.
    The first three batches to arrive are as follows:
    \begin{itemize}
        \item $B_1:=\{e_i \mid i \equiv 2 \mod 3\}=\{e_2, e_5, e_8,\ldots, e_{3n+2}\}$
        \item $B_2:=\{e_i \mid i \equiv 1 \mod 3\} \,\cup \,\{e_{3n+3)}\} =\{e_1, e_4, e_7,\ldots, e_{3n+1}\}\,\cup\, \{e_{3n+3}\}$
        \item $B_3:=\{e_i \mid i \equiv 0 \mod 3,\, i \leq 3n\} = \{e_3, e_6, e_9, \dots, e_{3n}\}$.
    \end{itemize}

    Let us describe the batches $B_4$, $B_5$, $B_6$, $B_7$. Let us first describe the structure of the edges in these batches. For this, we iterate over the vertices $u$ on the path $P$ which are not incident to $e_1$ nor to $e_{2n+3}$. For each such vertex $u$ we construct the following edges:
    \begin{itemize}
        \item if $\delta(u)$ has no edges in $B_3$ then we construct an edge $uw^u$ and place it in $B_6$
        \item if $\delta(u)$ has no edges in $B_1$ then we construct edges $uw^u$, $w^uv^u$, $v^ut^u$ and place them in $B_7$, $B_4$ and $B_5$, respectively
        \item if $\delta(u)$ has no edges in $B_2$ then we construct edges $uw^u$, $w^uv^u$, $v^ut^u$, $t^ur^u$, $v^uq^u$ and place them in $B_7$, $B_5$, $B_4$, $B_5$, $B_6$, respectively.
    \end{itemize}
An example of the underlying graph for the case $n=2$ is shown in Figure~\ref{graph: gadget}. 

It is straightforward to verify that with this arrival order, we have $M_1=B_1\cup B_4$, $M_2=B_2\cup B_5$, $M_3=B_1\cup B_6$, and $M_4=B_7$ as in Figure~\ref{graph: gadget}. It is also straightforward to verify that the constructed graph always has a perfect matching, showing thus that the cardinality of a maximum matching equals $6n+2$.

    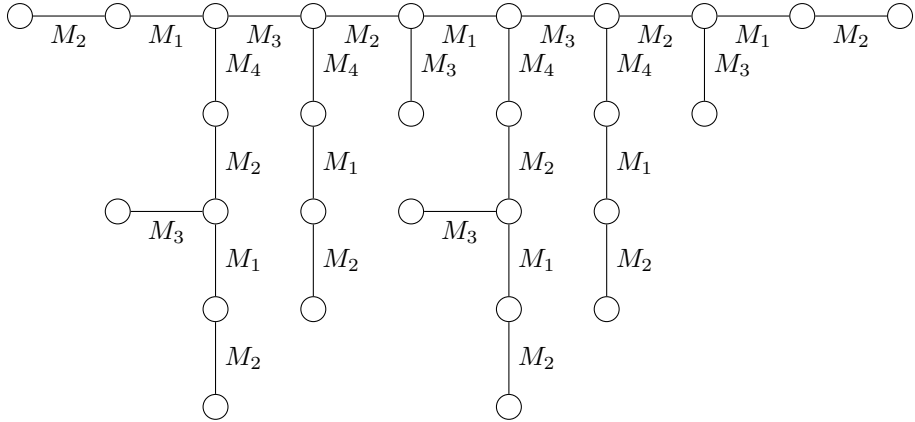
\begin{figure}[H]
        \centering
        \begin{tikzpicture}[node distance={13mm}, main/.style = {draw, circle}] 
            \node[main] (v1) {}; 
            \node[main] (v2) [right of=v1] {};
            \node[main] (v3) [right of=v2] {};
            \node[main] (v4) [right of=v3] {}; 
            \node[main] (v5) [right of=v4] {}; 
            \node[main] (v6) [right of=v5] {}; 
            \node[main] (v7) [right of=v6] {}; 
            \node[main] (v8) [below of=v3] {}; 
            \node[main] (v9) [below of=v8] {}; 
            \node[main] (v10) [below of=v9] {}; 
            \node[main] (v11) [below of=v10] {}; 
            \node[main] (v12) [left of=v9] {}; 
    
            \node[main] (v13) [below of=v4] {}; 
            \node[main] (v14) [below of=v13] {}; 
            \node[main] (v15) [below of=v14] {}; 
    
            \node[main] (v16) [below of=v5] {}; 

            \node[main] (v17) [right of=v7] {}; 
            \node[main] (v18) [right of=v17] {}; 
            \node[main] (v19) [right of=v18] {}; 

            \node[main] (v8') [below of=v6] {}; 
            \node[main] (v9') [below of=v8'] {}; 
            \node[main] (v10') [below of=v9'] {}; 
            \node[main] (v11') [below of=v10'] {}; 
            \node[main] (v12') [left of=v9'] {}; 
    
            \node[main] (v13') [below of=v7] {}; 
            \node[main] (v14') [below of=v13'] {}; 
            \node[main] (v15') [below of=v14'] {}; 
    
            \node[main] (v16') [below of=v17] {};

            \draw (v1) -- node[midway, below, pos=0.5] {$M_2$} (v2);
            \draw (v2) -- node[midway, below, pos=0.5] {$M_1$} (v3);
            \draw (v3) -- node[midway, below, pos=0.5] {$M_3$} (v4);
            \draw (v4) -- node[midway, below, pos=0.5] {$M_2$} (v5);
            \draw (v5) -- node[midway, below, pos=0.5] {$M_1$} (v6);
            \draw (v6) -- node[midway, below, pos=0.5] {$M_3$} (v7);
    
            \draw (v3) -- node[midway, right, pos=0.5] {$M_4$} (v8);
            \draw (v8) -- node[midway, right, pos=0.5] {$M_2$} (v9);
            \draw (v9) -- node[midway, right, pos=0.5] {$M_1$} (v10);
            \draw (v10) -- node[midway, right, pos=0.5] {$M_2$} (v11);
            \draw (v9) -- node[midway, below, pos=0.5] {$M_3$} (v12);
    
            \draw (v4) -- node[midway, right, pos=0.5] {$M_4$} (v13);
            \draw (v13) -- node[midway, right, pos=0.5] {$M_1$} (v14);
            \draw (v14) -- node[midway, right, pos=0.5] {$M_2$} (v15);
    
            \draw (v5) -- node[midway, right, pos=0.5] {$M_3$} (v16);

            \draw (v7) -- node[midway, below, pos=0.5] {$M_2$} (v17);
            \draw (v17) -- node[midway, below, pos=0.5] {$M_1$} (v18);
            \draw (v18) -- node[midway, below, pos=0.5] {$M_2$} (v19);

            \draw (v6) -- node[midway, right, pos=0.5] {$M_4$} (v8');
            \draw (v8') -- node[midway, right, pos=0.5] {$M_2$} (v9');
            \draw (v9') -- node[midway, right, pos=0.5] {$M_1$} (v10');
            \draw (v10') -- node[midway, right, pos=0.5] {$M_2$} (v11');
            \draw (v9') -- node[midway, below, pos=0.5] {$M_3$} (v12');
    
            \draw (v7) -- node[midway, right, pos=0.5] {$M_4$} (v13');
            \draw (v13') -- node[midway, right, pos=0.5] {$M_1$} (v14');
            \draw (v14') -- node[midway, right, pos=0.5] {$M_2$} (v15');
    
            \draw (v17) -- node[midway, right, pos=0.5] {$M_3$} (v16');
            
        \end{tikzpicture}
        \caption{The instance from the first family for $n = 2$.}
        \label{graph: gadget}
    \end{figure}

    Moreover, the expected cardinality of the matching produced by Algorithm~\ref{alg: MinIndex} is as follows,
    \[\mathbb{E}[|M|] = \sum_{i = 1}^k |M_i|p_i = (3n + 1)p_1 + (4n + 2)p_2 + 3np_3 + 2np_4\]
    So we get the following constraint on the competitiveness $c$ of MinIndex~\ref{alg: MinIndex},
    \begin{equation*}
        \gamma \leq \frac{\mathbb{E}[|M|]}{6n+2} = \frac{3n+1}{6n+2}p_1 + \frac{4n+2}{6n+2}p_2 + \frac{3n}{6n+2}p_3 + \frac{2n}{6n+2}p_4\,,
    \end{equation*}
     taking the limit as $n \rightarrow \infty$ we get,

    \begin{equation} \label{c constraint 1}
        \gamma \leq \frac{1}{2}p_1 + \frac{2}{3}p_2 + \frac{1}{2}p_3 + \frac{1}{3}p_4\,.
    \end{equation}

\subsection*{Second Family}
    We now consider the second family of instances defined on the same graphs as the consistent instances from Section~\ref{sec: consistent instances} but with a different arrival order. Let $n$ be an even natural number. We have the following batches $B_1$, $B_2$ and $B_3$
    \begin{itemize}
        \item $B_1:=\{e^l_i, e^r_i \mid i \equiv 1 \mod 2,\,\}\cup\{e_1\}$
        \item $B_2:=\{e^l_i, e^r_i \mid i \equiv 0 \mod 2\}$
        \item $B_3:=\{\hat{e}^l_i, \hat{e}^r_i \mid i = 1, \dots, n-2\}$.
    \end{itemize}
    So by Algorithm~\ref{alg: MinIndex} we have $M_1=B_1$, $M_2=B_2$ and $M_3=B_3$. The underlying graph again has a perfect matching, so the cardinality of a maximum matching is $2n-2$. Thus, we have the following constraint on the guarantee $\gamma$ achieved by MinIndex,
    \begin{equation*}
        \frac{\mathbb{E}[|M|]}{2n} = \frac{n-1}{2(n-1)}p_1 + \frac{n}{2(n-1)}p_2 + \frac{2(n-2)}{2(n-1)}p_3 \geq \gamma
    \end{equation*}
     and therefore taking the limit as $n \rightarrow \infty$, we get the following constraint
    \begin{equation} \label{c constraint 2}
        \frac{1}{2}p_1 + \frac{1}{2}p_2 + p_3 \geq \gamma\,.
    \end{equation}

    Finally, the trivial constraints based on consistent instances with $n=1$ and $n=2$ give the following constraints
    \begin{equation} \label{c constraint 3}
        p_1 \geq \gamma \ \ \ \ \ \text{and} \ \ \ \ \ \frac{1}{2}p_1 + p_2 \geq \gamma.
    \end{equation}

\subsection*{Linear Program}
    So combining constraints~\eqref{c constraint 1}, \eqref{c constraint 2}, and \eqref{c constraint 3} along with probability constraints gives the following Linear Program bounding $\gamma$
    \begin{maxi*}
    {}{\gamma}{}{}
    \addConstraint{p_1 \geq \gamma}
    \addConstraint{\frac{1}{2}p_1 + p_2 \geq \gamma}
    \addConstraint{\frac{1}{2}p_1 + \frac{1}{2}p_2 + p_3 \geq \gamma}
    \addConstraint{\frac{1}{2}p_1 + \frac{2}{3}p_2 + \frac{1}{2}p_3 + \frac{1}{3}p_4 \geq \gamma}
    \addConstraint{p_1 + p_2 + p_3 + p_4 \leq 1}
    \addConstraint{p_1,\, p_2,\, p_3,\, p_4 \geq 0}\,.
    \end{maxi*}
 The above Linear Program achieves the optimal values $ 5/9$, where the optimal solution sets the parameters $p_1 = 5/9$, $p_2 = 3/9$, $p_3 = 1/9$, and $p_4 = 0$. Note that these parameters are exactly the parameters for which MinIndex from~\cite{Buchbinder} achieves the guarantee $5/9$ on all forests.

\end{appendix}

\end{document}